\makeatletter \@addtoreset{equation}{section} \makeatother
\newtheorem{theorem}{Theorem}[section]
\newtheorem{lemma}{Lemma}[section]
\newtheorem{proposition}{Proposition}[section]
\newcommand{\E}{\mathbf{E}}
\newcommand{\mdet}{\mathrm{det}}
\newcommand{\Tr}{\mathrm{Tr}\,}
\begin{document}

\title{Characteristic polynomials of non-Hermitian random band matrices}
\author{ Mariya Shcherbina
\thanks{Institute of Science and Technology Austria and Institute for Low Temperature Physics, Kharkiv, Ukraine, e-mail: shcherbi@ilt.kharkov.ua. The
researches were supported by the ERC Advanced Grant "RMTBeyond" No. 101020331} \and
 Tatyana Shcherbina
\thanks{ Department of Mathematics, University of Wisconsin - Madison, USA, e-mail: tshcherbyna@wisc.edu. This material is based upon work supported  in part by Alfred P. Sloan Foundation grant FG-2022-18916 and the National Science Foundation  grant DMS-2346379}
}
\date{}
\maketitle

\begin{abstract}
We consider the asymptotic local behavior of the second correlation functions of the characteristic polynomials of a certain class of
 Gaussian $N\times N$ non-Hermitian random band matrices with a bandwidth $W$. Given $W,N\to\infty$, we show that this behavior near the point in the bulk of the spectrum
 exhibits the crossover at $W\sim \sqrt{N}$:  it coincides with those for Ginibre ensemble for $W\gg \sqrt{N}$, and
factorized as $1\ll W\ll \sqrt{N}$.  The result is the first step toward the proof of Anderson's type transition for non-Hermitian random band matrices.
\end{abstract}

\section{Introduction}\label{s:1}
We consider non-Hermitian random band matrices (RBM), i.e $N\times N$ matrices $H_N$
  whose entries $H_{ij}$ are independent random
complex variables with mean zero and variance determined by the so-called 
{\it band profile} $J$. This means
\begin{equation}\label{ban}
\mathbf{E}\big\{ H_{jk}\bar H_{jk}\big\}=J_{jk}
\end{equation}
with $J_{jk}$ taken to be small when $|j-k|\gg W$. The parameter $W$ is called the {\it bandwidth} of $H_N$.

In this paper we assume that $\{H_{ij}\}$ have Gaussian distribution and take 
\begin{equation}\label{J}
J=\left(-W^2\Delta+1\right)^{-1}
\end{equation}
with $\Delta$ being the discrete Laplacian on $[1,N]\cap \mathbb{Z}$ with Neumann boundary conditions:
\begin{equation*}
(-\Delta f)_j=\left\{\begin{array}{ll}
f_1-f_2,& j=1;\\
2f_j -f_{j-1}-f_{j+1},& j=2,\ldots, N-1;\\
f_n-f_{n-1},& j=N.
 \end{array}
 \right.
\end{equation*}
It is easy to see that  $J_{jk}\approx C_1W^{-1}\exp\{-C_2|j-k|/W\}$,  so it is exponentially small when $|j-k|\gg W$, as $W\to \infty$. Thus
matrices $H_N$ indeed can be considered as a special case of non-Hermitian random band matrices with the bandwidth $W$.  

It is easy to see that the probability law of  $H_N$ can be written in the form
\begin{equation}\label{band}
P_N(d H_N)=\prod\limits_{j,k=1}^N\dfrac{dH_{jk}d\overline{H}_{jk}}{\pi J_{jk}}e^{-\frac{|H_{jk}|^2}{J_{jk}}}.
\end{equation}

The Hermitian analog of matrices (\ref{ban}) plays an important role in mathematical physics.  Having nonzero entries only  in the strip of width $W$ around the main diagonal, Hermitian RBM  provide a natural model to study eigenvalue statistics and quantum transport in disordered systems as they interpolate between classical Wigner matrices, i.e. Hermitian random matrices with iid elements, and random Schr$\ddot{\hbox{o}}$dinger operators, where the randomness only appears in the diagonal potential. 
In particular,  Hermitian RBM can be used as a prototype of the celebrated  Anderson metal-insulator phase transition even in dimension one: for $W\gg \sqrt{N}$ the eigenvectors are delocalized and the eigenvalues have universal GUE local statistics, while the localized eigenvectors and Poisson statistics occurs for $W\ll \sqrt{N}$ (see \cite{FM:91}).
The recent  mathematical results justifying this conjecture for the Hermitian RBM in the dimension one and higher can be found in \cite{SS:Un}, \cite{CPSS:band_loc},   \cite{YY:2d_band}, \cite{YY:1d_band},  \cite{YY:3d_band}, \cite{Dr:band}, \cite{ErR:band} and references therein.

Despite the recent progress in studying universality of the local eigenvalue statistics for non-Hermitian matrices with iid entries  (see \cite{Ta-Vu:15}, \cite{CiErS:ed},\cite{MO:24}, \cite{O:24}, \cite{DY:24}, \cite{ASS:24} and references therein), the eigenvalue statistics of non-Hermitian matrices with a non-trivial spatial structure  
is much less accessible. In particular, for the non-Hermitian RBM (\ref{ban}) even justification of the expected convergence of the empirical spectral distribution  to the circular law, i.e. to the uniform distribution on a unit disk appearing as a limiting distribution for the non-Hermitian matrices with iid entries (see \cite{TV:08}, \cite{TVKr:10} and references therein), is  a highly non-trivial task. The best recent result \cite{H:25} shows this (weak) convergence only for non-Hermitian RBM with $W\gg N^{1/2+c}$ (see also \cite{H:24}, \cite{JJLO:21}, \cite{Tikh:23} and references therein for previous results).

In this paper we are going to study another spectral characteristic of the non-Hermitian RBM (\ref{ban}) -- (\ref{band}), namely,
the correlation functions of characteristic polynomials defined as
\begin{equation}\label{Theta_k}
\Theta_k (z_1,\ldots, z_k)=\mathbf{E}\Big\{\prod\limits_{s=1}^{k}\det(X_n-z_s)\det(X_n-z_s)^*\Big\},
\end{equation}
where expectation is taken with respect to (\ref{band}).

More precisely, we are interested in the asymptotic behavior of $\Theta_2$ for matrices (\ref{ban}) -- (\ref{band}), as $W, N\to\infty$,
and
\begin{align}\label{z_1,2}
z_1=z+\zeta/N^{1/2},\, z_2=z-\zeta/N^{1/2}, \quad |z|<1,
\end{align}
with $\zeta$ varying in a compact set in $\mathbb{C}$. 
To simplify the notations, we are going to drop the index $2$ in $\Theta_2$ below.

The interest to the characteristic polynomials of random matrices is stimulated by its connections to the
number theory, quantum chaos, integrable systems, combinatorics, representation
theory and others. In additional, although $\Theta_k$ is not a local object in terms of eigenvalue statistics, it is also expected 
to be universal in a certain sense. In particular, it was proved in \cite{Af:19} (see also \cite{Ak-Ve:03} for  the Gaussian (Ginibre) case)
that for non-Hermitian random matrices $H$ with iid complex entries with mean zero, variance one, and $2k$ finite moments
for any $z_j=z+\zeta_j/\sqrt N$, $j=1,..,k$ and $|z|<1$ we get
\begin{equation}\label{ChP_lim}
\lim\limits_{N\to\infty} N^{-\tfrac{k^2-k}{2}}\dfrac{\Theta_{k}(z_1,\ldots, z_k)}{\prod_j \Theta^{1/2}(z_j,z_j)}=C_k \dfrac{\det(K(\zeta_i,\zeta_j))_{i,j}^k}{|\Delta(\zeta)|^2}.
\end{equation}
Here 
\begin{equation}\label{K_b}
K(w_1,w_2)=e^{-|w_1|^2/2-|w_2|^2/2+w_1\bar w_2},
\end{equation}
$\Delta(\zeta)$ is  a Vandermonde determinant of $\zeta_1,\ldots,\zeta_k$, and
$C_k$ is constant depending only on the fourth cumulant $\kappa_4=\E[|H_{11}|^4]-2$ of the elements distribution, but not on the higher moments.
In particular, this means that the local limiting behavior  (\ref{ChP_lim}) for non-Hermitian matrices with iid entries coincides with those for the Ginibre ensemble as soon as the first four moments of elements distribution are Gaussian, i.e. the local behavior of the correlation functions of characteristic polynomials also exhibits a certain form of universality. Similar results were obtained for many classical Hermitian random matrix ensembles
(see, e.g., \cite{Br-Hi:00}, \cite{Br-Hi:01}, \cite{St-Fy:03},\cite{TSh:ChW}, \cite{TSh:ChSC},\cite{Af:16}, etc.)

Notice that for the Hermitian (or real symmetric) analog of RBM the local behaviour of the correlation function of characteristic polynomials exhibits 
the crossover at $W\sim \sqrt{N}$ similar to the crossover in the local eigenvalue statistics: it coincides with those for GUE/GOE ensemble for $W\gg \sqrt{N}$, and
factorized (which means that the limit in the r.h.s. of (\ref{ChP_lim}) is equal to 1) as $1\ll W\ll \sqrt{N}$ (see  \cite{TSh:14}, \cite{SS:17}, \cite{TS:20}, \cite{TS:22}). 
The goal of the current paper is to establish a similar result for non-Hermitian RBM (\ref{ban}) -- (\ref{band}). 
The method we use is based on the  SUSY transfer matrix approach  developed in \cite{SS:17} for the Hermitian case.

The main results are the following two theorems corresponding to delocalized and localized regimes of RBM respectively:
\begin{theorem}\label{t:Gin} Given the band matrix of the form (\ref{band}) with
 $W^2\gg N\log^2 N$, $W\le N^{1-\varepsilon_0}$ with some fixed $\varepsilon_0>0$, and $z_1, z_2$ of (\ref{z_1,2}), we have
\begin{align}\label{lim_Gin}
\lim_{N\to\infty,\frac{W^2}{N\log^2 N}\to \infty}\frac{\Theta(z_1,z_2)}{\Theta^{1/2}(z_1,z_1)\Theta^{1/2}(z_2,z_2)}=\frac{1-e^{-4|\zeta|^2}}{4|\zeta|^2},
\end{align}
which coincides with the limit (\ref{ChP_lim}) (i.e. with Ginibre case).
\end{theorem}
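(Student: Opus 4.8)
\medskip

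\noindent\textbf{Plan of proof.} The plan is to evaluate $\Theta(z_1,z_2)$ via the supersymmetric (SUSY) transfer-matrix representation of \cite{SS:17}, adapted to the non-Hermitian setting, and then to read off the $W^2\gg N$ asymptotics from a saddle-point analysis. First I would write each of the four determinants in (\ref{Theta_k}) (with $k=2$) as a Gaussian integral over anticommuting variables; this gives $2k=4$ Grassmann species at every site $j=1,\dots,N$, the two factors $\det(X_n-z_s)^*$ supplying the conjugate species. Averaging over $H$ against (\ref{band}) produces a term quartic in the Grassmanns with coefficients $J_{jk}$, and, since $J^{-1}=-W^2\Delta+1$ is tridiagonal, a Hubbard--Stratonovich transformation decouples it by introducing at every site a complex $2\times2$ matrix field $a_j$ (its indices labelling the two points $z_1,z_2$) with Gaussian weight of covariance $J$, i.e.\ with the nearest-neighbour exponent $-\tfrac{W^2}{2}\sum_j\|a_{j+1}-a_j\|^2-\tfrac12\sum_j\|a_j\|^2$ plus Neumann boundary terms. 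Carrying out the now-Gaussian Grassmann integration yields
\[
\Theta(z_1,z_2)=C_N\int\prod_{j=1}^N da_j\,\exp\Big\{-\tfrac{W^2}{2}\sum_j\|a_{j+1}-a_j\|^2-\sum_j V(a_j;z_1,z_2)\Big\},
\]
with $V(a;z_1,z_2)=\tfrac12\|a\|^2-\log\det F(a;z_1,z_2)$ for an explicit matrix $F$ built from $a$ and $\mathrm{diag}(z_1,z_2)$; equivalently $\Theta=C_N\langle g,\mathcal{L}^{N-1}g\rangle$ with transfer kernel $\mathcal{L}(a,a')=\exp\{-\tfrac{W^2}{2}\|a-a'\|^2-\tfrac12V(a)-\tfrac12V(a')\}$.

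Next comes the saddle-point reduction. The hopping term of strength $W^2\to\infty$ pins the fields near a common value, so the leading behaviour is governed by the critical points of $V(\cdot;z,z)$, i.e.\ at $\zeta=0$. For $|z|<1$ the equation $\partial_aV=0$ has, besides isolated points, a whole \emph{critical manifold} $\mathcal{M}$ generated by the internal symmetry of the non-Hermitian model (the analogue of the saddle manifold of \cite{SS:17}); on $\mathcal{M}$ the value $V(\cdot;z,z)=V_*$ is constant and the transverse Hessian is nondegenerate. The prefactor $C_Ne^{-NV_*}$ together with the transverse Gaussian fluctuation determinant is precisely what the normalization $\Theta^{1/2}(z_1,z_1)\Theta^{1/2}(z_2,z_2)$ removes (since $z_{1,2}\to z$, one has $\Theta(z_s,z_s)=\Theta(z,z)(1+o(1))$).

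I would then show the integral concentrates on an $\varepsilon$-neighbourhood of $\{a_j\equiv m:m\in\mathcal{M}\}$: outside it the rigid hopping cost together with the positivity of the transverse Hessian makes the integrand exponentially smaller, once the non-compact $a$-contours are deformed through the saddle. Inside, the transverse modes are integrated out (cancelling the matching factor in the normalization) and one is left with the coordinate $m_j\in\mathcal{M}$ carrying the effective action $\tfrac{W^2}{2}\sum_j\|m_{j+1}-m_j\|_{\mathcal M}^2+\sum_j\big[V(m_j;z_1,z_2)-V(m_j;z,z)\big]$. By the internal symmetry, $V(\cdot;z_1,z_2)-V(\cdot;z,z)$ depends on $z_1,z_2$ only through $|z_1-z_2|^2=4|\zeta|^2/N$, so summing over the $N$ sites yields an $O(|\zeta|^2)$ exponent with no term growing in $N$; and since $W^2\gg N$ the surviving zero mode has diffusive spread $\sqrt N/W\to0$, hence is frozen to a constant $m$ along the chain. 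This gives
\[
\frac{\Theta(z_1,z_2)}{\Theta^{1/2}(z_1,z_1)\Theta^{1/2}(z_2,z_2)}=\frac{\int_{\mathcal{M}}e^{-4c(m)|\zeta|^2}\,d\nu(m)}{\int_{\mathcal{M}}d\nu(m)}\,(1+o(1)),
\]
with $d\nu$ the measure induced on $\mathcal{M}$ and $c(m)\in[0,1]$. Parametrizing $\mathcal{M}$ so that $c(m)=t$ runs over $[0,1]$ and $d\nu$ becomes $dt$ (computing this induced Jacobian is the only genuinely model-specific step), the ratio equals $\int_0^1e^{-4t|\zeta|^2}\,dt=\frac{1-e^{-4|\zeta|^2}}{4|\zeta|^2}$, which is (\ref{lim_Gin}) and coincides with the Ginibre limit of (\ref{ChP_lim}) for $k=2$.

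The hard part will be the concentration step: rigorously showing that this $(N\cdot\dim a)$-dimensional integral localizes on $\mathcal{M}$ uniformly over the non-compact field variables and over all $N$ sites. This requires (i) a valid deformation of the $a$-contours turning the transverse directions into genuine Gaussians with an $O(1)$ spectral gap, and (ii) a quantitative estimate showing that $\mathcal{L}$, restricted away from the $\mathcal{M}$-mode, contracts fast enough that $\mathcal{L}^{N}$ equals, up to $o(1)$, the projector onto that mode. Step (ii) is exactly where $W^2\gg N\log^2N$ enters: the accumulated loss from large-deviation control of $\max_j\|a_j-m\|$ over a chain of $N$ sites must be dominated by the per-bond hopping gain $\sim W^2$, which forces the extra $\log^2N$; the upper bound $W\le N^{1-\varepsilon_0}$ is used to keep the Neumann-boundary corrections and the approximation errors for $J$ negligible.
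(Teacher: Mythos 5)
Your plan follows essentially the same route as the paper: the SUSY dualization to a transfer operator acting on $2\times 2$ complex matrices, concentration on the critical manifold $Q=u_*U$ with $U\in U(2)$, freezing of the $U(2)$ zero mode when $W^2\gg N$, and the final ratio arising as $\int_0^1 e^{-4t|\zeta|^2}\,dt$ because the relevant coordinate $\sin^2(\theta/2)$ on the manifold is uniformly distributed under Haar measure --- all of which matches Sections 2--5 of the paper. Be aware, however, that everything you defer (the uniform spectral gap transverse to the manifold in the non-compact $R$-directions, the decomposition of the unitary part into irreducible representations of $SU(2)$ with eigenvalues $\lambda_\ell=1-\ell(\ell+1)/8(u_*W)^2$, and the perturbation theory in $\epsilon=(W/N)^{1/2}$ producing the factor $e^{2\nu(U)}$ with $\nu(U)=|\zeta|^2\Tr LU^*LU/2$) is precisely where the paper's work lies, so your text is a correct roadmap rather than a proof.
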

\begin{theorem}\label{t:locGin} Given the band matrix of the form (\ref{band}) with $W>N^{\varepsilon_0}$  with any fixed $\varepsilon_0>0$ and $W^2\ll N/\log N$, and $z_{1}, z_2$ of  (\ref{z_1,2}), we have
\begin{align}\label{lim_locGin}
\lim_{N\to\infty,\frac{W^2\log N}{N}\to 0}\frac{\Theta(z_1,z_2)}{\Theta(z,z)}=1.
\end{align}
\end{theorem}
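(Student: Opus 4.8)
The plan is to follow the supersymmetric transfer-matrix scheme of \cite{SS:17}, so the first step is an exact integral representation of $\Theta(z_1,z_2)$. Representing each factor $\det(X_n-z_s)\det(X_n-z_s)^{*}=|\det(X_n-z_s)|^{2}$ by a Gaussian Grassmann integral, averaging over $H_N$, and decoupling the resulting quartic form by a Hubbard--Stratonovich transformation, I would obtain a formula
\[
\Theta(z_1,z_2)=c_N\int\Big(\prod_{j=1}^{N}dA_j\Big)\exp\Big\{-W^2\sum_{j=1}^{N-1}\Tr(A_j-A_{j+1})(A_j-A_{j+1})^{*}-\sum_{j=1}^{N}\Tr A_jA_j^{*}\Big\}\prod_{j=1}^{N}\mathcal F(A_j;z_1,z_2),
\]
where the $A_j$ are complex matrices of a fixed size independent of $N,W$ (the size encoding the two copies $z_1,z_2$) and $\mathcal F$ is an explicit determinantal weight. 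The band profile $J^{-1}=-W^2\Delta+1$ enters precisely as the nearest-neighbour coupling of the $A_j$, which makes this a one-dimensional transfer-matrix chain, $\Theta(z_1,z_2)=(\mathbf b,\mathcal T_{12}^{\,N-1}\mathbf b)$, with $\mathcal T_{12}$ an integral operator on functions of a single matrix variable (a Gaussian-type kernel twisted by $\mathcal F(\cdot;z_1,z_2)$) and $\mathbf b$ the boundary weight coming from the Neumann conditions; setting $z_1=z_2=z$ gives the denominator, $\Theta(z,z)=(\mathbf b,\mathcal T_{0}^{\,N-1}\mathbf b)$.

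\smallskip
Next I would carry out the saddle-point analysis. For $|z|<1$ the critical points of the one-site action $\Tr AA^{*}-\log\mathcal F(A;z,z)$ form a compact manifold $\mathcal M$, an orbit of the unitary symmetry $A_j\mapsto UA_jU^{-1}$ that the integrand has when $z_1=z_2$, and the Hessian is strictly positive transverse to $\mathcal M$. Thus the integral concentrates in a $W^{-1}$-neighbourhood of $\mathcal M^{\otimes N}$; here the hypothesis $W>N^{\varepsilon_0}$ is used twice, to make the outside contribution $O(e^{-cW^2})$ negligible against all polynomial-in-$N$ errors, and to perform the Gaussian integration over the massive transverse fluctuations with errors that are negative powers of $W$. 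The outcome is an effective transfer operator $\mathcal T_{12}^{\mathrm{eff}}$ on $L^2(\mathcal M)$: the version $\mathcal T_0^{\mathrm{eff}}$ at $z_1=z_2$ commutes with the unitary symmetry, its leading eigenfunction is the invariant constant, and the remaining spectrum sits at $\lambda_0(1-\delta_k)$ with $\delta_k\gtrsim W^{-2}$ — a gap of order $W^{-2}$ expressing the slow diffusive wandering of the $\mathcal M$-valued zero mode along the chain — while $z_1\neq z_2$ perturbs $\mathcal T_0^{\mathrm{eff}}$ by a multiplication operator $\mathcal V=O(|\zeta|^2/N)$. Any polynomial-in-$N$ prefactor carried by $\Theta(z,z)$ (the zero-mode volume) is carried also by $\Theta(z_1,z_2)$ and cancels in the ratio.

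\smallskip
Then comes the heart of Theorem~\ref{t:locGin}: showing that, when $W^2\ll N/\log N$, the ratio of these two chains tends to $1$. The chain has two scales, its length $N$ and the relaxation length $\asymp W^2$ of the zero mode, and the crossover of the statement is where they balance; heuristically, in the localized regime the system behaves locally like a random matrix of effective size $\asymp W^2\ll N$, and the characteristic-polynomial correlation departs from $1$ only on a scale of $|z_1-z_2|$ much larger than $N^{-1/2}$, so for fixed $\zeta$ it stays near $1$. In transfer-operator terms: when $N\gg W^2$ the zero mode relaxes completely, $(\mathcal T_0^{\mathrm{eff}})^{N-1}$ collapses onto the projection onto the invariant constant up to $\sum_{k\geq1}(1-\delta_k)^{N-1}=O(W^2/N)$, and $\mathcal T_{12}^{\mathrm{eff}}$, an $O(1/N)$ perturbation whose action on the leading sector is controlled by the $W^{-2}$-gap (perturbation parameter $\asymp W^2/N\to0$), does the same; exploiting the symmetry that separates the centre $z=(z_1+z_2)/2$ from the difference $z_1-z_2$, one finds that the leading contributions of $\mathcal V$ to the top eigenvalue and to the overlap with $\mathbf b$ cancel, so that
\[
\frac{\Theta(z_1,z_2)}{\Theta(z,z)}=\frac{(\mathbf b,\mathcal T_{12}^{\,N-1}\mathbf b)}{(\mathbf b,\mathcal T_{0}^{\,N-1}\mathbf b)}=1+O\!\Big(\frac{W^2\log N}{N}\Big)\longrightarrow1,
\]
uniformly for $\zeta$ in a compact set. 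The logarithmic factor in $W^2\ll N/\log N$ is the cushion making all the errors accumulated along a chain of length $N$ (localization in the previous step, truncation of the non-leading spectrum, perturbation theory for $\mathcal T_{12}^{\mathrm{eff}}$) summable to $o(1)$; the intrinsic threshold of the mechanism is $W^2\ll N$. By contrast, Theorem~\ref{t:Gin} is the regime $N\ll W^2$, where the zero mode is frozen — all $A_j$ at a common random point of $\mathcal M$ — and integrating $\mathcal F(\cdot;z_1,z_2)$ over that point reproduces the Ginibre limit $(1-e^{-4|\zeta|^2})/(4|\zeta|^2)$.

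\smallskip
The hardest parts will be the saddle reduction and the quantitative book-keeping just described: justifying the passage from $\mathcal T_{12}$ to $\mathcal T_{12}^{\mathrm{eff}}$ with all errors bounded by $e^{-cW^2}$ (where $W>N^{\varepsilon_0}$ is indispensable), proving the $\gtrsim W^{-2}$ lower bound on the spectral gap above the zero mode together with the $O(W^2/N)$ bound on the non-leading contribution, and running the perturbation theory for $\mathcal T_{12}^{\mathrm{eff}}$ uniformly in $\zeta$ while tracking the symmetry-induced cancellations that pin the limit to $1$. None of these steps is routine, and together they make up the bulk of the argument.
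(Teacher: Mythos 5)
Your proposal follows essentially the same route as the paper: a SUSY/Hubbard--Stratonovich derivation of a transfer-operator formula $(\mathcal K_\zeta^{N-1}g,g)$, concentration onto the saddle manifold $Q=u_*U$, $U\in U(2)$ (the symmetry is left multiplication by unitaries rather than conjugation, but this is immaterial), a spectral gap of order $W^{-2}$ on the unitary zero mode coming from $\lambda_\ell=1-\ell(\ell+1)/8(u_*W)^2$, collapse of $\mathcal K_\zeta^{N-1}$ onto the top eigenvector when $N\gg W^2\log N$, and cancellation of the first-order (in $\epsilon=(W/N)^{1/2}$ and in $\nu(U)/N$) perturbation of the top eigenvalue so that the ratio tends to $1$. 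The quantitative details differ slightly (the paper's concentration estimate is $1-C\log W/W$ per step rather than $e^{-cW^2}$ overall, and the restriction is to a $W^{-1/2}\log W$-neighbourhood), but the strategy and all the key mechanisms coincide with the paper's proof.
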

These theorems are the first important steps
towards the proof of bulk universality and  Anderson's type transition for the non-Hermitian RBM.

The main idea of the paper is to represent $\Theta(z_1,z_2)$ in the form (see Proposition \ref{prop:IR})
\begin{align}\label{idea}
\Theta(z_1,z_2)=(\mathcal{K}_{\zeta}^{N-1}g,g)=\sum_{j=0}^\infty\lambda_{j}^{N-1}(\mathcal{K}_{\zeta})\psi_j(g),
\end{align}
where $\mathcal{K}_{\zeta}$ is an integral operator on the space of $2\times 2$ matrices, 
$|\lambda_{0}(\mathcal{K}_{\zeta})|\ge |\lambda_{1}(\mathcal{K}_{\zeta})|\ge \dots $ are its eigenvalues, and $\psi_j(g)$ are some scalar coefficients
which can be written in terms of right and left  eigenvectors corresponding to $\lambda_j(\mathcal{K}_\zeta)$. Of course,  $\lambda_j(\mathcal{K}_\zeta)$
and $\psi_j(g)$ depend on $W,N$. One can guess that if
\[
\Big|\frac{\lambda_{1}(\mathcal{K}_{\zeta})}{\lambda_{0}(\mathcal{K}_{\zeta})}\Big|\ll 1-C/N,
\]
then the main contribution to the sum in (\ref{idea}) comes from the term with $j=0$, and we can replace $\mathcal{K}_{\zeta}$ by its projection
on the eigenvector corresponding to $\lambda_{0}(\mathcal{K}_{\zeta})$. Thus, we obtain the result of Theorem \ref{t:locGin}. But if  we have an opposite inequality for the ratio
of two first eigenvalues, then many other terms in (\ref{idea}) may give a valuable contribution into the sum, and, therefore, one should expect the result of Theorem \ref{t:Gin}.
We will show below that
\[
\Big|\frac{\lambda_{1}(\mathcal{K}_{\zeta})}{\lambda_{0}(\mathcal{K}_{\zeta})}\Big|\sim 1-c/W^2,
\]
and, therefore,  the regime $W^2\ll N$ corresponds to Theorem \ref{t:locGin}, and  the regime $W^2\gg N$ gives the result of Theorem$\,$\ref{t:Gin}.

The paper is organized as follows. In Section 2 we use supersymmetry techniques  (SUSY) to derive the integral representation for $\Theta(z_1,z_2)$ and rewrite it as an action 
of the $N$-th degree of a transfer integral operator $K_\zeta$ on a space of $2\times 2$ complex matrices $Q$ (see (\ref{idea})).  Section 3 is devoted to the first step of the spectral analysis of $\mathcal{K}_{\zeta}$: we show
that the essential contribution to the sum (\ref{idea}) is given by the eigenvectors of $\mathcal{K}_{\zeta}$ concentrated in $W^{-1/2}\log W$-neighbourhood of 
 the ``maximum surface"  $Q=u_*U$ of the function (\ref{f}) (here $U$ is a $2\times 2$ unitary matrix, and $u_*=\sqrt{1-|z|^2}$), and so $\mathcal{K}_{\zeta}$ can be restricted to the neighbourhood of this surface by  changing $Q\to U(u_*+W^{-1/2}R)$ with $U\in U(2)$ and $R$ being a Hermitian $2\times 2$ matrix.  In Section 4 we perform a more detailed
spectral analysis of $\mathcal{K}_{\zeta}$ near the ``maximum surface" by considering separately the operator $\mathcal{A}_\zeta$ on the ``Hermitian part" $R$ and the operator $K_{R_1,R_2}$ on the ``unitary part" $U$ (see (\ref{cal_K})).
 

Section 5 is devoted to the proof of Theorems \ref{t:Gin}, \ref{t:locGin}. Some auxiliary results which we use in the proof  are proven in Appendix.

We denote by $C$, $C_1$, etc. various $W$ and $N$-independent quantities below, which
can be different in different formulas. To reduce the number of notations, we also use the same letters for the integral operators and their kernels.

\section{Integral representation}\label{s:2}
One can see that considering  $\tilde\Theta(z_1,z_2)= C\cdot \Theta(z_1,z_2)$ with any constant $C=C(N,W)$ does not change the limits (\ref{lim_Gin}) -- (\ref{lim_locGin}),
hence, for the future convenience below we consider the normalized version of $\Theta$:
\begin{align}\label{ti-theta}
\tilde\Theta(z_1,z_2)=(\pi^2W^2\lambda_*^{-1})^{2(N-1)} C_{N,W}^{-1}\,\Theta(z_1,z_2),
\end{align}
where  
\begin{align}\label{lambda_*}
 \lambda_*=1-W^{-1}(\alpha-u_*^2W^{-1}), \quad \alpha=u_*(2+u_*^2W^{-2})^{1/2},\quad u_*=(1-|z|^2)^{1/2},
\end{align}
and $C_{N,W}$ is defined below in (\ref{C}).

The main purpose of  this section is to obtain a convenient integral representation of $\tilde\Theta$ that can be rewritten in the operator form (\ref{idea}):
\begin{proposition}\label{prop:IR}
	Let  $H$ be the non-Hermitian Gaussian random band matrices defined by (\ref{ban}) -- (\ref{band}). Then 
	the normalized second correlation function of the characteristic polynomials  $\tilde\Theta$ defined by (\ref{ti-theta}) can be represented in the following form
\begin{align}\label{repr_Theta}
\tilde\Theta(z_1,z_2)=&\int\limits_{(H_2)^N} e^{f(Q_1)}(\prod_{j=1}^{N-1}\mathcal{K}_\zeta(Q_j,Q_{j+1}))e^{ f(Q_N)}
\prod\limits_{j=1}^n dQ_j=(\mathcal{K}_\zeta^{N-1}g,g),
\end{align}
where $H_2$ is the space of $2\times 2$ complex matrices, $\mathcal{H}=L_2(H_2)$, and $\mathcal{K}_\zeta: \mathcal{H}\to\mathcal{H}$ is an integral operator  with the kernel
\begin{align}
\mathcal{K}_\zeta(Q_j,Q_{j+1})=\pi^4W^4\lambda_*^{-2}&\exp\Big\{ -W^2 \Tr (Q_j- Q_{j+1})(Q_j- Q_{j+1})^*+f(Q_j)+f(Q_{j+1})\Big\},\label{K}
\end{align}
where
\begin{align}
 f(Q_j)=&\frac{1}{2}(-\Tr Q_jQ_j^*+\log \det \mathcal{Q}_j +2u_*^2),\quad g(Q)=e^{f(Q)},\label{f}\\
\mathcal{Q}_j =&\left(\begin{array}{cc}\hat z&iQ_j\\iQ_j^*&\hat z^*\end{array}\right),
\quad\hat z=\mathrm{diag}\{z_1,z_2\},
\label{Q_cal}\end{align}
and $\lambda_*$, $u_*$ are defined in (\ref{lambda_*}).
\end{proposition}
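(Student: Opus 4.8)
The approach is the fermionic supersymmetry (SUSY) method. Because $\Theta$ involves only products of determinants --- no inverse determinants, resolvents, or ratios --- the Hubbard--Stratonovich auxiliary field will be an ordinary complex $2\times2$ matrix rather than a supermatrix, leading to the transfer operator on $L_2(H_2)$ in (\ref{K})--(\ref{Q_cal}). The plan is as follows. First I would write $\Theta(z_1,z_2)=\mathbf{E}\,\prod_{s=1}^2\det(z_s-H)\,\overline{\det(z_s-H)}$ and represent each of the four determinants by a Gaussian Berezin integral, introducing $N$-component Grassmann vectors $\phi^s=(\phi^s_j)_{j=1}^N$ for the factors $\det(z_s-H)$ and $\chi^s=(\chi^s_j)_{j=1}^N$ for $\overline{\det(z_s-H)}=\det(\bar z_s-H^*)$, $s=1,2$. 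The only $H$-dependence is then in $\exp\big(\sum_{s,j,k}\bar\phi^s_jH_{jk}\phi^s_k+\sum_{s,j,k}\bar\chi^s_j\bar H_{jk}\chi^s_k\big)$, and averaging over $H$ via (\ref{band}), using $\mathbf{E}\{H_{jk}\bar H_{jk}\}=J_{jk}$ and $\mathbf{E}\{H_{jk}^2\}=0$, produces the quartic Grassmann term, which after reordering the anticommuting factors takes the form $\exp\big(-\sum_{j,k}J_{jk}\Tr(\bar M_jM_k^T)\big)$ with the $2\times2$ Grassmann matrices $(M_j)_{ab}=\phi^a_j\chi^b_j$ and $(\bar M_j)_{ab}=\bar\phi^a_j\bar\chi^b_j$. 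The crucial structural point is that the Gaussian pairing couples only the $\phi$'s with the $\chi$'s, which is precisely what forces the Hubbard--Stratonovich field to be a $2\times2$ (and not a $4\times4$) matrix.

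Next I would decouple the quartic term by a Hubbard--Stratonovich transformation over $N$ ordinary complex $2\times2$ matrices $Q_j$: since $J^{-1}=-W^2\Delta+1$ is positive definite, the Gaussian identity $\int_{H_2^N}\exp\big(-\sum_{j,k}(J^{-1})_{jk}\Tr Q_jQ_k^*+i\sum_j\Tr(Q_j^*M_j+\bar M_j^TQ_j)\big)\prod_jdQ_j=\mathrm{const}\cdot\exp\big(-\sum_{j,k}J_{jk}\Tr(\bar M_jM_k^T)\big)$ holds (the factor $i$ supplying the correct sign). After interchanging the integrations, the Grassmann integrand is now \emph{quadratic} in, and block-diagonal in the site index $j$ of, the variables $\Psi_j=(\phi^1_j,\phi^2_j,\chi^1_j,\chi^2_j)$ --- the diagonal $2\times2$ blocks $\hat z=\mathrm{diag}(z_1,z_2)$ and $\hat z^*$ coming from the original determinants and the off-diagonal $iQ_j$, $iQ_j^*$ from the HS coupling --- so that $\int\exp(-\sum_j\bar\Psi_j\mathcal{Q}_j\Psi_j)\prod_jd\bar\Psi_jd\Psi_j=\prod_j\det\mathcal{Q}_j$ with $\mathcal{Q}_j$ exactly as in (\ref{Q_cal}), after routine sign and convention bookkeeping. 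Rewriting the Gaussian weight via the Neumann Laplacian as $\sum_{j,k}(J^{-1})_{jk}\Tr Q_jQ_k^*=W^2\sum_{j=1}^{N-1}\Tr(Q_j-Q_{j+1})(Q_j-Q_{j+1})^*+\sum_{j=1}^N\Tr Q_jQ_j^*$ and using $-\Tr Q_jQ_j^*+\log\det\mathcal{Q}_j=2f(Q_j)-2u_*^2$ (see (\ref{f})), I would absorb $e^{-2Nu_*^2}$, the HS normalisation, the $\det(J^{-1})$-type constant, and the $i$-factors into $C_{N,W}$ from (\ref{ti-theta}), and distribute the prefactor $(\pi^2W^2\lambda_*^{-1})^{2(N-1)}=(\pi^4W^4\lambda_*^{-2})^{N-1}$ among the $N-1$ transfer kernels. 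A term-by-term count then matches the result with (\ref{repr_Theta}): each interior $Q_j$ collects total weight $2f(Q_j)$ from its two neighbouring factors $\mathcal{K}_\zeta$, each endpoint $Q_1,Q_N$ collects $f$ from its single neighbouring $\mathcal{K}_\zeta$ plus $f$ from the boundary factor $g=e^f$, and the iterated integral equals $(\mathcal{K}_\zeta^{N-1}g,g)$ by the definition of composition of integral kernels.

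The formal manipulations above are routine; the work is in justifying them, and three points need care. First, interchanging $\mathbf{E}$, the Berezin integral, and the $Q$-integral in the Hubbard--Stratonovich step requires absolute convergence of the $Q$-integral: this holds because as $|Q_j|\to\infty$ one has $\det\mathcal{Q}_j=\det\hat z\cdot\det(\hat z^*+Q_j^*\hat z^{-1}Q_j)\sim|\det Q_j|^2$, so $\log|\det\mathcal{Q}_j|$ is dominated by the Gaussian $-\Tr Q_jQ_j^*$, while $\det\mathcal{Q}_j$ --- a complex polynomial in the eight real entries of $Q_j$ --- vanishes on a set of real codimension $\ge2$, hence $\log|\det\mathcal{Q}_j|$ is locally integrable. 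Second, $g=e^f$ and the kernel $\mathcal{K}_\zeta$ each involve $(\det\mathcal{Q})^{1/2}$, for which there is no globally continuous branch on $H_2$; one fixes an arbitrary measurable branch, and the identity (\ref{repr_Theta}) still holds because in the product $g(Q_1)\prod_{j=1}^{N-1}\mathcal{K}_\zeta(Q_j,Q_{j+1})g(Q_N)$ every $Q_j$ occurs in exactly two factors, so the branch ambiguities cancel pairwise and the integrand is single-valued (near the ``maximum surface'' $Q=u_*U$, where $\det\mathcal{Q}$ is a fixed unimodular number up to $O(N^{-1/2})$, one may simply take the principal branch). Third, for (\ref{idea}) to be meaningful one needs $\mathcal{K}_\zeta$ to be, say, Hilbert--Schmidt on $L_2(H_2)$ and $g\in L_2(H_2)$, which both follow from the Gaussian factor $e^{-W^2\Tr(Q-Q')(Q-Q')^*}$ in (\ref{K}) together with the decay $e^{-\frac12\Tr QQ^*}$ inside $g$ and the large-$Q$ bound on $\det\mathcal{Q}$. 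I expect the first two points --- absolute convergence of the $Q$-integration and single-valuedness of the integrand --- to be the only genuine obstacle; once they are in place, the identification with $(\mathcal{K}_\zeta^{N-1}g,g)$ is immediate.
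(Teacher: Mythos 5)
Your proposal follows essentially the same route as the paper's proof: Grassmann (Berezin) representation of the four determinants, Gaussian averaging to produce the site-local quartic term, Hubbard--Stratonovich decoupling with one complex $2\times2$ field $Q_j$ per site using the positive definite $J^{-1}=-W^2\Delta+1$, Berezin integration yielding $\prod_j\det\mathcal{Q}_j$ with $\mathcal{Q}_j$ as in (\ref{Q_cal}), and rewriting the quadratic form as nearest-neighbour transfer kernels with the constants absorbed into $C_{N,W}$. The extra points you raise (absolute convergence justifying the interchange of integrals, single-valuedness of the integrand despite the $(\det\mathcal{Q})^{1/2}$ in $g$ and $\mathcal{K}_\zeta$ via pairwise cancellation, and the Hilbert--Schmidt property) are correct and are not spelled out in the paper, but they do not alter the argument.
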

\begin{proof}
To derive the integral representation of $\Theta$ we will use  SUSY. The detailed information about the techniques and its applications 
to random matrix theory  can be found, e.g.,
in \cite{Ef} or \cite{M:00}.

Introduce vectors
\begin{align*}
&\Psi_l=(\psi_{jl})^t_{j=1,..,N}, \quad l=1,\ldots,4;\\
&\Psi_l^+=(\bar\psi_{jl})_{j=1,..,N}, \quad l=1,\ldots,4,
\end{align*}
with independent anticommuting Grassmann components $\{\psi_{jl}\}$, $\{\bar\psi_{jl}\}$.

Using the standard formula of Grassmann integration (see, e.g., \cite{Ef})
\begin{equation}\label{G_Gr}
\int \exp\Big\{\sum\limits_{j,k=1}^nA_{j,k}\overline{\chi}_j\chi_k\Big\}
\prod\limits_{j=1}^nd\,\overline{\chi}_jd\,\chi_j=\mdet A,
\end{equation}
we get
\begin{align*}
	\Theta(z_1,z_2) =  \E \Bigg\{
	\int \exp \Bigg\{ \sum\limits_{l = 1}^{2} \Psi^+_l\left(H_N - z_l\right)\Psi_l+ \sum\limits_{l = 1}^{2} \Psi^+_{l+2}\left(H_N - z_l\right)^*\Psi_{l+2} \Bigg\} d\Psi
	\Bigg\},
\end{align*}
where 
\begin{equation*}
   d\Psi =\prod\limits_{l=1}^4 \prod\limits_{j = 1}^{N} d\bar\psi_{jl}\,d\psi_{jl}.
\end{equation*}
Collecting the terms near $\Re H_{jk}$ and $\Im H_{jk}$, we can rewrite the formula as
\begin{align*}
	&\Theta(z_1,z_2) =  \int\exp \Big\{ -\sum\limits_{l = 1}^{2} z_l\Psi^+_l\Psi_l- \sum\limits_{l = 1}^{2} \bar z_l\Psi^+_{l+2}\Psi_{l+2} \Big\}\\
&\times	\E \Big\{
	\exp \Big\{ \sum\limits_{j,k = 1}^{N} \Re H_{jk}(\chi_{jk}^{(12)}+\chi_{kj}^{(34)})+ i\sum\limits_{j,k = 1}^{N} \Im H_{jk}(\chi_{jk}^{(12)}-\chi_{kj}^{(34)}) \Big\} \Big\}d\Psi
\end{align*}
with
\begin{align}\label{chi}
&\chi_{jk}^{(12)}=\bar\psi_{j1}\psi_{k1}+\bar\psi_{j2}\psi_{k2},\\ \notag
&\chi_{jk}^{(34)}=\bar\psi_{j3}\psi_{k3}+\bar\psi_{j4}\psi_{k4}.
\end{align}
After taking the expectation with respect to (\ref{band}), it gives
\begin{align*}
&\Theta(z_1,z_2) =  \int\exp \Big\{ -\sum\limits_{l = 1}^{2} z_l\Psi^+_l\Psi_l- \sum\limits_{l = 1}^{2} \bar z_l\Psi^+_{l+2}\Psi_{l+2} \Big\} \exp \Big\{ \sum\limits_{j,k = 1}^{N} J_{jk} \,\chi_{jk}^{(12)} \chi_{kj}^{(34)}\Big\}d\Psi.
\end{align*}
Applying Hubbard-Stratonovich transformation (see \cite{Ef})
\begin{align*}
e^{ab}=\pi^{-1} \int e^{a \bar u+b u-\bar u u} d\bar u\, du
\end{align*}
for $a,b$ being any commuting elements of Grassmann algebra, we get
\begin{align*}
&\Theta(z_1,z_2) = C'_{N,W}\int\exp \Big\{ -\sum\limits_{l = 1}^{2} z_l\Psi^+_l\Psi_l- \sum\limits_{l = 1}^{2} \bar z_l\Psi^+_{l+2}\Psi_{l+2} \Big\}\cdot \exp \Big\{ -\sum\limits_{j,k = 1}^{N} (J^{-1})_{jk}\Tr Q_j Q_k^*\Big\}\\
&\times \exp \Big\{-i\sum\limits_{j=1}^N(\bar\psi_{j1},\bar\psi_{j2})Q_j\begin{pmatrix}\psi_{j3} \\ \psi_{j4}\end{pmatrix}
-i\sum\limits_{j=1}^N(\bar\psi_{j3},\bar\psi_{j4})Q_j^*\begin{pmatrix}\psi_{j1} \\ \psi_{j2}\end{pmatrix}\Big\}d\Psi\, dQ,
\end{align*}
where $\{Q_j\}$ are complex $2\times 2$ matrices with independent entries and
\begin{equation}\label{dQ}
dQ=\prod\limits_{j=1}^N\prod_{p,r=1}^2\,d (\bar{Q}_j)_{pr} d(Q_j)_{pr}, \quad C'_{N,W}=\pi^{-4N} \mdet^{-4} J.
\end{equation}
The integral over $d\Psi$ can be taken now using (\ref{G_Gr}), and we obtain finally
\begin{multline*}
\Theta(z_1,z_2) \\=C'_{N,W}  \int\exp \Big\{-W^2\sum\limits_{j=1}^{N-1} \Tr (Q_j - Q_{j+1})(Q_j - Q_{j+1})^*-\sum\limits_{j=1}^{N} \Tr Q_j Q_j^*\Big\}\prod\limits_{j=1}^N \mdet \mathcal{Q}_j \,\,dQ,
\end{multline*}
with $\mathcal{Q}_j$ of (\ref{Q_cal}). Changing
\begin{equation}\label{C}
C_{N,W}=e^{-2Nu_*^2} \cdot C'_{N,W},
\end{equation}
we get (\ref{repr_Theta}).
\end{proof}

\section{Concentration of eigenfunctions of $\mathcal{K}_\zeta$}\label{s:3}

It is easy to see that for $\zeta=0$ the function $f$ of (\ref{f}) takes its maximum at $Q=u_*U$, with some unitary $U$ and $u_*$ of (\ref{lambda_*}).
Indeed, writing $Q=V_1\Lambda V_2$ with unitary $V_1,V_2$ and  $\Lambda=\mathrm{diag}\{\lambda_1,\lambda_2\}$, we have for $|z|<1$
\begin{align*}
f(\Lambda)=\frac{1}{2}\sum_{\alpha=1,2}(\log(|z|^2+\lambda_\alpha^2)-\lambda_\alpha^2+u_*^2)\le 0,
\end{align*}
and  the r.h.s. is zero iff $\lambda_1=\lambda_2=u_*$.

The aim of this section is to prove that the main contribution to (\ref{repr_Theta}) is given  by  $\cap_j \{\Omega_W(Q_j)\}$ with 
\begin{align}\label{Omega}
\Omega_W=\{Q:\|Q^*Q-u_*^2I_2\|\le \log W/W^{1/2}\}.
\end{align}
But before we would like to make the following observation on $\tilde\Theta(z_1,z_2)$.  It is evident from  (\ref{f})  that $\tilde\Theta(z+\zeta,z-\zeta)$ is an analytic function with respect to $\zeta$ and $\bar\zeta$ considered as independent variables.
Consider $\tilde\Theta(z_1,z_2)$   for $\zeta=\xi e^{i\phi}$, $\bar \zeta=\eta e^{-i\phi}$,  $\phi=\mathrm{arg}\, z$.
By the Cauchy-Schwartz  inequality, for any $w_1, w_2, w_3, w_4$
\begin{align*}
&\Big|\mathbf{E}\Big\{\prod_{j=1,3}\det (H-w_j)\det (H^*-\bar w_{j+1})\Big\}\Big|
\le \prod_{j=1}^4\mathbf{E}^{1/4}\Big\{|\det (H-w_j)|^4\Big\}=\prod_{j=1}^4\Theta^{1/4}(w_j,w_j).
\end{align*}
Applying this inequality to
\[w_1=z+\frac{\xi e^{i\phi}}{\sqrt{N}},\, w_2= z+\frac{\bar\eta e^{i\phi}}{\sqrt{N}},\,
w_3=z-\frac{\xi e^{i\phi}}{\sqrt{N}},\,w_4= z-\frac{\bar\eta e^{i\phi}}{\sqrt{N}},
\]
and using that
\[C_{n,W}(\pi^2W^2\lambda_*^{-1})^{-2(N-1)}\Big|\tilde\Theta(z_1,z_2)\Big|_{\zeta=\xi e^{i\phi},\bar \zeta=\eta e^{-i\phi}}=
\Big|\mathbf{E}\Big\{\prod_{j=1,3}\det (H-w_j)\det (H^*-\bar w_{j+1})\Big\}\Big|,
\]
we get that boundedness of $\big|\tilde\Theta(z_1,z_2)\big|$ for $\zeta=\xi e^{i\phi},\bar \zeta=\eta e^{-i\phi}$ follows from
the boundedness of $\tilde\Theta(z+\zeta/\sqrt N,z+\zeta/\sqrt N)$ for any $|\zeta|\le C$. Hence,
 by the uniqueness theorem, it is sufficient to prove the existence of the limit, as $N,W\to\infty$ of
 $\tilde\Theta(z_1,z_2)$ for $\zeta=\xi e^{i\phi},\bar \zeta=\eta e^{-i\phi}$, $\xi,\eta\in \mathbb{R}$.
 Notice that  by (\ref{f}) 
$\det \mathcal{Q}_j\in\mathbb{R}$, if  $\xi,\eta \in\mathbb{R}$. Thus, starting from this moment, we consider $\mathcal{K}_\zeta$ of (\ref{K}) as a positive operator
while for simplicity keeping notations $\mathcal{K}_\zeta$, $\zeta,\bar\zeta$.

\medskip

Recall the notation $\mathcal{H}=L_2(\mathbb{C}^4)$, where we consider $\mathbb{C}^4$ as a space of all $2\times 2$ matrices
with complex entries. Let $\mathbb{P}_W=\mathbf{1}_{\Omega_W}$ be the orthogonal projection in $\mathcal{H}$ on 
functions whose support lies in the domain $\Omega_W$ of (\ref{Omega}).
\begin{lemma}\label{l:conc} There is $N,W$-independent $C_1$ such that
\begin{align}\label{Om1}
\|(1-\mathbb{P}_W)\mathcal{K}_\zeta(1-\mathbb{P}_W)\|\le 1-C_1 \log W/W.
\end{align}
\end{lemma}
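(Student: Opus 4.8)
The plan is to bound the operator $(1-\mathbb{P}_W)\mathcal{K}_\zeta(1-\mathbb{P}_W)$ by controlling its kernel pointwise outside $\Omega_W$ and then estimating the resulting integral. Since we have reduced to the case $\xi,\eta\in\mathbb{R}$ where $\mathcal{K}_\zeta$ is a positive operator with symmetric kernel, it suffices to estimate $\|(1-\mathbb{P}_W)\mathcal{K}_\zeta(1-\mathbb{P}_W)\|$ via the Schur test (or simply via $\sup_Q \int |\mathcal{K}_\zeta(Q,Q')|\,dQ'$, since the kernel is symmetric and nonnegative). The key point is that the kernel $\mathcal{K}_\zeta(Q,Q')$ factors through $e^{f(Q)}$, $e^{f(Q')}$ and the Gaussian-type weight $e^{-W^2\Tr(Q-Q')(Q-Q')^*}$, and we have already shown that $f(Q)\le 0$ with equality exactly on the maximum surface $Q=u_*U$.

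First I would make the upper bound on $f$ away from the surface quantitative. Writing $Q = V_1\Lambda V_2$ with $\Lambda = \mathrm{diag}\{\lambda_1,\lambda_2\}$, one has $f(\Lambda)=\tfrac12\sum_\alpha(\log(|z|^2+\lambda_\alpha^2)-\lambda_\alpha^2+u_*^2)$; Taylor expanding near $\lambda_\alpha = u_*$ (where $|z|^2+u_*^2=1$) gives $\log(|z|^2+\lambda_\alpha^2)-\lambda_\alpha^2+u_*^2 = -\tfrac{(1+|z|^2)}{1}(\lambda_\alpha^2-u_*^2) + \ldots$ — more precisely, setting $t_\alpha=\lambda_\alpha^2-u_*^2$, $\log(1+t_\alpha) - t_\alpha \le -c\min(t_\alpha^2,|t_\alpha|)$ for an absolute $c>0$. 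Hence on the complement of $\Omega_W$, where $\|Q^*Q-u_*^2 I_2\|>\log W/W^{1/2}$, at least one $|t_\alpha|>\log W/W^{1/2}$, so $f(Q)\le -c\log^2 W/(2W)$ (using that $t_\alpha^2 > \log^2 W/W$ is the relevant regime since $\log W/W^{1/2}\to 0$). Actually I would be a bit more careful and also retain the full negativity of $f$ globally so the $Q\to\infty$ tails are integrable.

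Second, I would estimate $\int_{H_2} \mathcal{K}_\zeta(Q,Q')\,dQ'$ for $Q\notin\Omega_W$. The Gaussian factor $e^{-W^2\Tr(Q-Q')(Q-Q')^*}$ integrates (over the $4$ complex entries of $Q'$, i.e. $8$ real dimensions) to a constant of order $\pi^4 W^{-8}$, but $\mathcal{K}_\zeta$ carries the prefactor $\pi^4 W^4\lambda_*^{-2}$, and the normalization constant $C_{N,W}$ and the power of $W$ in $\tilde\Theta$ are arranged precisely so that the "free" transfer operator (with $f\equiv$ const on the surface) has norm essentially $1$; concretely $\lambda_*$ in (\ref{lambda_*}) is the top eigenvalue of the corresponding one-dimensional Gaussian convolution. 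So the correct statement is that the kernel, with its $e^{f(Q)+f(Q')}$ factor stripped to the surface, acts with norm $\le 1$; the extra decay $e^{f(Q)}\le e^{-c\log^2 W/(2W)}$ that we gain from $Q\notin\Omega_W$ survives the integration and the symmetrization, yielding a factor $\le e^{-c\log^2W/(2W)} = 1 - c\log^2 W/(2W) + O((\log^2W/W)^2) \le 1 - C_1\log W/W$ for $W$ large. I would run the Schur test symmetrically: $\|(1-\mathbb{P}_W)\mathcal{K}_\zeta(1-\mathbb{P}_W)\| \le \sup_{Q\notin\Omega_W}\int_{Q'\notin\Omega_W}\mathcal{K}_\zeta(Q,Q')\,dQ'$, and since the integrand already contains $e^{f(Q)}$ with $Q\notin\Omega_W$, pull out $e^{f(Q)/2}\le e^{-c\log^2W/(4W)}$ and bound the rest by the norm of the full operator $\mathcal{K}_\zeta^{1/2}(\text{free part})\mathcal{K}_\zeta^{1/2}$ restricted appropriately, which is $O(1)$.

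The main obstacle, and the step requiring genuine care rather than routine computation, is the bookkeeping of constants showing that the "free" part of $\mathcal{K}_\zeta$ — the Gaussian convolution operator $W^8\pi^{-4}e^{-W^2\Tr(Q-Q')(Q-Q')^*}$ composed with multiplication by the surface value of $e^{2f}$ — really has operator norm bounded by $1$ (or at worst $1+O(1/W)$, which would still be absorbed). This is where the precise choice of $\lambda_*$ and $\alpha$ in (\ref{lambda_*}) enters: one must diagonalize the Gaussian convolution (its eigenfunctions are Hermite-type, eigenvalues geometric), identify $\lambda_*$ as governing the largest product-eigenvalue after multiplying by $e^{2f}$ linearized at the surface, and verify the normalization in (\ref{ti-theta})–(\ref{C}) cancels it. I expect this to be handled by a lemma computing the spectrum of the model Gaussian operator on $\mathbb{R}$ or on Hermitian/unitary coordinates near $u_*U$, and the remaining contribution of the nonlinear part of $f$ and of the transverse (unitary) directions to be controlled by the crude global bound $f\le 0$ together with the Gaussian tails, which is routine.
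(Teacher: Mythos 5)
Your proposal is correct in substance, but it takes a genuinely different and more elementary route than the paper. The paper first localizes to a fixed neighbourhood $\Omega_\delta$ of the maximum surface (inequality (\ref{l:3.5})), then integrates out the unitary part of $Q$ via the Schlittgen--Wettig formula (\ref{2unit}) to reduce the quadratic form of $\mathcal{K}_\zeta$ to that of an explicit Gaussian operator $A_{\tilde f}$ on the singular values, diagonalizes $A_{\tilde f}$ in Hermite functions, and uses that a function supported outside $\Omega_W$ has overlaps $O(e^{-c\log^2 W})$ with the first $\sim\log W$ Hermite modes, so that only eigenvalues $\le\lambda_*^{\log W/2}\le 1-C\log W/W$ survive. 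You instead write $\mathcal{K}_\zeta=e^fGe^f$ with $G$ the pure Gaussian convolution and bound $(\mathcal{K}_\zeta h,h)\le\|G\|\,\sup_{Q\notin\Omega_W}e^{2f(Q)}\,\|h\|^2$; since for a nonnegative convolution kernel the operator norm is simply the integral of the kernel, $\|G\|=\lambda_*^{-2}=1+O(W^{-1})$ --- so the step you single out as the main obstacle requires no Hermite diagonalization at all --- and $\sup_{Q\notin\Omega_W}e^{2f}\le e^{-c\log^2W/W}$, giving $1-c\log^2W/(2W)$, slightly stronger than (\ref{Om1}). The paper's heavier route pays off later, since the localization step and the Hermite spectral data of the radial Gaussian operator are reused in Lemmas \ref{l:A}, \ref{l:Z_0} and \ref{l:2}; your argument is shorter and self-contained but yields only this one estimate.

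Two points you should make explicit to close the argument. First, your quantitative negativity of $f$ is computed at $\zeta=0$, where $f$ depends only on the singular values of $Q$; for the analytically continued $\zeta\ne 0$ one must check that the perturbation of $f$ does not consume the gain. It does not: the term of $\log\det\mathcal{Q}$ linear in $\zeta N^{-1/2}$ vanishes on the surface $Q=u_*U$, so at distance $s$ from the surface the perturbation is $O(N^{-1/2}s+N^{-1})$, which is dominated by the gain $cs^2$ for all $s\ge\log W/W^{1/2}$ because $N^{-1}\le W^{-1}\ll \log^2W/W$. Second, the kernel prefactor must be read as the one normalizing the Gaussian convolution to $\lambda_*^{-2}$ (this is what the paper uses in its own bound $\mathcal{K}_\zeta\le\lambda_*^{-2}e^{f}$); with that reading, your absorption of the $1+O(W^{-1})$ loss into the $\log^2W/W$ gain is exactly right.
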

\textit{Proof.} Take $h\in(1-\mathbb{P}_W)\mathcal{H}$,  $\|h\|=1$. Choose $\delta=2u_*^2/3$, and let $h_\delta$ be a projection
of $h$ on $\Omega_{\delta}$, where $\Omega_{\delta}$ is defined by (\ref{Omega}) with $\log W/W^{1/2}$ replaced by $\delta$.
Then
\begin{align}\label{l:3.5}
(\mathcal{K}_\zeta h,h)\le 1-C_\delta(1-(\mathcal{K}_\zeta h_\delta,h_\delta)).
\end{align}
The inequality was proved in \cite{SS:17}(see Lemma 3.5), but for the reader's convenience we repeat its proof at the end of the proof of
Lemma \ref{l:conc}.

Consider the change of variables $Q_i= V^{(1)}_i\Lambda_iV^{(2)}_i$, where $V^{(1)}_i, V^{(2)}_i$ are unitary matrices and 
$\Lambda_i=\hbox{diag}\{\mu_{i1},\mu_{i2}\}$ ($\mu_{i1},\mu_{i2}>0$). The Jacobian of such change (see, e.g., \cite{Hu:63})  is
\[
\mathcal{J}(\Lambda)=4\pi^4(\mu_{i1}^2-\mu_{i2}^2)^2\det \Lambda_i.
\]
Then for function $h$ depending only on $\Lambda$ we have
\[\|h\|=\|\mathcal{J}^{1/2}h\|_{L_2(\mathbb{R}_+^2)}.\]
Write
\begin{equation*}
-W^2 \Tr (Q_1- Q_{2})(Q_1- Q_{2})^*
= -W^2 \Tr (\Lambda_1^2+\Lambda_{2}^2)+\tilde k_\Lambda(V^{(1)*}_2 V^{(1)}_1,
V^{(2)}_1 V^{(2)*}_2)
\end{equation*}
with
\begin{align*}
\tilde k_\Lambda(V^{(1)*}_2 V^{(1)}_1,
V^{(2)}_1 V^{(2)*}_2)=W^2 \Tr(V^{(1)}_1\Lambda_1V^{(2)}_1(V^{(1)}_2\Lambda_2V^{(2)}_2)^*+(V^{(1)}_1\Lambda_1V^{(2)}_1)^*V^{(1)}_2\Lambda_2V^{(2)}_2).
\end{align*}
According to \cite{SW:03}, we have uniformly in $\Lambda_1^2,\Lambda_2^2 >u_*^2/3$ (i.e. for $Q_1,Q_2\in \Omega_\delta$)
\begin{align}\notag
\int dV^{(1)}dV^{(2)}&\exp\{ \tilde k_\Lambda(V^{(1)},V^{(2)})\}=\int dV^{(1)}dV^{(2)}\exp\{ W^2\Tr V^{(1)}\Lambda_1V^{(2)}\Lambda_2+cc\}\\
=&C\,\frac{\det \{I_0(2W^2\mu_{1i}\mu_{2j})\}_{i,j=1,2}}{W^4(\mu_{11}^2-\mu_{12}^2)(\mu_{21}^2-\mu_{22}^2) }\notag\\=
&C'\,\frac{\det \{e^{2W^2\mu_{1i}\mu_{2j}}\}_{i,j=1,2}}
{W^4(\mu_{11}^2-\mu_{12}^2)(\mu_{21}^2-\mu_{22}^2)
 (\det\Lambda_1\det\Lambda_2)^{1/2}}(1+O(W^{-2})),
\label{2unit}\end{align} 
where here and below ``cc" means the complex conjugate of the previous expression.
Here $I_0(z)$ is a modified Bessel function and we used the asymptotic relation
\[ I_0(z)=e^z\sqrt{\frac{2\pi}{z}}(1+O(z^{-1})).
\]
For an arbitrary  function  $\tilde f(Q)$ which depends only on ``eigenvalue part"  $\Lambda$ of $Q$ consider the operators:
\begin{align*}
\mathcal{K}_{\tilde f}(Q_1,Q_2)=&C_1W^{8}\exp\{-W^2 \Tr (Q_1- Q_{2})(Q_1- Q_{2})^*+\tilde f(Q_1)+\tilde f(Q_2)\},\\
A_{\tilde f}(\Lambda_1,\Lambda_2)=&C_2W^{4}\exp\{-W^2\Tr (\Lambda_1-\Lambda_2)^2+\tilde f(\Lambda_1)+\tilde f(\Lambda_2)\}.
\end{align*}
The above change of variables and (\ref{2unit}) imply
\begin{align*}
(\mathcal{K}_{\tilde f}h,h)=&(A_{\tilde f} \mathcal{J}^{1/2}h,\mathcal{J}^{1/2}h)_{L_2(\mathbb{R}_+^2)}+O(W^{-2})\|h\|.
\end{align*}
It's easy to see that there exist some absolute $c_*,d_*$ such that for $Q\in \Omega_\delta$ and $f$ of (\ref{f}) we have
\[f(Q)\le -c_*\Tr(\Lambda-u_*I_2)^2+d_*/N=:\tilde f(\Lambda).
\]
Consider 
\begin{align*}
h_{*\delta}(\Lambda)=\Big(\int |h_\delta(\Lambda,U,V)|^2dUdV\Big)^{1/2},\quad \|h_{*\delta}\|^2=\|h_{\delta}\|^2.
\end{align*}
Denote by $\tilde\psi_{\bar k}(\mu_1,\mu_2)$, $\bar k=(k_1,k_2)$  the eigenfunctions of $A_{\tilde f}$. Then,
similarly to Lemma \ref{l:A} below, we have
\[
\tilde\psi_{\bar k}(\Lambda)=W\kappa_{\bar k}H_{k_1}((W\tilde \alpha)^{1/2}\mu_1)H_{k_2}((W\tilde \alpha)^{1/2}\mu_2)e^{-W\tilde\alpha\Tr\Lambda^2},
\]
where $H_k$ is the  $k$th Hermite polynomial,  $\tilde\alpha=\sqrt{2c_*}(1+O(W^{-1}))$, and $\kappa_{\bar k}$ is the normalizing factor. Now
\begin{align*}
(\mathcal{K}_\zeta h_\delta,h_\delta)&\le (\mathcal{K}_{\tilde f} h_{\delta},h_{\delta})\le (\mathcal{K}_{\tilde f} h_{*\delta}, h_{*\delta})=
 (A_{\tilde f} \mathcal{J}^{1/2}h_{*\delta}, \mathcal{J}^{1/2}h_{*\delta})_{L_2(\mathbb{R}_+^2)}(1+O(W^{-2}))\\
=&\sum_{\bar k}\tilde\lambda_{\bar k}|( \mathcal{J}^{1/2}h_{*\delta},\tilde\psi_{\bar k}))_{L_2(\mathbb{R}_+^2)}|^2 (1+O(W^{-2})),
\end{align*}
where $\tilde\lambda_{\bar k}$ is the  eigenvalue of $A_{\tilde  f}$ corresponding to $\tilde\psi_{\bar k}$.
But, since $ \mathcal{J}^{1/2} h\in(1-\mathbb{P}_W)\mathcal{H}$, and $\|(1-\mathbb{P}_W)\tilde\psi_{\bar k}\|_{L_2(\mathbb{R}_+^2)}\le e^{-c\log^2W}$
 for $\max\{k_1,k_2\}<\log W$, we have 
\[
 ((1-\mathbb{P}_W)\mathcal{J}^{1/2}h_{*\delta},\tilde\psi_{\bar k}(\Lambda))_{L_2(\mathbb{R}_+^2)}=
(\mathcal{J}^{1/2}h_{*\delta},(1-\mathbb{P}_W)\tilde\psi_{\bar k}(\Lambda))_{L_2(\mathbb{R}_+^2)}\le e^{-c\log^2W}.
\]
Hence, in view of  the spectral theorem for $A_{\tilde f}$, we get
\begin{align*}
 &(A_{\tilde f} \mathcal{J}^{1/2}h_{*\delta}, \mathcal{J}^{1/2}h_{*\delta})_{L_2(\mathbb{R}_+^2)}
=\sum_{\bar k}\tilde\lambda_{\bar k}|( \mathcal{J}^{1/2}h_{*\delta},\tilde\psi_k)_{L_2(\mathbb{R}_+^2)}|^2\\ 
\le &
\sum_{\max\{k_1,k_2\}>\log W/2}\tilde\lambda_{\bar k}|(\mathcal{J}^{1/2}h_{*\delta},\tilde\psi_{\bar k}(\Lambda))_{L_2(\mathbb{R}_+^2)}|^2+O(e^{-c\log^2W})\\ \le&
\max_{\max\{k_1,k_2\}>\log W/2}\{\tilde\lambda_{\bar k}\}\|\mathcal{J}^{1/2}h_{*\delta}\|_{L_2(\mathbb{R}_+^2)}^2
=(1-C\log W/W)\|h_{*\delta}\|^2\le (1-C\log W/W).
\end{align*}
Using this bound in (\ref{l:3.5}), we obtain (\ref{Om1}).

Now let us prove (\ref{l:3.5}). Denote $\Omega_{\delta/2}$ the analogue of $\Omega_W$ of (\ref{Om1}) with $\log  W/W^{1/2}$ replaced by $\delta/2$
and set
\[ h_1=h\mathbf{1}_{\Omega_{\delta/2}},\quad h_2=h\mathbf{1}_{\Omega_\delta\setminus\Omega_{\delta/2}},\quad
h_3=h-h_1-h_2.
\]
Since $\mathcal{K}_\zeta \le \lambda_*^{-2}e^f$ and $\lambda_*=1+O(W^{-1})$, we have
\begin{align*}
&(\mathcal{K}_\zeta h,h)\le \lambda_*^{-2}(e^{f}h,h)\le \lambda_*^{-2}\|h_{1}\|^2+(1-C_{1\delta})\|h_2+h_3\|^2=1-C_{1\delta}\|h_2+h_3\|^2/2\\
\Rightarrow& \|h_2+h_3\|^2\le 2C_{1\delta}^{-1}(1-(\mathcal{K}_\zeta h,h)),
\end{align*}
where $C_{1\delta}=1-\max _{Q\not \in\Omega_\delta}e^{f(Q)}$.

Using the above bound and that  $(\mathcal{K}_\zeta h_1, h_3)=O(e^{-cW^2\delta})$ and $\|\mathcal{K}_\zeta\|\le\lambda_*^{-2}$,  we obtain
\begin{align*}
(\mathcal{K}_\zeta h,h)=&(\mathcal{K}_\zeta (h_1+h_2),h_1+h_2)+2\Re(\mathcal{K}_\zeta (h_1+h_2),h_3)
+(\mathcal{K}_\zeta h_3,h_3)\\
\le &(\mathcal{K}_\zeta (h_1+h_2),h_1+h_2)+2\lambda_*^{-2}\|h_2+h_3\|^2\\
\le &(\mathcal{K}_\zeta (h_1+h_2),h_1+h_2)+4 C_{1\delta}^{-1}\lambda_*^{-2}(1-(\mathcal{K}_\zeta h,h))\\
\Rightarrow&(\mathcal{K}_\zeta h,h)\le 1-(1+5 C_{1\delta}^{-1}\lambda_*^{-2})^{-1}\Big(1-(\mathcal{K}_\zeta (h_1+h_2),h_1+h_2)\Big).
\end{align*}
Since $h_1+h_3=h_\delta$, we get (\ref{l:3.5}).
$\square$

Now let us study $\mathbb{P}_W\mathcal{K}_\zeta\mathbb{P}_W$. 

Consider the cylinder  change of variables (see, e.g., \cite{Hu:63}) 
\begin{align}\label{c_ch}
Q_i=U_i\mathcal{R}_i, \quad U_i\in U(2), \quad \mathcal{R}_i>0,\quad J(\mathcal{R})=\pi^3(\Tr \mathcal{R})^2\det \mathcal{R}.
\end{align}
Everywhere below we consider our operators acting in $\mathcal{H}_0\otimes L_2(U(2))$ with
\begin{align}\label{H_0}
\mathcal{H}_0=L_2(\mathcal{H}_{2,+}) \,\text{with innner product}\, (\psi_1(\mathcal{R}),\psi_2(\mathcal{R}))=
\int_{\mathcal{H}_{2,+}}\psi_1(\mathcal{R})\overline{\psi_2(\mathcal{R})})d\mathcal{R}.
\end{align}
Here $\mathcal{H}_{2,+}$ is the space of all positive $2\times 2$ matrices and $d\mathcal{R}$ means the Lebesgue measure on $\mathcal{H}_{2,+}$.

Since $\mathbb{P}_W$ is the projector on $\Omega_W$ (see (\ref{Omega})),  Lemma \ref{l:conc} implies that we can restrict the integration with respect to $\mathcal{R}$ by $O(W^{-1/2}\log W)$-neighbourhood
 of $u_*I_2$, i.e. 
 \begin{align}\label{R}
 \mathcal{R}_i=u_*(I_2+W^{-1/2}R_i),\quad R_i=R_i^*,\quad\|R_i\|\le\log W+o(1).
 \end{align}
 Then we get
\begin{align}\label{Theta1}
\tilde\Theta(z_1,z_2)=&(\mathcal{K}_\zeta^{N-1}g,g), 
\end{align}
where $\mathcal{K}_\zeta$ is an integral operator with the kernel
\begin{align}\label{cal_K}
&\mathcal{K}_\zeta(R_1,U_1,R_2,U_2)= \mathcal{A}_\zeta(R_1,U_1,R_2,U_2) K_{R_1,R_2}(U_2^*U_1),
\\
\label{K,Z}
&K_{R_1,R_2}(U)=Z^{-1}(R_1,R_{2})e^{k_{R_1,R_2}(U^*_{2}U_1)},\\
\notag &k_{R_1,R_2}(U)=u_*^2W^2\Tr \Big((U-1)(1+R_1/W^{1/2})(1+R_{2}/W^{1/2})\Big)+cc,\\
& Z(R_1,R_{2})=(\pi u_*W)^{3}\int dU \exp\{k_{R_1,R_2}(U)\},\notag\\
  &\mathcal{Z}(R_1,R_2)=J^{1/2}(u_*(1+R_1/W^{1/2}))J^{1/2}(u_*(1+R_2/W^{1/2}))Z(R_1,R_{2}),
\label{cal_Z}\end{align}
with $J(\mathcal{R})$ defined in (\ref{c_ch}).
Operator $\mathcal{A}_\zeta$ of (\ref{cal_K}) has the form
\begin{align}\label{A}
\mathcal{A}_\zeta(R_1,U_1,R_2,U_2)=&e^{f_\zeta(R_1,U_1)}B(R_1-R_2)
e^{f_\zeta(R_1,U_1)}\mathcal{Z}(R_1,R_2)\\
f_\zeta(R,U)=&f(u_*U(1+R/W^{1/2})),\quad B(R)=(\lambda_*\pi u_*^2W)^{-2}e^{-Wu^2_*\Tr R^2},\notag
\end{align}
where $u_*,\lambda_*$ are defined in (\ref{lambda_*}), and $f$ is from (\ref{f}).

The function $g$ in (\ref{Theta1}) is obtained by the change of variables  (\ref{c_ch}) and (\ref{R}) in  $g$ of (\ref{f}):
\begin{align}\label{new_g}
g=e^{f_\zeta(R,U)},\quad \|g\|=CW(1+o(1)).
\end{align}
Now let us expand $f_\zeta(R,U)$ around  $Q_*=u_*U$.  Introduce the block-diagonal unitary matrix $D(U)=\mathrm{diag}\{U,I\}$
and denote 
\begin{align}
L_{U^*}=U^*LU, \quad \epsilon=\big(W/N\big)^{1/2},\quad\mathcal{M}(U)=-\frac{1}{2u_*^2}(\zeta \bar zL_{U^*}+\bar\zeta  zL).
\label{M(U)}\end{align}
Notice that in the conditions of Theorems \ref{t:Gin} -- \ref{t:locGin} we have $\epsilon\le N^{-\varepsilon_0/2}$.

Then
\begin{align*}
&\hat Q_*^{-1}=D(U)\left(\begin{array}{cc} \bar z I_2&-iu_*\\-iu_*& z I_2\end{array}\right)D^*(U),\quad
\tilde Q=W^{-1/2}D(U)\left(\begin{array}{cc}\epsilon\zeta L_{U^*}&iu_*R\\i u_*R&\epsilon\bar\zeta L\end{array}\right)D^*(U),\quad \\
&\hat Q_*^{-1}\tilde Q=W^{-1/2}D(U)\left(\begin{array}{cc}\epsilon\bar z\zeta L_{U^*}+u_*^2R&i\bar z u_*R-i\epsilon u_*\bar\zeta L\\
izu_*R -i\epsilon u_*\zeta L_{U^*}&\epsilon\bar\zeta z L+u_*^2R.
\end{array}\right)D^*(U).
\end{align*}
Hence,
\begin{align}\label{pert}
f_\zeta(R,U)=&-\frac{u_*^2}{2W}\Tr R^2-\frac{1}{4}\Tr (\hat Q_*^{-1}\tilde Q)^2-\frac{1}{4}\sum_{p=3}^\infty\frac{(-1)^{p}}{p}\Tr (\hat Q_*^{-1}\tilde Q)^p\\
=&-\frac{u_*^2}{2W}\Tr (R-\epsilon\mathcal{M}(U))^2+N^{-1}\nu(U)+\tilde f_\zeta (R,U)+O(\epsilon^2 W^{-3/2}),\notag
\end{align}
where
\begin{align}\label{ti-f}
\tilde f_\zeta (R,U)=&W^{-3/2}\Tr R^3\varphi_0(1+R/W^{1/2})+(\epsilon/ W^{3/2})\Tr \mathcal{M}(U) R^2\varphi_1(1+R/W^{1/2}),\\
\nu(U)=&|\zeta|^2\Tr LU^*LU/2.
\label{nu}\end{align}
Here $\varphi_0(x)$ and $\varphi_1(x)$ are some $N,W$-independent function analytic around $x=1$, whose concrete form is not important for us.
We also denote  $\hat\nu(U)$ the operator of multiplication  by $\nu$.

Operator $\mathcal{A}_\zeta$ of  (\ref{A})  takes the form
\begin{align}\label{A_zeta}
&\mathcal{A}_\zeta(R_1, U_1,R_2,U_2)=F_\zeta(R_1, U_1)B(R_1-R_2)\mathcal{Z}(R_1,R_2)F_\zeta(R_2, U_2)\big(1+O(N^{-1}W^{-1/2})\big)
\\
&F_\zeta(R, U)= e^{-2u_*^4\Tr (R-\epsilon\mathcal{M}(U))^2/W+\nu(U)/N+\tilde f(R,U)},\qquad\quad F_0(R)=F_\zeta(R, U)\Big|_{\zeta=0}.
\notag
\end{align}
We will compare $\mathcal{A}_\zeta$ with operators
\begin{align}\label{hat_A}
\mathcal{A}(R_1,R_2)=&F_0(R_1)B(R_1-R_2)\mathcal{Z}(R_1,R_2)F_0(R_2)=\mathcal{A}_\zeta(R_1,R_2)\Big|_{\zeta=0},\\
\mathcal{A}_0(R_1,R_2)=&F_0(R_1)B(R_1-R_2)F_0(R_2)\label{A_0}\\
=&e^{W^{-3/2}\Tr R_1^3\varphi_0(1+R_1/W^{1/2})}\mathcal{A}_*(R_1,R_2)e^{W^{-3/2}\Tr R_2^3\varphi_0(1+R_2/W^{1/2})}\notag\end{align}
 with $\varphi_0$ of (\ref{ti-f}) and
\begin{align}
&\mathcal{A}_*(R_1,R_2)=e^{-u_*^4\Tr R_1^2/W}B(R_1-R_2)e^{-u_*^4\Tr R_2^2/W}\label{A_*},\\\
&\hskip1.8cm=\mathcal{A}_{*1}(x_{01},x_{02})\mathcal{A}_{*1}(x_{11},x_{12})\mathcal{A}_{*1}(x_{21},x_{22})\mathcal{A}_{*1}(x_{31},x_{32}),\notag\\
&\mathcal{A}_{*1}(x,y)=\Big(\frac{u_*^2W}{\pi\lambda_* }\Big)^{1/2}e^{-2u_*^4x^2/W}e^{-2Wu^2_*(x-y)^2}e^{-2u_*^4y^2/W}.
\notag\end{align}
In (\ref{A_*})  we  represent 
 \begin{align}\label{expR}
 R_{l}=x_{0l}I_2+x_{1l}\sigma_1+x_{2l}\sigma_2+x_{3l}\sigma_3,\quad l=1,2,
 \end{align}
where $\sigma_1,\sigma_2,\sigma_3$  are the Pauli matrices.

In the following lemma we compare the eigenvalues and eigenvectors of operator $\mathcal{A}$ of (\ref{hat_A}) with those of the ``quadratic form operator" $\mathcal{A}_*$
of (\ref{A_*}) (the last one can be computed explicitly via Hermite polynomials):
\begin{lemma}\label{l:A}
Let $\{\Psi_{*\bar m}(R),\lambda_{*\bar m}\}$  be eigenvectors and eigenvalues
of the operator $\mathcal{A}_*$ of (\ref{A_*}). Then
\begin{align}
 \Psi_{*\bar m}(R)=&P_{\bar m}(R)e^{-\alpha u_*^2\Tr R^2},\quad P_{\bar m}(R)=\prod_{i=0}^3H_{m_i}(u_*(2\alpha)^{1/2} x_i)/ \kappa_{m_i}, \label{psi_k}\\
 \lambda_{*\bar m}=&\lambda_{*}^{|m|},\quad
\bar m=(m_0,m_1,m_2,m_3),\, m_i=0,1,\dots,\quad |\bar m|=\sum_{i=0}^3|m_i|.
\notag\end{align}
Here $H_m(x)$ is the $m$'th Hermite polynomial, $\kappa_m$ is a normalization factor, and $\lambda_*$, $\alpha$ are defined in (\ref{lambda_*}).

Let $E_{|\bar m|}=\mathrm{Lin}\{\Psi_{*\bar j}\}_{|\bar j|=|\bar m|}$ and $\gamma(|\bar m|)=\mathrm{dim}E_{|\bar m|}$. Then  there are   $\gamma(|\bar m|)$
 eigenvalues $\{\lambda^{(\mu)}_{|\bar m|}\}_{\mu=1}^{\gamma(|\bar m|)}$ of $\mathcal{A}$ of (\ref{hat_A}) 
 such that
\begin{align}
&|\lambda^{(\mu)}_{|\bar m|}-\lambda_{*\bar m}|\le C(|\bar m|+1)W^{-2}. \label{diff_eig}\end{align}
If  an eigenvector $\Psi^{(\mu)}_{|\bar m|}(R)$ corresponds to  $\lambda^{(\mu)}_{|\bar m|}$, then for any  integer $p>0$ there are vectors 
\\ $\Psi^{(\mu)}_{*|\bar j|}\in E_{|\bar j|}$ such that
\begin{align}
&\Psi^{(\mu)}_{|\bar m|}(R)=\Psi_{*\bar m}^{(\mu)}(R)+\sum_{s=1}^{2p-1}\sum_{|\bar j-\bar m|\le s+2}W^{-s/2}\Psi^{(\mu)}_{*\bar j}(R)+O(W^{-p}).
\label{de_Psi.0}
\end{align}

Consider also  a ``deformed" operator 
\begin{align}\label{A_sh}
\mathcal{A}_{\mathcal{M}}(R_1,R_2)=&\big(1+(\epsilon/W)\Tr \mathcal{M}\phi(R_1)\big)\mathcal{A}(R_1,R_2)
\big(1+(\epsilon/W)\Tr \mathcal{M}\phi(R_2)\big)
\end{align}
with $\epsilon=(W/N)^{1/2}$ and some analytic $\phi(R)$, and denote $\lambda_{\max}(\mathcal{A}_{\mathcal{M}})$ the maximum  eigenvalue of  
$\mathcal{A}_{\mathcal{M}}$. Then 
there is some   fixed  $k$ such that for any matrix 
$\mathcal{M}$ with $\|\mathcal{M}\|\le C$ (with an arbitrary absolute $C$) and $\Tr \mathcal{M}=0$ we have
\begin{align}\label{norm_A}
|\lambda_{\max}(\mathcal{A}_{\mathcal{M}})|\le \lambda_{\max}(\mathcal{A})(1+ k C^2/N),
\end{align}
\end{lemma}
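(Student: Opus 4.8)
The plan is to estimate $\lambda_{\max}(\mathcal{A}_{\mathcal{M}})$ by treating the rank-one-type perturbation factors $1+(\epsilon/W)\Tr\mathcal{M}\phi(R_i)$ as a small symmetric deformation of $\mathcal{A}$ and expanding $\lambda_{\max}$ to second order in $\epsilon$. Since $\epsilon^2=W/N$, a second-order term will produce exactly the claimed $O(1/N)$ correction, while the crucial point to extract is that the \emph{first-order} term vanishes because $\Tr\mathcal{M}=0$. First I would write $\mathcal{A}_{\mathcal{M}}=(I+\epsilon T)\mathcal{A}(I+\epsilon T)$ where $T$ is multiplication by $W^{-1}\Tr\mathcal{M}\phi(R)$, a bounded self-adjoint operator with $\|T\|\le C'/W$ uniformly (using $\|\mathcal{M}\|\le C$, $\|R_i\|\le\log W+o(1)$ on the relevant subspace, and analyticity of $\phi$ near $R=0$). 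Because $\mathcal{A}$ is a positive compact operator (it is a product of Gaussian-type kernels, cf.\ (\ref{A_*})) with a simple top eigenvalue $\lambda_{\max}(\mathcal{A})=\lambda_{\max}(\mathcal{A}_*)(1+O(W^{-2}))$ by Lemma \ref{l:A}, with normalized top eigenvector $\Psi_0$, standard analytic perturbation theory gives
\begin{align}\notag
\lambda_{\max}(\mathcal{A}_{\mathcal{M}})=\lambda_{\max}(\mathcal{A})+2\epsilon\,(\mathcal{A}\Psi_0,T\Psi_0)+\epsilon^2 r+O(\epsilon^3\|T\|^3),
\end{align}
where $r$ is controlled by the second-order perturbation formula and the spectral gap of $\mathcal{A}$.

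The key step is to show the first-order coefficient $(\mathcal{A}\Psi_0,T\Psi_0)=\lambda_{\max}(\mathcal{A})(\Psi_0,T\Psi_0)$ is $O(W^{-2})$ rather than $O(W^{-1})$. Here I would use the structure of $\Psi_0$ from Lemma \ref{l:A}: to leading order $\Psi_0=\Psi_{*\bar 0}(R)=\kappa\,e^{-\alpha u_*^2\Tr R^2}$ is even and Gaussian, so $(\Psi_0,T\Psi_0)$ is, up to $O(W^{-1/2})$ corrections coming from the expansion (\ref{de_Psi.0}), equal to $W^{-1}$ times the Gaussian average of $\Tr\mathcal{M}\phi(R)$. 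Expanding $\phi(R)=\phi(0)+\phi'(0)R+\dots$ and using $\Tr\mathcal{M}=0$, the constant term $\phi(0)\Tr\mathcal{M}$ drops out; the linear term $\Tr(\mathcal{M}\phi'(0)R)$ has zero Gaussian average by parity; the quadratic term contributes $O(1)$, so overall $(\Psi_0,T\Psi_0)=O(W^{-1}\cdot W^{-1})\cdot(\text{corrections})$ — more precisely, after accounting for the $W^{-1/2}$-order corrections to $\Psi_0$, the first-order coefficient is $O(W^{-2})$, hence $\epsilon\,(\mathcal{A}\Psi_0,T\Psi_0)=O((W/N)^{1/2}W^{-2})=O(N^{-1/2}W^{-3/2})\ll 1/N$ in the regimes of the theorems. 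Then the $\epsilon^2$ term is bounded by $\epsilon^2\|T\|^2\cdot(\text{gap})^{-1}\cdot\lambda_{\max}(\mathcal{A})\le C\,(W/N)(C'/W)^2\,\lambda_{\max}(\mathcal{A})\le (kC^2/N)\lambda_{\max}(\mathcal{A})$ after absorbing constants, which is precisely (\ref{norm_A}); the cubic remainder is even smaller.

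The main obstacle is making the perturbation-theoretic bound on $r$ (the second-order coefficient) uniform in $W$: one needs a lower bound on the spectral gap of $\mathcal{A}$ between $\lambda_{\max}(\mathcal{A})$ and the rest of the spectrum. From Lemma \ref{l:A}, $\lambda_{*\bar 0}=1$ and $\lambda_{*\bar m}=\lambda_*^{|\bar m|}$ with $\lambda_*=1-O(1/W)$, so the gap is only of order $1/W$, not order one. This is exactly why the naive bound $|r|\le\|T\|^2/\mathrm{gap}$ would give $O(\epsilon^2/(W^2\cdot W^{-1}))=O(1/N)$, which is still fine — the gap being $\asymp 1/W$ combined with $\|T\|\asymp 1/W$ yields $|\epsilon^2 r|\lesssim (W/N)(1/W^2)(W)=1/N$, so the constant $k$ can indeed be chosen absolute. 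I would therefore carefully track: (i) the resolvent bound $\|(\mathcal{A}-\lambda_{\max}(\mathcal{A}))^{-1}(I-P_0)\|\le C W$ where $P_0$ is the spectral projection onto $\Psi_0$, which follows from $\lambda_{\max}(\mathcal{A})-\lambda_{*\bar 1}\ge c/W$ via (\ref{diff_eig}); and (ii) that the self-adjointness of both $\mathcal{A}$ and $T$ keeps all correction terms real, so there is no spurious growth of $|\lambda_{\max}|$ from a complex shift. Combining (i), (ii), the vanishing first-order term, and $\|T\|\le C'/W$ gives the estimate (\ref{norm_A}) with an absolute constant $k$.
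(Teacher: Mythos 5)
Your proposal addresses only the last assertion of the lemma, the bound (\ref{norm_A}); the first three assertions --- the explicit diagonalization (\ref{psi_k}) of $\mathcal{A}_*$, the eigenvalue comparison (\ref{diff_eig}) between $\mathcal{A}$ and $\mathcal{A}_*$, and the eigenvector expansion (\ref{de_Psi.0}) --- are not proved but are \emph{used} (you invoke ``Lemma \ref{l:A}'' for the location of $\lambda_{\max}(\mathcal{A})$ and for the $c/W$ spectral gap), which is circular. These parts are a substantial portion of the lemma: the paper proves (\ref{diff_eig}) by a Schur--complement argument whose key input is that the leading correction $\tilde A=\tilde F\mathcal{A}_*+\mathcal{A}_*\tilde F$ (with $\tilde F$ multiplication by $W^{-3/2}\Tr R^3$) has vanishing diagonal blocks $E_{\lambda_{*\bar m}}\tilde A E_{\lambda_{*\bar m}}=0$, so that the eigenvalue shift is $O(W^{-2})$ rather than $O(W^{-3/2})$; nothing in your write-up supplies this.

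For (\ref{norm_A}) itself, your second-order/gap bookkeeping ($\epsilon^2\|T\|^2\cdot\mathrm{gap}^{-1}\sim (W/N)\,W^{-2}\cdot W=1/N$) is correct and is essentially the paper's application of Proposition \ref{p:norm}. The genuine gap is in the first-order term. Your Taylor-plus-parity argument does not make the even-degree terms of $\phi$ vanish: the Gaussian average of $\Tr(\mathcal{M}\,\phi''(0)[R,R])$-type contributions is not killed by parity, and your own accounting leaves a quadratic contribution of order $O(1)$, after which the claimed $O(W^{-2})$ for $(\Psi_0,T\Psi_0)$ is asserted rather than derived. Moreover, even granting $(\Psi_0,T\Psi_0)=O(W^{-2})$, the resulting first-order contribution $\epsilon\cdot O(W^{-2})=O(N^{-1/2}W^{-3/2})$ is \emph{not} $o(1/N)$ when $W\le N^{1/3}$, which is allowed in Theorem \ref{t:locGin} ($W>N^{\varepsilon_0}$ with arbitrary $\varepsilon_0>0$); so your estimate does not close in the full parameter range. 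The correct mechanism, used in the paper via (\ref{int_2}), is that $\mathcal{A}$ commutes with the conjugation action $R\mapsto VRV^*$, hence its top eigenvector $\Psi_{\bar 0}$ is exactly unitarily invariant, and then
\begin{equation*}
\int \Psi_{\bar 0}^2(R)\,\Tr\big(\mathcal{M}\,\tilde\varphi(R)\big)\,dR
=\int \Psi_{\bar 0}^2(R)\,\Tr\big(V\mathcal{M}V^*\,\tilde\varphi(R)\big)\,dV\,dR=0
\end{equation*}
for \emph{every} $\tilde\varphi$, since $\int V\mathcal{M}V^*\,dV=\tfrac{1}{2}(\Tr\mathcal{M})I_2=0$. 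This kills the linear-in-$\epsilon$ term identically, not merely up to $O(W^{-2})$, and is what makes the final bound $1+kC^2/N$ hold uniformly. You should replace the parity argument by this invariance argument and supply proofs (or at least the Schur-complement skeleton) for (\ref{psi_k})--(\ref{de_Psi.0}).
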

The proof of the lemma is given in Appendix.

Next we want to show that the main contribution to $\Theta(z_1,z_2)$ is given by the projection of $\mathcal K_\zeta$ on its first eigenvectors
concentrated in $\Omega_W$.

\section{Analysis of $\mathcal{A}$ and $K_{R_1,R_2}$}\label{s:4}
First we prove that $\mathcal{Z}(R_1,R_2)$ in operator $\mathcal A$ of (\ref{hat_A}) can be changed by $1$ with the small correction.
We also want to compare $\mathcal{Z}(R_1,R_2)$ with ``shifted" $\mathcal{Z}(R_1-\epsilon\mathcal{M},R_2-\epsilon\mathcal{M})$.
\begin{lemma}\label{l:Z_0} 

Given $\mathcal{Z}(R_1,R_2)$ of the form (\ref{cal_Z}) and
 $\Psi(R)\in \mathrm{Lin}\{\Psi_{*\bar k}\}_{|\bar k|\le m}$  ($m\ge 0$), we have
\begin{align}\label{act_D}
&\int \mathcal{A}_0(R_1,R_2)\Big(\mathcal{Z}(R_1,R_2)-1\Big) \Psi(R_2)dR_2=O((m+1)W^{-2}\|\Psi\|),
\end{align}
where   $\mathcal{A}_0(R_1,R_2)$  was defined in  (\ref{A_0}).

In addition, for every fixed $2\times 2$ matrix $\mathcal{M}=\mathcal{M}^*$ and $\epsilon=(W/N)^{1/2}$
\begin{align}\label{diff_Z}
\int\mathcal{A}_0(R_1,R_2)\Big(\mathcal{Z}(R_1,R_2) -\mathcal{Z}(R_1-\epsilon\mathcal{M},R_2-\epsilon\mathcal{M})\Big)\Psi(R_2)dR_2
=O(\epsilon W^{-2}\|\Psi\|).
\end{align}

\end{lemma}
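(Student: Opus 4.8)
\textbf{Proof proposal for Lemma \ref{l:Z_0}.}

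The plan is to exploit the fact that $\mathcal{Z}(R_1,R_2)$, as given by (\ref{cal_Z}), is a smooth function of $(R_1,R_2)$ that equals $1$ at $R_1=R_2=0$, so that the difference $\mathcal{Z}(R_1,R_2)-1$ is of order $\|R_1\|+\|R_2\|$ near the ``maximum surface'', together with the extra gain coming from the explicit asymptotic expansion of $Z(R_1,R_2)$ in powers of $W^{-1/2}$. First I would use the asymptotics of the unitary group integral already recorded in the paper: by (\ref{2unit}) (Harish-Chandra / Itzykson--Zuber in the $U(2)$ case), with $\mathcal{R}_i=u_*(1+R_i/W^{1/2})$, one has $Z(R_1,R_2)=1+W^{-1/2}a_1(R_1,R_2)+W^{-1}a_2(R_1,R_2)+\dots$, where the $a_j$ are polynomials in the entries of $R_1,R_2$, and — crucially — the linear term $a_1$ is linear (hence odd) in $R_1$ (resp.\ $R_2$) separately. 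Similarly, the Jacobian factors $J^{1/2}(u_*(1+R_i/W^{1/2}))=\pi^{3/2}u_*^{3}(\Tr(1+R_i/W^{1/2}))(\det(1+R_i/W^{1/2}))^{1/2}$ from (\ref{c_ch}) expand as $1+W^{-1/2}b_1(R_i)+W^{-1}b_2(R_i)+\dots$ with $b_1$ linear (odd) in $R_i$. Multiplying these three expansions gives $\mathcal{Z}(R_1,R_2)-1=W^{-1/2}(\text{odd in }R_1)+W^{-1/2}(\text{odd in }R_2)+W^{-1}(\text{polynomial})+O(W^{-3/2})$.

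The second ingredient is a parity/orthogonality argument against the operator $\mathcal{A}_0$. By (\ref{A_0})--(\ref{A_*}), $\mathcal{A}_0$ factorizes over the four Pauli coordinates $x_0,x_1,x_2,x_3$, and each one-dimensional factor $\mathcal{A}_{*1}$ is, up to the $e^{W^{-3/2}\Tr R^3\varphi_0}$ conjugation, exactly the quadratic-form operator $\mathcal{A}_*$ whose eigenfunctions are the Hermite functions $\Psi_{*\bar m}$ of (\ref{psi_k}); in particular $\mathcal{A}_0$ (modulo an $O(W^{-3/2})$ conjugation) maps even functions to even and odd to odd in each coordinate. Hence when we integrate $\mathcal{A}_0(R_1,R_2)$ against a term of $\mathcal{Z}(R_1,R_2)-1$ that is odd in $R_2$ and then pair with $\Psi(R_2)\in\mathrm{Lin}\{\Psi_{*\bar k}\}_{|\bar k|\le m}$, the $W^{-1/2}$ contribution vanishes to leading order: either the $R_2$-integral of an odd polynomial times the Hermite function times the Gaussian kernel produces Hermite coefficients shifted in parity, so after one more application of the (parity-preserving) kernel the projection onto $\Psi$ picks up another factor $W^{-1/2}$ from the $e^{W^{-3/2}\Tr R^3\varphi_0}$ correction, or — more cleanly — I would move the odd-in-$R_1$ piece to act on the left, against $\mathcal{A}_0^*$, exploiting self-adjointness of $\mathcal{A}_0$ and the same parity structure. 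Either way each surviving term carries $W^{-1}$, and the polynomial degree grows the bound by at most a factor $(m+1)$ (standard Hermite estimates: the matrix elements $(\mathcal{A}_{*1}x^j\Psi_{*k},\Psi_{*l})$ are $O((k+1)^{j/2})$ and decay in $|k-l|$). Summing the finitely many terms gives (\ref{act_D}).

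For (\ref{diff_Z}), I would apply the mean value theorem in the shift parameter: $\mathcal{Z}(R_1,R_2)-\mathcal{Z}(R_1-\epsilon\mathcal{M},R_2-\epsilon\mathcal{M})=\epsilon\int_0^1\frac{d}{dt}\mathcal{Z}(R_1-t\epsilon\mathcal{M},R_2-t\epsilon\mathcal{M})\,dt$, so it suffices to bound $\int\mathcal{A}_0(R_1,R_2)\,(\mathcal{M}\cdot\nabla_{R_2}+\mathcal{M}\cdot\nabla_{R_1})\mathcal{Z}(R_1-t\epsilon\mathcal{M},R_2-t\epsilon\mathcal{M})\,\Psi(R_2)\,dR_2$ uniformly in $t\in[0,1]$. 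Differentiating the expansion of $\mathcal{Z}$ kills the constant $1$ and turns the odd $W^{-1/2}$-term into a \emph{constant} $W^{-1/2}$-term — but now the constant is $\Tr\mathcal{M}$ times a numerical coefficient, and by hypothesis one only needs $\mathcal{M}=\mathcal{M}^*$, so I would instead note that $\nabla\mathcal{Z}$ evaluated at a shifted point is still odd-plus-$O(W^{-1/2})$ in the \emph{shifted} variable, and repeat the parity argument after absorbing the (bounded, polynomially controlled) shift by $t\epsilon\mathcal{M}$ into the Hermite estimates; this costs one extra $W^{-1/2}$ exactly as before, giving the stated $O(\epsilon W^{-2}\|\Psi\|)$.

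The main obstacle I anticipate is the parity bookkeeping: $\mathcal{A}_0$ is only parity-preserving up to the $e^{\pm W^{-3/2}\Tr R^3\varphi_0}$ conjugating factors and the $O(W^{-2})$ discrepancy between $\mathcal{A}$ and $\mathcal{A}_*$ from Lemma \ref{l:A}, so one must track carefully that each ``odd'' contribution really does cost a full extra power $W^{-1/2}$ (not just $W^{-3/2}$ versus $W^{-1}$ confusion) and that the polynomial degrees entering the Hermite matrix-element bounds only produce the linear factor $(m+1)$ rather than something worse. Making the cancellation argument robust against these lower-order corrections — essentially an induction on the order of the expansion, peeling off one $W^{-1/2}$ at a time — is the technical heart of the lemma; everything else is the explicit $U(2)$ integral (\ref{2unit}) and routine Gaussian/Hermite estimates.
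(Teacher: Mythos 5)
There is a genuine gap, and it sits exactly where the content of the lemma lies. Your expansion of $\mathcal{Z}(R_1,R_2)$ leaves an unexplained ``$W^{-1}(\text{polynomial})$'' term, and nothing in your argument reduces its contribution from $O(W^{-1}\|\Psi\|)$ to the claimed $O((m+1)W^{-2}\|\Psi\|)$. The paper's proof shows that $\mathcal{Z}(R_1,R_2)-1=\Delta(R_1,R_2)$ with two explicit $O(W^{-1})$ pieces (see (\ref{Delta})): the commutator term $\frac{u_*^2}{2\Tr S}\Tr[R_2,R_1][R_1,R_2]$ coming from $\frac12\langle\rho_2^2\rangle$, and $-\frac{\Tr(R_1^\circ+R_2^\circ)^2}{4W\Tr S}$ coming from the Jacobians, $Z_0$ and $\frac12\langle\rho_1^2\rangle$. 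The whole point is that after integration against the Gaussian kernel $\mathcal{A}_*(R_1,R_2)$ — using $[R_1,R_2]=[R_1,R_2-\mu R_1]$ and integration by parts for the first piece, and $R_2\approx R_1$ under the kernel for the second — both produce the \emph{same} leading contribution $\frac{\Tr(R_1^\circ)^2}{W\Tr S}\Psi(R_1)$ with opposite signs, so they cancel and only $O((m+1)W^{-2}\|\Psi\|)$ survives. This cancellation is not a parity phenomenon (both terms are even in $(R_1,R_2)$), and without it the lemma is false at the stated order; your proof never identifies, let alone cancels, these terms.

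The parity mechanism you propose for the putative $W^{-1/2}$ odd terms is also not sound for the statement being proved. Estimate (\ref{act_D}) is an $L^2(dR_1)$ bound on the \emph{output function} $\int\mathcal{A}_0(R_1,R_2)(\mathcal{Z}-1)\Psi(R_2)\,dR_2$, not on a matrix element against a test function of fixed parity: if $\mathcal{Z}-1$ genuinely contained $W^{-1/2}$ times an odd linear polynomial in $R_2$, the integral would simply land on Hermite functions of shifted parity with coefficients of order $W^{-1/2}$, and its $L^2$ norm would be $O(W^{-1/2}\|\Psi\|)$ — parity preservation of $\mathcal{A}_0$ does not make it small. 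What actually happens is that the $O(W^{-1/2})$ terms cancel \emph{algebraically inside $\mathcal{Z}$ itself}: the linear-in-$\Tr R$ contributions of $J^{1/2}(u_*(1+R_1/W^{1/2}))J^{1/2}(u_*(1+R_2/W^{1/2}))$ cancel against those of $(u_*^2\Tr S)^{-2}$ in $G(R_1,R_2)$, and $\langle\rho_1\rangle=\langle\rho_2\rangle=0$ by the symmetry of the $U(2)$ measure in the parametrization (\ref{U}). So both halves of your argument — the disposal of the $W^{-1/2}$ terms and the (absent) treatment of the $W^{-1}$ terms — need to be replaced by the explicit computation of $\Delta$ and the integration-by-parts cancellation. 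Your sketch for (\ref{diff_Z}) inherits the same problem; once one knows the effective expansion of $\mathcal{Z}$ starts at $W^{-2}$ with coefficients depending on $R_1,R_2$ only through traces of polynomials, the $\epsilon$-shift bound follows immediately, which is how the paper concludes.
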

\textit{Proof}. Notice  that   $\mathcal{A}_0$ differs of $\mathcal{A}_*$ only by the  factors $(1+W^{-3/2}c_3\Tr R^3+O(W^{-2}))$ (see (\ref{A_*})).
In addition, if $\Psi(R)\in P_m\mathcal H$,  then $(1\pm c_3\Tr R^3/W^{3/2})\Psi(R)\in P_{m+3}\mathcal H$. Hence,
it is sufficient to prove (\ref{act_D}) for $\mathcal{A}_*$.

 We prove first that 
for $\|R_1-R_2\|\le W^{-1/2}\log W$ we have
\begin{align}\notag
&\mathcal{Z}(R_1,R_2)=1+\Delta(R_1,R_2),\\ 
\Delta(R_1,R_2)=&\frac{u_*^2}{2\Tr S}\Tr[R_2,R_1][R_1,R_2]-\frac{\Tr(R_1^\circ+R_2^\circ)^2/4}{W\Tr S}+O(W^{-2}),\label{Delta}\\
&S=\frac{1}{2}\{1+R_1/W^{1/2},1+R_2/W^{1/2}\}.
\label{S}\end{align}
Here and below for arbitrary matrices $A,B$ we use the notations
\begin{align}\label{A^circ}
 \{A,B\}=AB+BA,\quad
 [A,B]=AB-BA,\quad A^\circ=A-\frac{\Tr A}{2} I_2.
\end{align}
Indeed, in order to obtain (\ref{act_D}), we  need to integrate over $R_2$ the kernel $\mathcal{A}_*(R_1,R_2)\mathcal{Z}(R_1,R_2)$  multiplied by the function of the form
\[\Psi(R_2)=e^{-u_*^2\alpha\Tr R^2}p(R_2),\quad \mathrm{deg}\,p(R_2)\le m
\]
with $\alpha$ of (\ref{lambda_*}).
Complete the square at the exponent:
\begin{align}\notag
\mathcal{A}_*(R_1,R_2)e^{-u_*^2\alpha\Tr R_2^2}=&\Big(\frac{u_*^2W}{\pi\lambda_* }\Big)^2\exp\Big\{-Wu_*^2\Tr(R_1-R_2)^2-\alpha u_*^2\Tr R_2^2-u_*^4\Tr (R_1^2+R_2^2)/W\Big\}\\
=&\Big(\frac{u_*^2W}{\pi\lambda_* }\Big)^2\exp\Big\{-u_*^2(W+\alpha+ u_*^2/W)\Tr(R_2-\mu R_1)^2-C\Tr R_1^2\Big\},\label{repr_A}\\
\mu=&W/(W+\alpha+u_*^2/W)=1+O(W^{-1}).\notag
\end{align}
The constant $C$ here is not important since we integrate over $R_2$.  

Take $\Psi(R)=p(R)e^{-\alpha u_*^2\Tr R^2}$ with $p(R)$ being a polynomial of entries  of
$R$ of degree at most $m$. Given (\ref{Delta}), we substitute the r.h.s. of (\ref{Delta}) to the l.h.s. of (\ref{act_D}). 
Using (\ref{repr_A}), integrating by parts with respect to $R_2$, and taking 
into account that  the derivative  of $\Tr S(R_1,R_2)$ will give us an additional factor $W^{-1/2}$ (see (\ref{S})), we obtain for the first term of the r.h.s. of (\ref{Delta}):
\begin{align}\label{int_com}
&\frac{u_*^2}{2}\int \mathcal{A}_*(R_1,R_2)e^{-\alpha u_*^2\Tr R_2^2}\Tr[R_2-\mu R_1,R_1][R_1,R_2-\mu R_1]p(R_2)\Tr ^{-1}S(R_1,R_2)dR_2\\
&\quad =\frac{\Tr (R_1^\circ)^2}{W\Tr S(R_1,R_1)}\Psi(R_1)+O(\sqrt mW^{-2}\|\Psi\|)+O(W^{-5/2}\|\Psi\|).
\notag\end{align}
Here we used that for the normalized Hermite polynomial $(m!)^{-1/2}H_m(x)$ we have
\begin{align}\label{der_Herm}
(m!)^{-1/2}H_m'(x)=\sqrt m ((m-1)!)^{-1/2}H_{m-1}(x).
\end{align}
For the second term of the r.h.s. of (\ref{Delta}) we also get
\begin{align*}
\int \mathcal{A}_*(R_1,R_2)e^{-\alpha u_*^2\Tr R_2^2}\frac{\Tr(R_1^\circ+R_2^\circ)^2/4}{W\Tr S(R_1,R_2)}p(R_2)dR_2=
\frac{\Tr(R_1^\circ)^2}{W\Tr S(R_1,R_1)}\Psi(R_1)\\
+O(mW^{-2}\|\Psi\|)+O(W^{-5/2}\|\Psi\|),
\end{align*}
and so the integral with the r.h.s. of (\ref{Delta}) gives $O(mW^{-2}\|\Psi\|)$. This implies (\ref{act_D}).

Thus, we are left to prove (\ref{Delta}).
To simplify formulas below we set
\begin{align}\label{R,De}
R=(R_1+R_2)/2,\quad D= W^{1/2}(R_1-R_2)/2,\,\,\|D \|\le\log W ,\quad R_{1,2}=R\pm D /W^{1/2}.
\end{align}
Let us transform $k_{R_1,R_2}(U)$ of (\ref{K,Z}) into a more convenient form, using  notations (\ref{R,De})  and (\ref{A^circ}):
\begin{align*}\notag
k_{R_1,R_2}(U)=&(u_*W)^2\Tr((U+U^*)/2-1)\{1+R_1/W^{1/2},1+R_2/W^{1/2}\}\\
&+(u_*W)^2\Tr((U-U^*)/2)[R_1/W^{1/2},R_{2}/W^{1/2}]\\=&k_{*R_1,R_2}(U)+\rho_1+\rho_2,\notag
\end{align*}
where $S$ was defined in (\ref{S}), and
\begin{align}\label{k_*}
k_{*R_1,R_2}(U)=&(u_*W)^2\Tr S\,\Tr((U+U^*)/2-1)),\\
\label{rho}
\rho_1=&2u_*^2W^2\Tr \frac{(U+U^*)^\circ}{2} S^\circ,
\qquad\qquad\rho_2=u_*^2W\Tr \frac{(U-U^*)^\circ}{2}[R_1 , R_2].
\end{align}
Denote $\mathcal{T}(\phi)=\mathrm{diag}\{e^{i\phi/2},e^{-i\phi/2}\}$ and represent $U$ as
\begin{align}\label{U}
 U=&\mathcal{T}(\varphi)\left(
\begin{array}{cc}\cos(\theta/2)&i\sin(\theta/2)\\
i\sin(\theta/2)&\cos(\theta/2)\end{array}\right)\mathcal{T}(\psi)e^{i\gamma}\\=&
\left(\begin{array}{cc}
\cos(\theta/2)e^{i(\sigma+\gamma)}&i\sin(\theta/2)e^{i(\delta+\gamma)}
\\
i\sin(\theta/2)e^{i(-\delta +\gamma)}&\cos(\theta/2)e^{i(-\sigma+\gamma)}
\end{array}\right), \, 
 \sigma=\frac{1}{2}(\varphi+\psi),\, \delta =\frac{1}{2}(\varphi-\psi),
\notag\end{align}
where $\gamma\in [-\pi/2,\pi/2]$, $\sigma,\delta\in[-\pi,\pi]$, $\theta\in[0,\pi]$. Then
\begin{align}\notag
&\Tr((U+U^*)/2-1)=-2(1-\cos(\theta/2)\cos\sigma\cos\gamma),\\
&\frac{1}{2}(U+U^*)^\circ =-\sin\gamma\,\tilde U,\quad \tilde U=\left(
\begin{array}{cc}\cos(\theta /2)\sin\sigma&e^{i\delta }\sin(\theta/2) \\e^{-i\delta }\sin(\theta/2)
&-\cos(\theta/2)\sin\sigma\end{array}\right),\, 
\label{ti_U}\\
&\frac{1}{2}( U- U^*)^\circ=
i\cos\gamma\, \tilde U.
\notag\end{align}
Analyzing the integral with respect to $\theta$, $\varphi$, $\psi$ and $\gamma$,  we conclude that the main contributions to the integral is given 
by the range of these variables where
\begin{align*}\notag
&\sin(\theta/2),\sin\sigma,\sin\gamma\sim W^{-1},\,\, (1-\cos(\theta/2)\cos\sigma\cos\gamma)\sim W^{-2}.
\end{align*}
Hence, using these relation and taking into account that $S^\circ\sim W^{-1/2}$ and\\ $[R_1,R_2]=[R_1-R_2,R_2]\sim W^{-1/2}$, we obtain
\begin{align}
&|\rho_1|\le W^{-1/2}\log W,\quad |\rho_2|\le W^{-1/2}\log W,
\label{b_rho}\end{align}
and, thus, we can expand $\exp\{k_{R_1,R_2}(U)\}$ with respect to $\rho_1,\rho_2$.

Set
\begin{align}\label{<f>}
\left\langle f\right\rangle=&Z_0^{-1}\int dU f(U)\exp\{-2u_*^2W^2\Tr S(1-\cos(\theta/2)\cos\sigma\cos\gamma)\},\\
Z_0=&\int dU \exp\{-2u_*^2W^2\Tr S(1-\cos(\theta/2)\cos\sigma\cos\gamma)\}\notag\\=&(2\pi u_*^2W^2\Tr S )^{-2}(1+O(W^{-2})).
\label{Z_0}\end{align}
Observe that
\begin{align}\notag
& \int dU f(U)\exp\{-2u_*^2W^2\Tr S(1-\cos(\theta/2)\cos\sigma\cos\gamma)\}\\
&=\frac{1}{(2\pi)^3}\int_0^\pi \sin\theta d\theta\int_{-\pi}^\pi  d\phi d\psi d\gamma f(U)\\
&\qquad\times\exp\Big\{2u_*^2W^2\Tr S\Big(\cos(\theta/2)-1+\cos\sigma-1+\cos\gamma-1\Big)\Big\}
(1+O(W^{-2}))\notag\\
&\Rightarrow \left\langle f_1(\theta)f_2(\sigma,\delta )f_3(\gamma)\right\rangle= \left\langle f_1(\theta)\right\rangle 
\left\langle f_2(\sigma,\delta)\right\rangle  \left\langle f_3(\gamma)\right\rangle +O(W^{-2}).
\label{fact}
\end{align}
which implies
\begin{align}\label{int}
&\left\langle \sin^2\gamma\right\rangle =(2u_*^2W^2\Tr S)^{-1}(1+O(W^{-2})),\\
&\left\langle\sin^2\sigma\right\rangle=(2u_*^2W^2\Tr S)^{-1}(1+O(W^{-2})),\notag\\
&\left\langle\sin^2(\theta/2)\right\rangle
=(u_*^2W^2\Tr S)^{-1}(1+O(W^{-2}))\notag\\
& \left\langle \sin^{2\alpha+1}\gamma\right\rangle = \left\langle \sin^{2\alpha+1}2\gamma\right\rangle= \left\langle \sin^{2\alpha+1}\sigma\right\rangle=\left\langle \sin^{2\alpha+1}2\sigma\right\rangle =0,\quad\alpha=0,1\notag\\
 & \left\langle e^{ ik\delta }\right\rangle\sim W^{-2},\,k=\pm1,\pm2,\dots.
\notag\end{align}
Notice that by (\ref{int}), (\ref{fact}),  and (\ref{Z_0}) we have
\begin{align}\notag 
&\left\langle\rho_1^{\alpha}\rho_2^{s-\alpha}\right\rangle =O(W^{-2}),\quad (s\ge 3,\,0\le \alpha\le s)\\
\Rightarrow&Z=Z_0\left\langle(1+\rho_1^2/2+\rho_2^2/2+O(W^{-2}))\right\rangle, \notag\\
\Rightarrow &\mathcal{Z}=G(R_1,R_2)
\left\langle1+\rho_1^2/2+\rho_2^2/2\right\rangle (1+O(W^{-2}))\,\,\text{with}\label{calZxy}\\
&G(R_1,R_2)=J^{1/2}(u_*(1+W^{-1/2}R_1)J^{1/2}(u_*(1+W^{-1/2}R_2))(u_*^2\Tr S)^{-2}.
\notag\end{align}
Using (\ref{int}), (\ref{fact}) and  the form of $\tilde U$ (see (\ref{ti_U})), we get
\begin{align*}
\left\langle\Tr\tilde U A^\circ \Tr\tilde U B^\circ\right\rangle=(u_*^2W^2\Tr S)^{-1}\Tr (A^\circ B^\circ)(1+O(W^{-2})).
\end{align*}
Thus,
\begin{align}\label{rho^2}
\frac{1}{2}\left\langle\rho_1^2\right\rangle=&2(u_*^2W^2)^2\left\langle\sin^2\gamma \Tr^2\tilde US^\circ\right\rangle=\frac{\Tr(S^\circ)^2}{\Tr^2 S}
(1+O(W^{-2}))\\
\frac{1}{2}\left\langle\rho_2^2\right\rangle=&(u_*^2W)^2\left\langle\cos^2\gamma \Tr^2 \tilde U (i[R_1,R_2])\right\rangle=\frac{u_*^2}{2\Tr S}
\Tr [R_1,R_2][R_2,R_1](1+O(W^{-2})).
\notag\end{align}
Then, introducing notations
\begin{align*}
&x=W^{-1/2}\Tr R,\quad y^2=W^{-1}\Tr (R^\circ)^2,
\end{align*}
 we obtain 
\begin{align*}\notag
&J^{1/2}(u_*(1+W^{-1/2}R_1)J^{1/2}(u_*(1+W^{-1/2}R_2))=J(u_*(1+W^{-1/2}R))+O(W^{-2})\\
&=4u_*^4(1+x+x^2/4)(1+x+x^2/4-y^2/2)+O(W^{-2}),\notag\\
&\Tr S=2(1+x+x^2/4+y^2/2)+O(W^{-2}),\\& \Tr (S^\circ)^2=4y^2(1+x)+O(W^{-2}),\notag\\
&G(R_1,R_2)=\frac{(1+x+x^2/4)(1+x+x^2/4-y^2/2)}{(1+x+x^2/4+y^2/2)^2}=1-\frac{3}{2}y^2(1+x/2)^{-2}+O(W^{-2}).
\end{align*}
The above relations combined with (\ref{calZxy}) and (\ref{rho^2}) finish the proof of (\ref{Delta}). 

To prove (\ref{diff_Z}), notice that it follows from the above that there are asymptotic expansion of coefficients of $\mathcal{Z}(R_1,R_2)$ with
respect to $W^{-1}$. Since these expansions starts from $W^{-2}$ and the coefficients depend on $R_1,R_2$ through traces of some polynomials of
$R_1,R_2$, we conclude that (\ref{diff_Z})
is true.

$\square$

Now we are going to study the ``unitary part" $K_{R_1,R_2}$ of  operator $\mathcal{K}_\zeta$ (see (\ref{K,Z})).
Consider some $M\gg 1$ and set
\begin{align}\label{E^ell}
\mathcal{E}^{(\ell)}=\hbox{Lin}\{t^{(\ell)}_{0p}(U)\}_{p=-\ell}^\ell,\quad
\mathcal{E}_{M}=
\cup_{\ell=0}^{M}\mathcal{E}^{(\ell)},\quad \mathcal{E}=\cup_{\ell} \mathcal{E}^{(\ell)}.
\end{align}
Here $\{t_{mp}^{(\ell)}(U)\}_{m,p=-\ell}^\ell$ are the coefficients of
 the $\ell$'th irreducible representation $T^{(\ell)}(U)$ of $SU(2)$.
We denote also by $\hat{\mathcal{E}}_{M}$ the orthogonal projection on $\mathcal{E}_{M}$, and by $\hat{\mathcal{E}}^{(\ell)}$ the orthogonal projection
on $\mathcal{E}^{(\ell)}$.

Since the function $g$ of  (\ref{f})  and $\mathcal{K}_\zeta$ depend on $U$ only  via $L_{U^*}$, they do not depend on $\det U$ and $\phi$ in (\ref{U}).
 Hence, $\mathcal{K}_\zeta$  can be considered as an operator acting in $\mathcal H_0\otimes\mathcal{E}$ (recall that $\mathcal{H}_0$
 means the $L^2$-space on all positive $2\times 2$ matrices). Moreover,
since  the kernel of $K_{R_1,R_2}$  depends on $U_1U_2^* $,  the operator commutes with 
all ``shift operators" and   each $\mathcal{E}^{(\ell)}$   reduces the operator $K_{R_1,R_2}(U_1U_2^*)$. 

\begin{lemma}\label{l:KPsi}
Given operator $K_{R_1,R_2}$ with a kernel (\ref{K,Z}), we have for $\|R_1-R_2\|\le W^{-1/2}\log W$ and $\ell\le W^{3/4}\log^2W$:
\begin{align}\label{Kt.0}
K_{R_1,R_2}t^{(\ell)}_{0k}=&\tilde\lambda_{\ell}t^{(\ell)}_{0k}
+b^{(\ell)}_{k+1}t^{(\ell)}_{0k+1}+b^{(\ell)}_{k}t^{(\ell)}_{0k-1}
+O(\ell^2\log^2W/W^3),\\
\tilde\lambda_{\ell}=&\lambda_\ell+O(\ell^2/W^2)\big(O(R_1/W^{1/2})+O(R_2/W^{1/2})\big),\notag\\
b^{(\ell)}_{k}=&d^{(\ell)}_{k}(\ell/W) [R_2-R_1,R_1]_{12}+O(\ell W^{-5/2}\log^3 W),
\notag
\end{align}
where $d^{(\ell)}_{k}$ are some bounded constants which are not important for us, and
\begin{equation}\label{lam_l}
\lambda_\ell=1-\ell(\ell+1)/8(u_*W)^{2}.
\end{equation}
Moreover, for any function   $\Psi_h(R,U)=\Psi(R)h(U)$
with  $\Psi(R)\in \mathcal {H}_L$ (see (\ref{P_L})) and  \\
  $h\in\mathcal{E}^{(\ell)},\,\ell\le cW^{3/4},\|h\|=1$ we have
\begin{align}
\Big\|(\mathcal{K}_0\Psi)(U_1,R_1)-\lambda_{\ell}h(U_1)( \mathcal{A}\Psi)(R_1)\Big\|
\le C\|\Psi\|(W^{-1/2}(\ell/W)^2+L \ell/W^2),
\label{KPsi.1}\end{align}
where $\mathcal{K}_0=\mathcal{K}_\zeta\Big|_{\zeta=0}$ and $\mathcal{A}$ was defined in (\ref{hat_A})
 If $\Psi_{0,h}(R,U)=\Psi_{ 0}(R)h(U)$, where $\Psi_0$ is an eigenvector of $\mathcal{A}$ corresponding to $\lambda_{\max}$
  and $h\in\mathcal{E}^{(\ell)},\,\ell\le cW^{3/4},\|h\|=1$, then
\begin{align}
\Big\|(\mathcal{K}_0\Psi_{0,h})-\lambda_\ell\lambda_{\max} \Psi_{0,h}\Big\|
= O(W^{-1/2}(\ell/W)^2).
\label{KPsi0}\end{align}

In addition, for every fixed $2\times 2$ matrix $\mathcal{M}=\mathcal{M}^*$, $\epsilon=(W/N)^{1/2}$, 
$h\in\mathcal{E}_{M},\,M\le W^{1/2}/L$, and $\Psi \in \mathcal{H}_L$
\begin{align}\label{diff_K_eps}
\int\mathcal{A}(R_1,R_2) \Big((K_{R_1,R_2}h)(U_1)- (K_{R_1-\epsilon\mathcal{M},R_2-\epsilon\mathcal{M}}h)(U_1)\Big)&\Psi(R_2)dR_2\\
=&O(\epsilon W^{-2}ML\|\Psi\|),
\notag\end{align}
and for $\ell>W^{3/4}\log W$
\begin{align}\label{b_K_R}
\mathcal{E}^{(\ell)}K_{R_1,R_2}\mathcal{E}^{(\ell)}\le
1-CW^{-1/2}\log^2W
\end{align}

\end{lemma}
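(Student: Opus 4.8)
The plan is to analyze $K_{R_1,R_2}$ by harmonic analysis on $SU(2)$, treating the leading part $k_{*R_1,R_2}$ of (\ref{k_*}) --- which depends on the group variable only through its conjugacy class --- as the principal term and the remainders $\rho_1,\rho_2$ of (\ref{rho}), small by (\ref{b_rho}), as perturbations, and then to feed the resulting expansion into $\mathcal{K}_0=\mathcal{K}_\zeta|_{\zeta=0}$, whose kernel factorizes as $\mathcal{A}(R_1,R_2)K_{R_1,R_2}(U_2^*U_1)$ with $\mathcal{A}$ of (\ref{hat_A}) (since $f_\zeta$ loses its $U$-dependence at $\zeta=0$). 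The operator with kernel $Z_0^{-1}e^{k_{*R_1,R_2}(U_2^*U_1)}$ is convolution on $SU(2)$ by the class function $\phi(V)=Z_0^{-1}\exp\{(u_*W)^2\Tr S\,\Tr((V+V^*)/2-I)\}$, which by Schur's lemma acts on each block $\mathcal{E}^{(\ell)}$ of (\ref{E^ell}) as the scalar $\lambda_\ell=(\dim\ell)^{-1}\int\phi(V)\chi^{(\ell)}(V)\,dV$ (the $SU(2)$ characters being real). Since $\phi$ concentrates near $V=I$ at scale $(W^2\Tr S)^{-1/2}$ and $\Tr((V+V^*)/2-I)$ equals, near the identity, a negative-definite quadratic form in the Lie-algebra coordinates ($\propto\Tr X^2$ for $V=e^X$), the restriction of the convolution to low weights behaves like the heat semigroup of $SU(2)$, whose generator has eigenvalue the Casimir value $\ell(\ell+1)$ on $\mathcal{E}^{(\ell)}$; combined with $\Tr S=2+O(W^{-1/2})$ (cf. (\ref{calZxy})) this gives (\ref{lam_l}), while the remainder $\Tr S-2\sim W^{-1/2}\Tr R$ produces the $R$-dependent correction to $\tilde\lambda_\ell$ recorded in (\ref{Kt.0}). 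The normalization $Z(R_1,R_2)$ of (\ref{K,Z}) differs from $Z_0$ only by the $O(\rho^2)$ terms already evaluated in (\ref{rho^2}), which do not leave the $\ell$-block and are absorbed in the stated errors.

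Next I expand $e^{k}=e^{k_*}\big(1+\rho_1+\rho_2+\tfrac12(\rho_1+\rho_2)^2+\cdots\big)$, which is legitimate for $\|R_1-R_2\|\le W^{-1/2}\log W$ by (\ref{b_rho}). By the explicit form (\ref{ti_U}), $\rho_1$ and $\rho_2$ are, up to scalars in $\theta,\gamma$, linear in the matrix $\tilde U$, whose entries carry weights $0,\pm1$; hence, within the block $\mathcal{E}^{(\ell)}$ (which reduces $K_{R_1,R_2}$ by the discussion preceding the lemma), convolution against $e^{k_*}\rho_j$ maps $t^{(\ell)}_{0k}$ into $\mathrm{Lin}\{t^{(\ell)}_{0,k+1},t^{(\ell)}_{0,k-1}\}$ and produces the coefficients $b^{(\ell)}_k$ of (\ref{Kt.0}); their size $O\big((\ell/W)\|[R_2-R_1,R_1]\|\big)$ follows from $|\rho_j|\lesssim W^{-1/2}\log W$ together with the moment identities (\ref{int})--(\ref{fact}). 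The $O(\rho^2)$ and higher terms either reproduce a diagonal correction already contained in $\tilde\lambda_\ell$ or contribute the residual $O(\ell^2\log^2W/W^3)$; the restriction $\ell\le W^{3/4}\log^2W$ is precisely what keeps $\ell\times(\text{scale of }\rho)$ and $\ell^2\times(\text{scale of }\rho^2)$ small enough for the expansion to close, which proves (\ref{Kt.0}).

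For (\ref{KPsi.1}), using the factorization above we have, for $\Psi_h=\Psi(R)h(U)$ with $h\in\mathcal{E}^{(\ell)}$,
\[
(\mathcal{K}_0\Psi_h)(R_1,U_1)=\int\mathcal{A}(R_1,R_2)\,\Psi(R_2)\,(K_{R_1,R_2}h)(U_1)\,dR_2 ,
\]
and inserting (\ref{Kt.0}) the main term is $\lambda_\ell\,h(U_1)\,(\mathcal{A}\Psi)(R_1)$. Replacing $\tilde\lambda_\ell$ by the constant $\lambda_\ell$ and discarding the off-diagonal $b^{(\ell)}$-terms cost errors of the sizes on the right-hand side of (\ref{KPsi.1}): the first after estimating $\|R_j\Psi\|$ on $\mathcal{H}_L$, the second after integrating the factor $[R_2-R_1,R_1]$ against the Gaussian $B(R_1-R_2)$ inside $\mathcal{A}$, which by oddness in $R_1-R_2$ gains one power of $W^{-1/2}$ and one derivative on $\Psi$ and yields the $O(L\ell/W^2)$ term; the residual $O(\ell^2\log^2W/W^3)$ is lower order. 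For (\ref{KPsi0}) take $\Psi=\Psi_0$, the $\lambda_{\max}$-eigenvector of $\mathcal{A}$ from Lemma \ref{l:A}: then $\mathcal{A}\Psi_0=\lambda_{\max}\Psi_0$, so the $\mathcal{A}$-error disappears, and since by (\ref{de_Psi.0}) $\Psi_0$ is the Gaussian $e^{-\alpha u_*^2\Tr R^2}$ up to $O(W^{-1/2})$, the $\tilde\lambda_\ell-\lambda_\ell$ contribution is $O(W^{-1/2}(\ell/W)^2)$ while the $b^{(\ell)}$-terms, carrying $[R_2-R_1,R_1]$, integrate to a still smaller quantity against this essentially radial Gaussian (the leading Gaussian gradient being proportional to $R_1$, and $[R_1,R_1]=0$); this gives (\ref{KPsi0}).

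Finally, (\ref{diff_K_eps}) follows the scheme of Lemma \ref{l:Z_0}: $K_{R_1,R_2}$ depends on $R_1,R_2$ only through $S$ and $[R_1,R_2]$, the substitution $R_j\mapsto R_j-\epsilon\mathcal{M}$ shifts these by $O(\epsilon W^{-1/2})$, and the $R$-dependence of both $\tilde\lambda_\ell$ and $b^{(\ell)}$ enters at order $W^{-1/2}$ times $\ell/W$ or $\ell^2/W^2$; running the argument used for (\ref{diff_Z}) block by block and summing over $\ell\le M$ and over polynomial degree $\le L$ gives $O(\epsilon W^{-2}ML\|\Psi\|)$, the hypothesis $M\le W^{1/2}/L$ keeping the expansion valid. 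For (\ref{b_K_R}) a genuine upper bound is needed rather than an expansion: comparing $\chi^{(\ell)}$ near $V=I$ with a Gaussian (equivalently, estimating $\int\phi(V)\chi^{(\ell)}(V)\,dV$ directly) yields $\lambda_\ell\le 1-c\min\{\ell^2/W^2,1\}$ uniformly in $\ell$, and the perturbations $\rho_j$, the normalization and the $R$-dependence change this only by a relatively $O(W^{-1/2}\log W)$ amount, so for $\ell>W^{3/4}\log W$ we obtain $\mathcal{E}^{(\ell)}K_{R_1,R_2}\mathcal{E}^{(\ell)}\le 1-cW^{-1/2}\log^2W$. The main obstacle is the harmonic-analysis step: extracting from the $SU(2)$ convolution not only the clean Casimir form (\ref{lam_l}) but also the first-order off-diagonal coefficients $b^{(\ell)}_k$ with their correct $R_1,R_2$-dependence and with an error uniform over the whole range $\ell\le W^{3/4}\log^2W$, where the quadratic (heat-kernel) approximation of the convolution steadily degrades; this is intertwined with carefully tracking the cancellations --- oddness in $R_1-R_2$, vanishing odd moments in (\ref{int}), and the commutator structure of $b^{(\ell)}$ --- needed to reach the sharp exponents in (\ref{KPsi.1})--(\ref{diff_K_eps}) and (\ref{KPsi0}).
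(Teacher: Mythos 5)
Your proposal is correct and follows essentially the same route as the paper's proof: the splitting $k_{R_1,R_2}=k_{*R_1,R_2}+\rho_1+\rho_2$, the scalar action of the class-function part on each $\mathcal{E}^{(\ell)}$ (your Schur/Casimir formulation is the abstract version of the paper's explicit computation with the zonal functions $P_{00}^{(\ell)}$ in Proposition \ref{p:as_Leg}), the weight-$\pm1$ structure of $\tilde U$ producing the off-diagonal $b_k^{(\ell)}$, the integration by parts exploiting oddness in $R_2-\mu R_1$ for (\ref{KPsi.1}) and its vanishing for the constant polynomial in (\ref{KPsi0}), the trace-dependence argument for (\ref{diff_K_eps}), and the perturbative comparison with $K_{*R_1,R_2}$ for (\ref{b_K_R}). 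The only difference is presentational, not mathematical.
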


\textit{Proof.} 
Applying $K_{R_1,R_2}$ of (\ref{K,Z})  to $t_{0k}^{(\ell)}$ and changing the integration variable $U_2\to U_1U^*$, we obtain
\begin{align}\label{Kt}
&({K}_{R_1,R_2}t_{0k}^{(\ell)},t_{0k'}^{(\ell)})=
Z^{-1}\int\exp\{k_{R_1,R_2}(U_2^*U_1)\}t_{0p}^{(\ell)}(U_2)dU_2\\ 
&
=\sum_{s}t_{0s}^{(\ell)}(U_1)\mathcal F_{sk}^{(\ell)}({R_1,R_2}),
\notag\end{align}
 where
\begin{align}\notag
&\mathcal{F}_{sk}^{(\ell)}(R_1,R_2)=Z^{-1}\int \exp\{k_{R_1,R_2}( U)\}t_{sk}^{(\ell)}( U^*)dU.
\end{align}
Then we need to analyse
\begin{align}\label{F_ks.1}
\mathcal{F}_{sk}^{(\ell)}(R_1,R_2)=&\frac{\left\langle t_{sk}^{(\ell)}(\tilde U^*)(1+\sum (\rho_1+\rho_2)^m/m!)\right\rangle}
{\left\langle 1+\sum  (\rho_1+\rho_2)^m/m!\right\rangle},
\end{align}
where $\rho_1, \rho_2$ are defined in (\ref{rho}), and $\langle\cdot\rangle$ is defined in (\ref{<f>}).

To prove (\ref{Kt.0})  we use  formulas (see \cite{Vil:68})
\begin{align}\label{as_P}
&t_{sk}^{(\ell)}(\tilde U)=P_{sk}^{(\ell)}(\cos\theta) e^{i(s\varphi+k\psi)}=P_{sk}^{(\ell)}(\cos\theta)e^{i(s+k)\sigma+i(s-k)\delta}, 
\end{align}
and the following proposition:
\begin{proposition}\label{p:as_Leg} If $|\sin(\theta/2)|\le W^{-1}\log W$ and $\ell<W^{3/4}\log^2W$,
then there is a constant $\kappa>0$ such that
\begin{align}\label{as_Leg.0} 
P_{k+1,k}^{(\ell)}(\cos\theta)=&i(1+(k+1)/\ell)^{1/2}(1-k/\ell)^{1/2}\ell \sin(\theta/2)(1+O(\ell \sin^2(\theta/2))\\
|P_{k+q,k}^{(\ell)}(\cos\theta)|\le& (\kappa\ell\sin(\theta/2))^q,\quad q\ge 2\notag\\
P^{(\ell)}_{00}(\cos\theta)=&1-\ell(\ell+1)\sin^2(\theta/2)+O((\ell\sin(\theta/2))^3).
\label{as_Leg.1}\end{align}
In addition, 
for any $\ell>W^{3/4}\log^2W$, if $\|R_1-R_2\|\le CW^{-1}\log W$ then
\begin{align}\label{as_Leg} 
\Big|\left\langle t_{00}^{(\ell)}(U)\right\rangle\Big|\le 1- CW^{-1/2}\log^4 W/2.
\end{align}
\end{proposition}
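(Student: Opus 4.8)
The plan is to analyze the associated Legendre functions $P_{sk}^{(\ell)}(\cos\theta)$ in the regime where $\theta$ is small (of order $W^{-1}\log W$) but $\ell$ can be as large as $W^{3/4}\log^2 W$, so the natural small parameter is $\ell\sin^2(\theta/2)$, which is at most $W^{3/4}\log^2W\cdot W^{-2}\log^2 W=W^{-5/4}\log^4W\to 0$, while $\ell\sin(\theta/2)$ need not be small. First I would recall the explicit hypergeometric/Jacobi-polynomial representation of $P_{sk}^{(\ell)}(\cos\theta)$ (see \cite{Vil:68}): up to an explicit normalization constant, $P_{k+q,k}^{(\ell)}(\cos\theta)$ equals $(\sin(\theta/2))^{q}(\cos(\theta/2))^{2k+q}$ times a Jacobi polynomial $P^{(q,2k+q)}_{\ell-k-q}(\cos\theta)$, and the combinatorial prefactor produces the factors $(1+(k+1)/\ell)^{1/2}(1-k/\ell)^{1/2}$ after Stirling-type estimates on the ratios of factorials (this is where one picks up $\ell$ rather than $\ell^{1/2}$-type scalings). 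For $q=1$ one keeps the leading term of the Jacobi polynomial at $\cos\theta=1$, which is $1+O(\ell^2\sin^2(\theta/2))=1+O(\ell\sin^2(\theta/2))$ in our regime since $\ell^2\sin^2(\theta/2)\le \ell\cdot\ell\sin^2(\theta/2)\lesssim\ell\sin^2(\theta/2)$ is false in general — so more carefully one writes the Jacobi polynomial via its generating series and bounds the tail by $O(\ell\sin^2(\theta/2))$ using that each successive term gains a factor $\sim \ell\sin^2(\theta/2)$. That yields \eqref{as_Leg.0}. The bound $|P_{k+q,k}^{(\ell)}(\cos\theta)|\le(\kappa\ell\sin(\theta/2))^q$ for $q\ge2$ comes from iterating the same estimate: the prefactor contributes $(\sin(\theta/2))^q$, the factorial ratio contributes $\ell^q$ up to a constant $\kappa^q$, and the Jacobi polynomial stays $O(1)$. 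The expansion \eqref{as_Leg.1} for $P_{00}^{(\ell)}$ is the classical one: $P_\ell(\cos\theta)=1-\tfrac{\ell(\ell+1)}{2}(1-\cos\theta)+\dots=1-\ell(\ell+1)\sin^2(\theta/2)+O((\ell\sin(\theta/2))^3)$, where again $(\ell\sin(\theta/2))^3$ is the correct error size in our window.

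Second, for \eqref{as_Leg} I would argue differently, since now $\ell>W^{3/4}\log^2W$ is large and the small-$\theta$ Taylor expansion no longer controls $P_{00}^{(\ell)}$ over the whole effective range of integration. The quantity $\langle t_{00}^{(\ell)}(U)\rangle$ is the average of $P_\ell(\cos\theta)$ against the measure in \eqref{<f>}, which by \eqref{fact}–\eqref{int} concentrates $\theta$ near $0$ on the scale $\sin(\theta/2)\sim W^{-1}$. I would split the $\theta$-integral into the region $\sin^2(\theta/2)\le t_0:=W^{-2}\log^2W$ and its complement. On the complement the Gaussian weight $\exp\{-2u_*^2W^2\Tr S(1-\cos(\theta/2)\cos\sigma\cos\gamma)\}$ is $\le e^{-c\log^2W}$, negligible. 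On the region $\sin^2(\theta/2)\le t_0$ one uses $|P_\ell(\cos\theta)|\le 1$ together with a genuine gain: since $P_\ell(\cos\theta)\le 1-c\,\ell^2\sin^2(\theta/2)$ fails for large $\ell\sin(\theta/2)$ (the Legendre polynomial oscillates), instead I would use the integral/Mehler–Heine asymptotics $P_\ell(\cos\theta)\approx J_0((\ell+\tfrac12)\theta)$ valid for $\ell\theta=O(1)$ — which holds here because $\ell\theta\le W^{3/4}\log^2W\cdot W^{-1}\log W=W^{-1/4}\log^3W\to0$ — giving $P_\ell(\cos\theta)=1-\tfrac14(\ell+\tfrac12)^2\theta^2+\dots\le 1-c\,\ell^2\sin^2(\theta/2)$. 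Averaging against the measure, $\langle\sin^2(\theta/2)\rangle\sim (u_*^2W^2\Tr S)^{-1}\sim W^{-2}$ by \eqref{int}, so $\langle t_{00}^{(\ell)}\rangle\le 1-c\,\ell^2 W^{-2}\le 1-c W^{3/2}\log^4W\cdot W^{-2}=1-cW^{-1/2}\log^4W$, which after adjusting the constant is \eqref{as_Leg}; the condition $\|R_1-R_2\|\le CW^{-1}\log W$ is only used to guarantee $\Tr S=2+O(W^{-1}\log W)$ so that the variance estimate \eqref{int} applies with the stated constant.

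The main obstacle I anticipate is the intermediate regime in the second part: making the Mehler–Heine-type comparison $P_\ell(\cos\theta)=J_0((\ell+\tfrac12)\theta)(1+o(1))$ quantitative and uniform down to the correct error, and in particular ensuring that after integrating against the concentration measure one really gets a lower-order-term gain of exact size $W^{-1/2}\log^4W$ rather than something weaker. One must be careful that the gain $\ell^2\langle\sin^2(\theta/2)\rangle$ is not swamped by the $O(W^{-2})$ relative errors already present in \eqref{int} and \eqref{fact}; since $\ell>W^{3/4}\log^2W$ makes $\ell^2 W^{-2}\ge W^{-1/2}\log^4W\gg W^{-2}$, this is fine, but the bookkeeping linking the threshold $W^{3/4}\log^2W$, the cutoff scale $W^{-1}\log W$ on $\theta$, and the target $W^{-1/2}\log^4W$ has to be done carefully. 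The first part (\eqref{as_Leg.0}–\eqref{as_Leg.1}) is essentially bookkeeping on known Jacobi-polynomial identities and should be routine once the small parameter $\ell\sin^2(\theta/2)$ is identified.
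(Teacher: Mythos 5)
Your treatment of \eqref{as_Leg.0}--\eqref{as_Leg.1} is essentially sound and runs parallel to the paper's: the paper uses the contour-integral representation
$P_{k+q,k}^{(\ell)}(\cos\theta)=\frac{\mu_{\ell,k,q}}{2\pi}\int\cos^{2\ell}(\theta/2)(1+i\tan(\theta/2)e^{i\phi})^{\ell+k}(1+i\tan(\theta/2)e^{-i\phi})^{\ell-k}e^{iq\phi}\,d\phi$
and expands in $\tan(\theta/2)$, which is equivalent to your Jacobi-polynomial bookkeeping (and in fact no Stirling estimates are needed: $\mu_{\ell,k,1}(\ell-k)=\ell\sqrt{(1+(k+1)/\ell)(1-k/\ell)}$ exactly). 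One caveat: your claim that successive terms of the expansion gain a factor $\sim\ell\sin^2(\theta/2)$ is not correct --- they gain $\sim\ell^2\sin^2(\theta/2)=(\ell\sin(\theta/2))^2$, since the $k$-th coefficient of the hypergeometric expansion of $P_\ell(\cos\theta)$ in powers of $\sin^2(\theta/2)$ has size $\ell^{2k}/(k!)^2$. The honest relative error in \eqref{as_Leg.0} is therefore $O\big((\ell\sin(\theta/2))^2\big)+O\big(\ell\sin^2(\theta/2)\big)$, the second contribution coming from the prefactor $\cos^{2\ell}(\theta/2)$. In the stated regime $(\ell\sin(\theta/2))^2\le W^{-1/2}\log^6W\to0$, so the expansion is still valid and the applications are unaffected, but the improvement from $(\ell\sin(\theta/2))^2$ to $\ell\sin^2(\theta/2)$ that you propose cannot be obtained the way you describe.

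The genuine gap is in your proof of \eqref{as_Leg}. You justify the Mehler--Heine/Taylor comparison $P_\ell(\cos\theta)\le 1-c\,\ell^2\sin^2(\theta/2)$ by the computation $\ell\theta\le W^{3/4}\log^2W\cdot W^{-1}\log W\to0$, but that invokes the upper bound $\ell<W^{3/4}\log^2W$, which is the hypothesis of the \emph{first} part of the proposition; \eqref{as_Leg} concerns $\ell>W^{3/4}\log^2W$ with no upper bound (it is needed in \eqref{b_K_R} and Lemma \ref{l:2} for all large $\ell$). For $\ell\gtrsim W$ one has $\ell\theta\gg1$ on the effective support $\sin(\theta/2)\sim W^{-1}$ of the measure, $P_\ell(\cos\theta)$ oscillates, and $1-c\,\ell^2\sin^2(\theta/2)$ is negative there, so the pointwise bound you rely on is false and the argument collapses exactly in the range it is supposed to cover. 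The paper closes this by splitting on $\alpha=2\ell/W$: for $\alpha\le C_0\log W$ the substitution $x=2W\sin(\theta/4)$ turns the average into $\int\frac{d\phi}{2\pi}\hat I(\alpha\cos\phi)$ with $\hat I$ the Fourier transform of a positive function, whence $\hat I(p)\le 1-c_0\min(p^2,\delta^2)$ uniformly in $p$; for $\alpha>C_0\log W$ a non-stationary-phase integration by parts in $\theta$ (using that $d\varphi_1/d\theta$ is bounded away from zero where $|\cos\phi|\ge1/2$) gives $O(\alpha^{-1}+W^{-1})=o(1)$. You would need to supply a uniform-in-$\ell$ argument of this kind for $\ell\gtrsim W$ to complete the proof.
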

The proof of the proposition is  given in Appendix.

\medskip

If $|q|\ge 1$,  then since $k_{R_1,R_2}$ depends on $e^{i\delta}$ only via $\rho_1,\,\rho_2$ (see (\ref{rho})), the integration with respect to $\delta$ gives us an extra multiplier of the order $W^{-2}$ (see (\ref{int})) unless we integrate the terms
with $(\rho_1+\rho_2)^{q'}$ with $q'\ge|q|$. Thus, by  (\ref{rho}), (\ref{ti_U}), (\ref{b_rho}), and (\ref{as_Leg.0}), we have
\begin{align}\label{sum_F}
&\sum_{|q|\ge 2}|\mathcal{F}_{k+q,k}^{(\ell)}(R_1,R_2)|\le C(\ell/W)^2(\|R_1-R_2\|^2+W^{-1}),\\
&\mathcal{F}_{k+1,k}^{(\ell)}(R_1,R_2)= \left\langle t_{k+1,k}^{(\ell)}(U)\big(\rho_2(1+\rho_1^2/2+\rho_2^2/6)+O(W^{-2})\big)\right\rangle \notag\\&=
\left\langle e^{i(2k+1)\sigma}P_{k+1,k}^{(\ell)}(\cos\theta)\Big(iu_*W\cos\gamma\Tr[R_1,R_2-R_1]\tilde U(1+\rho_1^2/2+\rho_2^2/6)+O(W^{-2})\Big)\right\rangle \notag\\
&= c_{\ell,k}(\ell/W)[R_1,R_2-R_1]_{12}+O(\ell W^{-5/2}\log^3W))+O(\ell^2W^{-3}\log^2W).
\notag
\end{align}
Here $c_{\ell,k}$ is some bounded coefficient which appears from (\ref{as_Leg.0}) and after integration over $U$. 
We used also that because of integration over $\gamma,\delta,\sigma$
\begin{align*}
&\left\langle t_{k+1,k}^{(\ell)}(U)\rho_1^{\alpha}\rho_2^{\beta}\right\rangle=0,\quad (\alpha=1,3,\,\beta=0, 1,2)
\quad \left\langle t_{k+1,k}^{(\ell)}(U)(\rho_1^2+\rho_2^2)\right\rangle=O(\ell^2W^{-3}\log^2W).
\end{align*}
Here  we used that for any independent of $\tilde U$ matrix $B$  we have by (\ref{int}), (\ref{as_Leg.0})
\begin{align*}
& \left\langle t_{k+1,k}^{(\ell)}(\tilde U)(\Tr B\tilde U)^2\right\rangle=i(1+(k+1)/\ell)^{1/2}(1-k/\ell)^{1/2}\ell \,\left\langle\sin(\theta/2)e^{i(2k+1)\sigma+i\delta}
\right.\\ &\left.\times(1+O(\ell \sin^2(\theta/2)) 
 \Big((B_{11}-B_{22})\cos(\theta/2)\sin\sigma+\sin(\theta/2)\big(B_{21}e^{i\delta}+B_{12}e^{-i\delta}\big)\Big)^2
 \right\rangle\\
  =&\left\langle e^{i(2k+1)\sigma}\sin\sigma \sin^2(\theta/2)\cos (\theta/2)(1+O(\ell \sin^2(\theta/2))\right\rangle O(\ell\|B\|^2)
  +O(W^{-5}\|B\|^2)\\
  &=O(\ell^2 W^{-4}\|B\|^2).
\end{align*}
If $q=0$, then using (\ref{as_Leg.1})  and (\ref{F_ks.1}) one can easily check that
\begin{align*}
\mathcal{F}_{kk}^{(\ell)}(R_1,R_2)=\left\langle t_{k,k}^{(\ell)}(U)\right\rangle+O((\ell/W)^2 W^{-1}).
\end{align*}
It is easy to see that 
 $\left\langle t_{k,k}^{(\ell)}(U)\right\rangle\, (k=-\ell,\dots\ell)$ are eigenvalues of the operator 
 $\mathcal{E}^{(\ell)}K_{*R_1,R_2}\mathcal{E}^{(\ell)}$, where $K_{*R_1,R_2}$ is an integral operator with the kernel
$Z^{-1}(R_1,R_2)\exp\{k_{*R_1,R_2}(U_2^*U_1)\}$ (see (\ref{K,Z}) and (\ref{k_*})).
Hence, we need to compute  eigenvalues of $\mathcal{E}^{(\ell)}K_{*R_1,R_2}\mathcal{E}^{(\ell)}$. But
\begin{align*}
K_{*R_1,R_2}(U_2^*U_1)=K_{*R_1,R_2}(U_1U_2^*),
\end{align*}
so making the change of variables $U_2\to UU_1$ in the integral over $U_2$, we obtain
\begin{align*}
(K_{*R_1,R_2}t^{(\ell)}_{0k},t^{(\ell)}_{0p})=&\sum_s\int K_{*R_1,R_2}( U^*)t^{(\ell)}_{0s}(U)t^{(\ell)}_{sk}(U_1)\overline{t^{(\ell)}_{0p}(U_1)} dU dU_1\\
=&\delta_{kp}\int K_{*R_1,R_2}( U^*)t^{(\ell)}_{00}(U) dU.
\end{align*}
Thus, we get (\ref{Kt.0})  from (\ref{as_Leg.1}) and (\ref{sum_F}).

To prove (\ref{KPsi.1}) we  write
\[K_{R_1,R_2}h=\lambda_\ell h+r,\]
where  $r$  collects all the remainder terms (including $b^{(\ell)}_k$) from (\ref{Kt.0}). It is easy to check that the only remainder term which does not have a sufficient
bound for fixed $R_2$ is the one which contains $ [R_2-R_1,R_1]_{12}$. Let $\Psi(R)=p(R)e^{-\alpha u_*^2\Tr R^2}$ where $p(R)$ is a polynomials of 
degree at most $L$. Then we need to check that
\begin{equation}\label{com_term}
\Big\|\ell/W\int p(R_2)\mathcal{A}(R_1,R_2)e^{-\alpha u_*^2\Tr R_2^2}[R_1,R_2-\mu R_1]_{12}dR_2\Big\|\le CL \ell/W^2.
\end{equation}
Rewriting $\mathcal{A}$ in terms of $\mathcal{A}_0$, $\mathcal{A}_*$ (see (\ref{hat_A}) -- (\ref{A_*})), using (\ref{repr_A}),  
and integrating over $R_2$ by parts, we obtain
\begin{align}\label{com_parts}
&\int p(R_2)\mathcal{A}(R_1,R_2)e^{-\alpha u_*^2\Tr R_2^2}[R_1,R_2-\mu R_1]_{12}dR_2\\
&\qquad =(u_*^2W)^{-1}\int \mathcal{A}(R_1,R_2)e^{-\alpha u_*^2\Tr R_2^2}\partial\big(p(R_2),R_1\big)dR_2+O(\|\Psi\|W^{-3/2}),
\end{align}
where
$\partial\big(p(R_2),R_1\big)$ is some linear combination of entries  $R_1$ with the first derivatives of $p(R_2)$ with respect to $R_2$-entries.
The additional multiplier  $W^{-1/2}$ in $O(\|\Psi\|W^{-3/2})$ appears because the derivatives of all additional terms other then $p(R_2)$ give the 
additional factor $W^{-c},\,c\ge 1/2$. 
 Now the $L_2$-norm of the last integral can be estimated as $O(L\|\Psi\|)$ in view of (\ref{der_Herm}),
which yields (\ref{com_term}) (notice that $W^{-3/2}\ll L/W$), thus (\ref{KPsi.1}).

Notice that if $p(R)=1$ then the integral in the r.h.s. of (\ref{com_parts})  is zero, and hence we obtain (\ref{KPsi0}).

To prove (\ref{diff_K_eps}), observe that it follows from the above arguments  that there are asymptotic expansions of coefficients of $K_{R_1,R_2}$ with
respect to $W^{-1}$. Since these expansions starts from $W^{-2}$ and the coefficients depend on $R_1,R_2$ via traces of some polynomials of
$R_1,R_2$
(except $(K_{R_1,R_2}t^{(\ell)}_{0k},t^{(\ell)}_{0k})$ which starts  from $1$, but $1$ does not depend on $R_1,R_2$), we conclude that (\ref{diff_K_eps})
is true.

To prove (\ref{b_K_R}), let us observe  that in view of  the bounds (\ref{b_rho})
\begin{align}\notag
&\|K_{*R_1,R_2}-K_{R_1,R_2}\|\le CW^{-1/2}\log^2 W\\
\Rightarrow& \mathcal{E}^{(\ell)}K_{R_1,R_2}\mathcal{E}^{(\ell)}\le
\mathcal{E}^{(\ell)}K_{*R_1,R_2}\mathcal{E}^{(\ell)}+CW^{-1/2}\log^2 W.
\label{diff_1}\end{align}
But in view of (\ref{as_Leg.1}) for $\ell>W^{3/4}\log^2W$, we have that
\[
\mathcal{E}^{(\ell)}K_{*R_1,R_2}\mathcal{E}^{(\ell)}\le 1-C'(W^{-1/2}\log^4W)^2+CW^{-1/2}\log^2 W\le  1-C'(W^{-1/2}\log^4W)^2/2.
\]

$\square$

Denote by $P_L$ the orthogonal projection in $\mathcal{H}_0=L_2(\mathcal{H}_{2,+})$  on the space $\mathcal{H}_L$
\begin{align}\notag
\mathcal{H}_L=&\mathrm{Lin}\{\Psi_{*\bar k}(R)\}_{|\bar k|\le L},\quad L=C_0\log^2W,  \\
\mathcal{P}_L=&P_L\otimes I\Big|_{L_2(U(2))}.
\label{P_L}\end{align}
We recall that  $\mathcal{K}_\zeta$ is an operator in $\mathcal{H}=\mathcal{H}_0\otimes L_2(U(2))$.
\begin{lemma}\label{l:2} For $L>C\log^2 W$ with sufficiently big $C$
\begin{align}\label{l2.1}
\|(I-\mathcal{P}_L)\mathcal{K}_\zeta(I-\mathcal{P}_L)\|\le (1-C_2L/W).
\end{align}
\end{lemma}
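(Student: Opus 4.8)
\textit{Proof strategy.} The plan is to remove the $\zeta$-dependence, pass to the $SU(2)$-isotypic decomposition of the remaining operator, and on each isotypic block compare with the explicitly diagonalizable ``Gaussian'' operator $\mathcal{A}_*$ of (\ref{A_*}). The whole estimate is driven by the fact that on the effective domain $\|R_i\|\lesssim\log W$ (singled out by Lemma~\ref{l:conc}) the operator $\mathcal{K}_\zeta$ behaves like $\mathcal{A}_*\otimes(\oplus_\ell\lambda_\ell\hat{\mathcal E}^{(\ell)})$ up to errors that, when measured against the size $L=\Theta(\log^2 W)$ of the cutoff, are small once the constant $C$ in $L>C\log^2 W$ is large.

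First, by (\ref{pert})--(\ref{A_zeta}) the ratio $F_\zeta(R,U)/F_0(R)$ equals $1+O(\epsilon W^{-1}\log W+N^{-1})$ on $\|R\|\lesssim\log W$, so multiplying the kernel of $\mathcal{K}_0=\mathcal{K}_\zeta|_{\zeta=0}$ by this factor in the two copies of $R$ gives $\|\mathcal{K}_\zeta-\mathcal{K}_0\|\le C(\epsilon W^{-1}\log W+N^{-1})$, which is $o(L/W)$ under the hypotheses of either theorem since $\epsilon\le N^{-\varepsilon_0/2}$ and $W/N\to0$; hence it suffices to bound $\|(I-\mathcal{P}_L)\mathcal{K}_0(I-\mathcal{P}_L)\|$ by $1-C_2'L/W$ with $C_2'$ slightly larger than $C_2$. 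Since $\mathcal{A}$ of (\ref{hat_A}) does not depend on $U$ and each $\mathcal{E}^{(\ell)}$ reduces the convolution operator $K_{R_1,R_2}$, the self-adjoint operator $\mathcal{K}_0$ is block-diagonal with respect to $\mathcal{H}=\bigoplus_\ell\mathcal{H}_0\otimes\mathcal{E}^{(\ell)}$, and so is $(I-\mathcal{P}_L)=(I-P_L)\otimes I$; therefore $\|(I-\mathcal{P}_L)\mathcal{K}_0(I-\mathcal{P}_L)\|=\sup_\ell\|(I-P_L)\mathcal{K}_0(I-P_L)\big|_{\mathcal{H}_0\otimes\mathcal{E}^{(\ell)}}\|$ and it remains to estimate each summand uniformly in $\ell$.

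On the block $\ell$ the kernel of $\mathcal{K}_0$ is $\mathcal{A}(R_1,R_2)\,\mathcal{F}^{(\ell)}(R_1,R_2)$, where $\mathcal{F}^{(\ell)}$ is the Hermitian positive $(2\ell+1)\times(2\ell+1)$ matrix of $K_{R_1,R_2}|_{\mathcal{E}^{(\ell)}}$, which by (\ref{Kt.0}) equals $\lambda_\ell I+E^{(\ell)}(R_1,R_2)$. For $\ell>W^{3/4}\log W$ the bound (\ref{b_K_R}) gives $\|\mathcal{F}^{(\ell)}\|\le1-CW^{-1/2}\log^2W$, and since $\mathcal{A}$ is a pointwise non-negative kernel with $\|\mathcal{A}\|=1+o(1)$ the matrix-valued Schur test yields $\|\mathcal{K}_0|_\ell\|\le1-\tfrac12CW^{-1/2}\log^2W\ll1-C_2'L/W$; combined with $\lambda_\ell\le1-\ell^2/(8u_*^2W^2)$ this also disposes of all $\ell\gtrsim W^{1/2}\log W$. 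For the remaining small $\ell$, where $\lambda_\ell=1-O(W^{-1}\log^4W)$ and the gain must come from $I-P_L$, I write $\mathcal{K}_0|_\ell=\lambda_\ell(\mathcal{A}\otimes I)+\mathcal{E}_\ell$ with $\mathcal{E}_\ell$ self-adjoint, so $\|(I-P_L)\mathcal{K}_0(I-P_L)|_\ell\|\le\lambda_\ell\|(I-P_L)\mathcal{A}(I-P_L)\|+\|(I-P_L)\mathcal{E}_\ell(I-P_L)\|$. For the first factor I use $\mathcal{A}\le\mathcal{A}_*+\|\mathcal{A}-\mathcal{A}_*\|I$: by (\ref{A_0}) and (\ref{Delta}) one has $\mathcal{A}-\mathcal{A}_*=\mathcal{A}_0(\mathcal{Z}-1)+(\text{two cubic-correction terms})$, and using the pointwise bound $\|\mathcal{Z}-1\|_\infty=O(W^{-1}\log^2W)$ (legitimate because $B(R_1-R_2)$ confines $\|R_1-R_2\|$ inside $\mathcal{A}_0$) together with $\mathcal{A}_0\ge0$ one gets $\|\mathcal{A}-\mathcal{A}_*\|=O(W^{-1}\log^2W)$; since $\mathcal{A}_*$ is diagonalized by the Hermite basis $\{\Psi_{*\bar k}\}$, the projection $I-P_L$ commutes with it and $\|(I-P_L)\mathcal{A}_*(I-P_L)\|=\lambda_*^{L+1}\le1-\alpha L/(2W)$, so $\|(I-P_L)\mathcal{A}(I-P_L)\|\le1-\alpha L/(2W)+O(W^{-1}\log^2W)$ — and this is exactly where the hypothesis ``$C$ sufficiently big'' enters, making the $O(W^{-1}\log^2W)$ a small fraction of $L/W$.

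The technical heart, and the main obstacle, is the error term $\|(I-P_L)\mathcal{E}_\ell(I-P_L)\|$: $\mathcal{E}_\ell$ has kernel $\mathcal{A}(R_1,R_2)E^{(\ell)}(R_1,R_2)$ with $E^{(\ell)}$ tridiagonal in the $\mathcal{E}^{(\ell)}$-basis, and $\|E^{(\ell)}\|$ is \emph{not} $o(L/W)$ in operator norm — only $o(1)$, or $o(\ell^2/W^2)$ after weighting by the $R$-degree — so one cannot simply perturb. The dangerous off-diagonal entries $b^{(\ell)}_k\sim(\ell/W)[R_2-R_1,R_1]_{12}$ are handled exactly as in the proof of (\ref{KPsi.1}): the factor $R_2-R_1$ combines with $B(R_1-R_2)$ to permit an integration by parts in $R_2$, gaining a factor $W^{-1}$ at the cost of one derivative on the test function; since the derivative raises the Hermite degree by one while $\mathcal{A}$ contributes $\lambda_*^{\mathrm{deg}}$ and $\sup_m\sqrt m\,\lambda_*^m=O(W^{1/2})$, this contributes $O(\ell W^{-3/2})=o(L/W)$ for $\ell\lesssim W^{1/2}$. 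The diagonal part $\tilde\lambda_\ell-\lambda_\ell=O((\ell/W)^2\|R\|W^{-1/2})$ carries the extra factor $\|R\|W^{-1/2}\sim(\mathrm{deg}/W)^{1/2}$, which is $o(1)$ precisely on the modes where $\lambda_*^{\mathrm{deg}}$ fails to be small, and the remaining entries ($|q|\ge2$) are $O((\ell/W)^2W^{-1})$ by (\ref{sum_F}); all of these are $o(L/W)$ in the relevant $\ell$-range. Collecting $\lambda_\ell\lambda_*^{L+1}+\lambda_\ell\|\mathcal{A}-\mathcal{A}_*\|+\|(I-P_L)\mathcal{E}_\ell(I-P_L)\|\le1-C_2'L/W$ uniformly in $\ell$, with $C$ chosen large enough, completes the proof.
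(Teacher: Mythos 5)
Your proof is correct and follows essentially the same route as the paper: reduce to $\mathcal{K}_0$, use the block-diagonal structure over the $SU(2)$ representations $\mathcal{E}^{(\ell)}$, dispose of $\ell>W^{3/4}\log W$ via (\ref{b_K_R}), and for the remaining $\ell$ combine the gain $\|(I-P_L)\mathcal{A}(I-P_L)\|\le\lambda_*^{L+1}+O(W^{-1}\log^2 W)$ (absorbed by taking $C$ large in $L>C\log^2 W$) with $\lambda_\ell$ and the tridiagonal error terms from (\ref{Kt.0}). The only cosmetic difference is that you absorb the off-diagonal coupling $b^{(\ell)}_k$ by integration by parts and an explicit split of the $\ell$-range, whereas the paper uses a crude pointwise bound $O(\ell W^{-3/2}\log W)$ and a single negative-discriminant inequality in $\ell$; both yield the same conclusion.
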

\textit{Proof}. 
Since 
\[
\|\mathcal{K}_\zeta-\mathcal{K}_0\|\le C\epsilon/W
\]
it suffices to prove (\ref{l2.1}) for $\mathcal{K}_0$. It is easy to see that $(I-\mathcal{P}_L)\mathcal{K}_0(I-\mathcal{P}_L)$ has a block-diagonal structure
with blocks
$(I-\mathcal{P}_L)\mathcal{E}_\ell\mathcal{K}_0\mathcal{E}_\ell(I-\mathcal{P}_L)$. By (\ref{Kt.0}) for $\Psi(U,R)\in (I-{P}_L)\mathcal{H}_0\otimes\mathcal{E}_\ell$
with $\ell<W^{3/4}\log W$ and $\|R_1-R_2\|\le W^{-1/2}\log W$ we have
\begin{align*}
&(K_{R_1,R_2}\Psi)(R,U)=\lambda_l\Psi(R,U)+O(\ell W^{-3/2}\log W)\\
\Rightarrow&(\mathcal{A}K_{R_1,R_2}\Psi,\Psi)=\lambda_l\int dU\int dR_1dR_2\mathcal{A}(R_1,R_2)\Psi(R_1,U)\Psi(R_2,U)+O(\ell W^{-3/2}\log W)\\
&\le (1-CL/2W)(1-C'\ell^2/W^2)+C''\ell W^{-3/2}\log W\le 1-2CL/4W.
\end{align*}
The last inequality here follows from
\[C'\ell^2/W^2-C''\ell W^{-3/2}\log W+C_2L/W>0\]
which is valid for all $\ell$ and any fixed $C',C'',C_2$, if we choose sufficiently big $C_0$ in (\ref{P_L}).
Here we used also that by (\ref{Delta})  for $\Psi\in (I-{P}_L)\mathcal{H}_0$
\[
(\mathcal{A}\Psi,\Psi)\le (\mathcal{A}_*\Psi,\Psi)+O(W^{-1})\le (1-CL/W)+O(W^{-1})\le 1-CL/2W.
\]
For $\ell\ge W^{3/4}\log W$ we use (\ref{b_K_R}) to write for $\Psi(R,U)\in\mathcal{H}_0\otimes\mathcal{E}^{(\ell)}$
\begin{align*}
(\mathcal{A}K_{R_1,R_2}\Psi,\Psi)\le& (1-CL/W)(\mathcal{A}\Psi_U,\Psi_U)\le  (1-CL/W)\|\Psi\|^2,\\
\Psi_U(R)=&\Big(\int dU|\Psi^2(R,U)|\Big)^{1/2}.
\end{align*}

$\square$

\medskip

Recall that $\mathcal{K}_0=\mathcal{K}_\zeta\Big|_{\zeta=0}$ and set
\begin{align}\label{bb_K_0}
\mathbb{K}_0=\mathcal{E}_M\mathcal{K}_0\mathcal{E}_M\,\quad M=\max\{C \log W, C_0(\epsilon W)^{1/2}\}.
\end{align}
The following lemma gives an information about the eigenvalues and eigenvectors of $\mathbb{K}_0$:
\begin{lemma}\label{l:la_max} 
For any $\ell\le M,$ $\mathbb{K}_0$ has $2\ell+1$ eigenvalues $\lambda_{\ell,k}$ with eigenvectors $\Psi_{\ell,k}(R,U)$  such that
\begin{align}
&|\lambda_{\ell,k}-\lambda_{\max}|\le C(\ell/W)^2,
\label{de_la.0} \end{align}
where $\lambda_\ell$ is defined in (\ref{lam_l})

Moreover, for any fixed $p>0$,
there are vectors $h_{\bar j,\ell,k}\in \mathcal{E}^{(\ell)}$ such that $\|h_{\bar j,\ell,k}\|\le C$ and
\begin{align}\label{de_Psi}
&\Psi_{\ell,k}(R,U)=\Psi_{*\bar 0}(R)h_{\bar 0,\ell,k}(U)+\sum_{s=1}^{2p-1}\sum_{|\bar j|\le s+2}W^{-s/2}\Psi_{*\bar j}(R)h_{\bar j,\ell,k}(U)+O(W^{-p})
\end{align}
with $\Psi_{*\bar j}$ defined in (\ref{psi_k}).
\end{lemma}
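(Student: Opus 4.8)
The plan is to treat $\mathbb{K}_0 = \mathcal{E}_M \mathcal{K}_0 \mathcal{E}_M$ as a perturbation of the ``decoupled" operator $\mathcal{A} \otimes (\bigoplus_{\ell \le M} \lambda_\ell \hat{\mathcal{E}}^{(\ell)})$, whose spectral data are completely explicit: on each $\mathcal{E}^{(\ell)}$ the unitary part contributes the scalar $\lambda_\ell = 1 - \ell(\ell+1)/8(u_*W)^2$ with multiplicity $2\ell+1$ (indexed by $k=-\ell,\dots,\ell$), and the matrix part contributes the eigenvalue $\lambda_{\max} = \lambda_{\max}(\mathcal{A})$ with eigenvector $\Psi_0 = \Psi_{*\bar 0} + O(W^{-1/2})$ by Lemma~\ref{l:A}. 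So the unperturbed top block has eigenvalue $\lambda_\ell \lambda_{\max}$ with the $(2\ell+1)$-dimensional eigenspace spanned by $\Psi_0(R)\, t^{(\ell)}_{0k}(U)$. The content of Lemma~\ref{l:KPsi}, specifically estimate~(\ref{KPsi0}), is precisely that $\mathcal{K}_0$ maps each such product vector to within $O(W^{-1/2}(\ell/W)^2)$ of $\lambda_\ell \lambda_{\max}$ times itself; together with the gap estimates from Lemma~\ref{l:2} (and the restriction to $\mathcal{E}_M$, which discards the $\ell > W^{3/4}\log W$ tail handled by~(\ref{b_K_R})) this makes the product vectors an approximate eigenbasis for the top of the spectrum.

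First I would fix $\ell \le M$ and apply the quantitative spectral-perturbation lemma (of the type used to prove Lemma~\ref{l:A}; it is an instance of, e.g., a resolvent/Riesz-projection argument since $\mathcal{K}_0$ is positive and self-adjoint in our reduced setting): since $\mathcal{K}_0$ applied to the $(2\ell+1)$-dimensional space $V_\ell = \Psi_0 \otimes \mathcal{E}^{(\ell)}$ lands within distance $\varepsilon_\ell := C W^{-1/2}(\ell/W)^2$ of $\lambda_\ell\lambda_{\max} \cdot V_\ell$, and since the rest of the spectrum of $\mathbb{K}_0$ lies below $\lambda_{\max}(1 - cL/W)$ away from $\lambda_\ell\lambda_{\max}$ by Lemmas~\ref{l:2} and~\ref{l:KPsi} (the competing eigenvalues from other $\ell'$ differ by $|\lambda_\ell - \lambda_{\ell'}|\lambda_{\max}$, and those from higher $\mathcal{A}$-modes differ by a spectral gap of $\mathcal{A}$ of order $W^{-1}$), there are exactly $2\ell+1$ eigenvalues $\lambda_{\ell,k}$ of $\mathbb{K}_0$ within $O(\varepsilon_\ell + \varepsilon_\ell^2/\mathrm{gap})$ of $\lambda_\ell\lambda_{\max}$. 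Combined with $|\lambda_\ell - 1| = O((\ell/W)^2)$ and $|\lambda_{\max} - 1| = O(W^{-1})$ this gives (\ref{de_la.0}), $|\lambda_{\ell,k} - \lambda_{\max}| \le C(\ell/W)^2$. Collecting these over all $\ell \le M$ produces the claimed eigenvalue list, the overlaps being disjoint because the relevant $\ell$-blocks are separated once $M \le W^{1-\delta}$.

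For the eigenvector expansion~(\ref{de_Psi}), I would bootstrap the first-order approximation $\Psi_{\ell,k} \approx \Psi_{*\bar 0}(R) h_{\bar 0, \ell, k}(U)$ (with $h_{\bar 0,\ell,k} \in \mathcal{E}^{(\ell)}$) using the structure of the remainder in $\mathcal{K}_0 = \mathcal{A} \otimes (\text{diagonal in }\ell) + \text{(off-diagonal corrections)}$: the corrections, read off from~(\ref{Kt.0}) and~(\ref{de_Psi.0}), shift $\bar j$ by at most $2$ in the $R$-variable at each order $W^{-1/2}$ (the $\Tr R^3$ term in $f_\zeta$, the commutator terms $b^{(\ell)}_k$, and the $\mathcal{Z}-1$ correction of Lemma~\ref{l:Z_0} each produce polynomial factors of bounded degree in $R$). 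Iterating the identity $\Psi_{\ell,k} = \lambda_{\ell,k}^{-1}\mathcal{K}_0 \Psi_{\ell,k}$ and projecting onto $\mathrm{Lin}\{\Psi_{*\bar j}\}_{|\bar j|\le s+2} \otimes \mathcal{E}^{(\ell)}$ at each stage, one gets after $2p-1$ steps the expansion $\Psi_{\ell,k} = \sum_{s=0}^{2p-1} W^{-s/2} \sum_{|\bar j| \le s+2} \Psi_{*\bar j}(R) h_{\bar j,\ell,k}(U) + O(W^{-p})$, with $\|h_{\bar j,\ell,k}\| \le C$ since the perturbation series has bounded coefficients on $\mathcal{E}_M$. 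The $h_{\bar j,\ell,k}$ stay in $\mathcal{E}^{(\ell)}$ because $K_{R_1,R_2}$ commutes with the $SU(2)$-shifts and each $\mathcal{E}^{(\ell)}$ reduces it (as noted before Lemma~\ref{l:KPsi}).

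\textbf{Main obstacle.} The delicate point is controlling the perturbation uniformly in $\ell$ up to $\ell \sim M \sim (\epsilon W)^{1/2}$ (which, when $\epsilon W = W^{3/2}/N^{1/2}$ is large, can be a substantial power of $W$): the off-diagonal term $b^{(\ell)}_k$ in~(\ref{Kt.0}) carries a factor $\ell/W$ that is not small, so one cannot treat $K_{R_1,R_2} - \lambda_\ell$ as a norm-small perturbation directly. The resolution, already prepared in Lemma~\ref{l:KPsi}, is that this bad term has the form $[R_2-R_1, R_1]_{12}$, which is $O(W^{-1/2})$ \emph{after} it is paired against the Gaussian kernel $\mathcal{A}$ and integrated by parts in $R_2$ (as in~(\ref{com_parts})), converting it into derivatives acting on the polynomial part; one must verify that this gain survives when iterating the fixed-point equation and that the accumulated degree growth in $R$ stays $\le s+2$ at order $s$, which is exactly why the estimate~(\ref{de_Psi})'s inner sum is restricted to $|\bar j| \le s+2$ rather than $|\bar j| \le Cs$. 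Keeping the error genuinely $O(W^{-p})$ while the number of iterations and the block dimension both grow with $W$ is the quantitative heart of the argument; I expect it to be handled by the same abstract perturbation lemma invoked for Lemma~\ref{l:A}, applied blockwise in $\ell$, with the $\ell$-dependence tracked explicitly through the constants.
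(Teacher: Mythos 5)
Your proposal follows essentially the same route as the paper's proof: a block decomposition of $\mathbb{K}_0$ with respect to the $\Psi_{*\bar 0}$-direction in $R$ (equivalently, the approximate eigenvectors $\Psi_0\otimes t^{(\ell)}_{0k}$ furnished by (\ref{KPsi0})) combined with the gap below coming from Lemma \ref{l:2} and Proposition \ref{p:norm}, and, for the expansion (\ref{de_Psi}), a Feshbach-type resolvent expansion driven by the $W^{-|\bar j-\bar k|/2}$ decay of the matrix elements $\mathbb{K}_{0(\bar j,\bar k)}$ obtained by Gaussian integration by parts --- exactly the content of (\ref{exp_dec}), which you correctly single out as the quantitative heart. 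The one loose point is your appeal to disjointness of the eigenvalue clusters for different $\ell$ (which would require $M\ll W^{1/2}$ rather than $M\le W^{1-\delta}$), but this is immaterial since $\mathbb{K}_0$ is exactly block-diagonal in $\ell$, as you yourself note, so the counting can be done within each $\mathcal{E}^{(\ell)}$ separately.
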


\textit{Proof.} 

To prove (\ref{de_la.0}) we consider $\mathcal{P}_{\bar j}$ - the orthogonal projection on $\Psi_{*\bar j}$, set
\begin{align*}
\mathbb{K}_{0,(\bar j,\bar j')}=\mathcal{P}_{\bar j}\mathbb{K}_{0}\mathcal{P}_{\bar j'},
\end{align*}
and consider $\mathbb{K}_0$ as a $2\times 2$ block matrix, with 
\begin{align*}
&\mathbb{K}_0^{(11)}=\mathcal{P}_{\bar 0}\mathbb{K}_{0}\mathcal{P}_{\bar 0},\quad 
\mathbb{K}_0^{(22)}=(1-\mathcal{P}_{\bar 0})\mathbb{K}_{0}(1-\mathcal{P}_{\bar 0}), \\
&\mathbb{K}_0^{(12)}=\mathcal{P}_{\bar 0}\mathbb{K}_{0}(1-\mathcal{P}_{\bar 0}),\quad \mathbb{K}_0^{(21)}
=(1-\mathcal{P}_{\bar 0})\mathbb{K}_{0}\mathcal{P}_{\bar 0}.
\end{align*}
Then  (\ref{de_la.0}) follows from the bound:
\begin{align}\label{de_la.1}
&\|\mathbb{K}_0^{(11)}-\lambda_{\max}\|\le C(\ell/W)^2\quad \|\mathbb{K}_0^{(12)}\|\le CW^{-3/2},\\
&\mathbb{K}_0^{(22)}\le \lambda_{\max}-C/W.
\notag\end{align}
Indeed,   the last two inequalities of (\ref{de_la.1}) imply that for $|z-\lambda_{\max}|\le C(\ell/W)^2$ 
\begin{align*}
\mathbb{X}=\mathbb{K}_0^{(11)}-z-\mathbb{K}_0^{(12)}(\mathbb{K}_0^{(22)}-z)^{-1}\mathbb{K}_0^{(12)}=\mathbb{K}_0^{(11)}-z+O(W^{-2}).
\end{align*}
Hence, for  $|z-\lambda_{\max}|\le C(\ell/W)^2$ all eigenvalues of $\mathbb{X}$ differ from corresponding eigenvalues of $\mathbb{K}_0^{(11)}-z$
less than $cW^{-2}$. Then the first inequality of (\ref{de_la.1}) gives us (\ref{de_la.0}). In addition,  since for
$C(\ell/W)^2\le |z-\lambda_{\max}|\le c/W$ with some big enough $C$, one can conclude that $\mathbb{X}^{-1}$
 exists for such $z$. Hence,  there are no eigenvalues of $\mathbb{K}_0$ in the domain
$C(\ell/W)^2\le |z-\lambda_{\max}|\le c/W$. Thus, to finish the proof of (\ref{de_la.0}), it is sufficient to prove (\ref{de_la.1}).
The first inequality follows from Lemmas \ref{l:Z_0}, \ref{l:KPsi}. The second inequality follows from (\ref{exp_dec}) below.
The proof of third bound of  (\ref{de_la.1}) is given at the end of the proof of Lemma \ref{l:la_max}.

\medskip

Let us prove (\ref{de_Psi}). Consider the eigenvector $\Psi_{\ell,k}(R,U)$ of $\mathbb{K}_0$ corresponding to $|\lambda_{\ell,k}-\lambda_{\max}|\le C\log^2W/W^2$.
Let 
\[\Psi_{\ell,k}(R,U)=(\Psi_{\ell,k}^{(1)},\Psi_{\ell,k}^{(2)})=(\mathcal{P}_{\bar 0}\Psi_{\ell,k},(I-\mathcal{P}_{\bar 0})\Psi_{\ell,k})
\]
be decomposition of $\Psi_{\ell,k}$.
 Since $(\Psi_{\ell,k}^{(1)},\Psi_{\ell,k}^{(2)})$ is an eigenvector of $\mathbb{K}_0$,
it satisfies the equation
\begin{align*}
&\mathbb{K}_0^{(12)}\Psi_{\ell,k}^{(1)}+(\mathbb{K}_0^{(22)}-\lambda_{\ell,k})\Psi_{\ell,k}^{(2)}=0.
\end{align*}
thus
\begin{align}
\Psi_{\ell,k}^{(2)}=-(\mathbb{K}_0^{(22)}-\lambda_{\ell,k})^{-1}\mathbb{K}_0^{(12)}\Psi_{\ell,k}^{(1)},
\label{Psi^2}\end{align}
Given that the third inequality of (\ref{de_la.1}) is valid,  we have 
\begin{align}\label{1_Psi_2}
&\|(\mathbb{K}_0^{(22)}-\lambda_{\ell,k})^{-1}\|\le CW.
\end{align}
Assume that for any $p$ we prove the bound
\begin{align}\label{exp_dec}
\|\mathbb{K}_{0(\bar j,\bar k)}\|\le \tilde C_p\big(\min\{W^{-3/2},W^{-|\bar j-\bar k|/2}\}+W^{-p-1}\big),\quad \bar j\not=\bar k,\quad \min\{|\bar j|,|\bar k|\}\le L.
\end{align}
Introduce the matrix $\tilde{\mathbb{K}}_0$ which is obtained from $\mathbb{K}_0^{(22)}$ if we replace all entries 
$\mathbb{K}_{0,(\bar j,\bar j')}$ with $|\bar j-\bar j'|\ge 2p+2\wedge\min\{|\bar j|,|\bar j'|\}\le 2L$ by zeros.
It is easy to see that
\[
\|(\mathbb{K}_{0}^{(22)}-\lambda_{\ell,k})^{-1}-(\tilde{\mathbb{K}}_{0}-\lambda_{\ell,k})^{-1}\|\le CW^{-p}.
\] 
Consider $\tilde{\mathbb{K}}_{0}$ as a block matrix such that
\[
\tilde{\mathbb{K}}_{0}^{(11)}=\Big(\sum_{ 1\le|\bar m|<L}\mathcal{P}_{\bar m}\Big)\tilde{\mathbb{K}}_{0}\Big(\sum_{1\le|\bar m|<L}\mathcal{P}_{\bar m}\Big),\quad
\tilde{\mathbb{K}}_{0}^{(22)}=\Big(1-\sum_{ 1\le|\bar m|<L}\mathcal{P}_{\bar m}\Big)\tilde{\mathbb{K}}_{0}\Big(1-\sum_{1\le|\bar m|<L}\mathcal{P}_{\bar m}\Big).
\]
Observe that $\tilde{\mathbb{K}}_{0}^{(11)}$ contains only a finite number of diagonals and $\tilde{\mathbb{K}}_{0(\bar j,\bar j')}^{(12)}\not=0$ only if 
$|L-|\bar j||< 2p+2$ and $|\bar j'|< 2p+2$, and so $\tilde{\mathbb{K}}_{0(\bar j,\bar j')}^{(12)}$ contains only finitely many (depending on $p$) nonzero entries.
Hence, denoting $\hat{\mathbb{K}}_{0}=\mathrm{diag}\{\tilde{\mathbb{K}}_{0}^{(11)},\tilde{\mathbb{K}}_{0}^{(22)}\}$,
we get for any fixed $|\bar j_0|,|\bar j_0'|<L/3$
\begin{align}\notag
&\|(\tilde{\mathbb{K}}_{0}-\lambda_{\ell,k})^{-1}_{(\bar j_0,\bar j_0')}-(\hat{\mathbb{K}}_{0}-\lambda_{\ell,k})^{-1}_{(\bar j_0,\bar j_0')}\|
\le \sum_{|L-|\bar j||< 2p+2} \|(\hat{\mathbb{K}}_{0}^{(11)}-\lambda_{\ell,k})^{-1}_{(\bar j_0,\bar j)}\tilde{\mathbb{K}}_{0(\bar j,\bar j')}^{(12)}
(\tilde{\mathbb{K}}_{0}-\lambda_{\ell,k})^{-1}_{(\bar j',\bar j_0')}\|\\
&\le C_pW\max_{|L-|\bar j||< 2p+2}\| (\hat{\mathbb{K}}_{0}^{(11)}-\lambda_{\ell,k})^{-1}_{(\bar j_0,\bar j)}\|=C_pW\max_{|L-|\bar j||< 2p+2}\| (\tilde{\mathbb{K}}_{0}^{(11)}-\lambda_{\ell,k})^{-1}_{(\bar j_0,\bar j)}\|.
\label{diff_bbK}\end{align}
Here we  used (\ref{1_Psi_2}). But  if we consider $\tilde{\mathbb{K}}_{0}^{(11)}-\lambda_{\ell,k}$ as a sum of its diagonal $\mathbb{K}_{d}$
and off diagonal $\mathbb{K}_{off}$ parts then
\[(\tilde{\mathbb{K}}_{0}^{(11)}-\lambda_{\ell,k})^{-1}=\sum \mathbb{K}_{d}^{-1}(\mathbb{K}_{off}\mathbb{K}_{d}^{-1})^s,
\]
one can see easily that, in view of  (\ref{exp_dec}),
\[
\|(\tilde{\mathbb{K}}_{0}^{(11)}-\lambda_{\ell,k})^{-1}_{(\bar j_0,\bar j)}\|\le W (CW^{-1/2})^{|\bar j_0-\bar j|}.
\]
Since we consider $|j_0|<L/3$, we have$|\bar j_0-\bar j|>L/2$  in the last line of (\ref{diff_bbK}),  and so
\[
\|(\tilde{\mathbb{K}}^{(22)}-\lambda_{\ell,k})^{-1}_{(\bar j_0,\bar j)}\|\le W (CW^{-1/2})^{|\bar j_0-\bar j|}+CW^{-p}.
\]
Now  (\ref{Psi^2}) and (\ref{exp_dec}) imply (\ref{de_Psi}).

\medskip

To finish the proof, we are left to check (\ref{exp_dec}). 
Repeating the argument of Lemma \ref{l:Z_0}, we conclude that to find $\mathbb{K}_{0(\bar j,\bar k)}$ one should compute the
sum (with some $W$-independent coefficient) of the integrals
\begin{align*}
I_{\bar j,\bar k}(m,\tilde{p}_\ell,)=\int dR_1dR_2 &\mathcal{A}_*(R_1,R_2)e^{-u_*^2\alpha\Tr R_2^2}P_{\bar j}(R_2)P_{\bar k}(R_1)
\\
&\times p_{\ell}(R_1/W^{1/2},R_2/W^{1/2})\prod_{s=1}^{2m} [R_1,R_2-\mu R_1]_{\alpha_s\beta_s}.
\end{align*}
Hear $\mathcal{A}_*(R_1,R_2)$ and $\mu$ are written as in (\ref{repr_A}), $P_{\bar j}(R_2),P_{\bar k}(R_1)$ are the products of the Hermite polynomials   (see (\ref{psi_k})), and $p_{\ell}(R_1/W^{1/2},R_2/W^{1/2})$ is some uniform polynomials
of degree $\ell$ of $R_1,R_2$ written as in (\ref{expR}). Integrating by parts $2m$ times with respect to $R_2$ and using the recurrent formulas for the Hermite polynomial 
and their derivatives, we conclude that
\begin{align*}
I_{\bar j,\bar k}(m,\tilde{p}_\ell)=&O(W^{-(2m+\ell+1)/2}),\quad |\bar j-\bar k|>2m+\ell,\\
I_{\bar j,\bar k}(m,\tilde{p}_\ell)\le &CW^{-(\ell+2m)/2},\quad |\bar j-\bar k|\le 2m+\ell.
\end{align*}
These relations prove (\ref{exp_dec}).

\medskip

The proof of the last bound of (\ref{de_la.1}) is based on the simple proposition
\begin{proposition}\label{p:norm} Given a $2\times 2$  block matrix $\mathbb{M}=\mathbb{M}^*$ with blocks $\mathbb{M}^{(\alpha\beta)}$, such that
\[ \mathbb{M}^{(11)}<m_1,\quad \mathbb{M}^{(22)}<m_2<m_1.\]
Then 
\begin{align}\label{p:n.1}
\lambda_{\max}(\mathbb{M})\le \lambda_*= m_1+\|\mathbb{M}^{(12)}\|^2|m_2-m_1|^{-1}.
\end{align}
\end{proposition}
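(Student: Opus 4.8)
The plan is to prove Proposition \ref{p:norm} by a direct quadratic form estimate, completing the square in the cross term. First I would take an arbitrary unit vector $v=(v_1,v_2)$ decomposed according to the block structure, so that
\begin{align*}
(\mathbb{M}v,v)=(\mathbb{M}^{(11)}v_1,v_1)+2\Re(\mathbb{M}^{(12)}v_2,v_1)+(\mathbb{M}^{(22)}v_2,v_2).
\end{align*}
Using the hypotheses $\mathbb{M}^{(11)}<m_1$ and $\mathbb{M}^{(22)}<m_2$, together with $\|v_1\|^2+\|v_2\|^2=1$, this is bounded by
\begin{align*}
(\mathbb{M}v,v)\le m_1\|v_1\|^2+m_2\|v_2\|^2+2\|\mathbb{M}^{(12)}\|\,\|v_1\|\,\|v_2\|.
\end{align*}

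Next I would treat the right-hand side as a quadratic form in the two nonnegative scalars $a=\|v_1\|$, $b=\|v_2\|$ subject to $a^2+b^2=1$. Writing $m_2=m_1-(m_1-m_2)$ with $m_1-m_2=|m_1-m_2|>0$, the bound becomes
\begin{align*}
m_1(a^2+b^2)-|m_1-m_2|\,b^2+2\|\mathbb{M}^{(12)}\|\,ab=m_1-|m_1-m_2|\,b^2+2\|\mathbb{M}^{(12)}\|\,ab.
\end{align*}
Now I complete the square in $b$: since $a\le 1$, we have $-|m_1-m_2|b^2+2\|\mathbb{M}^{(12)}\|ab\le -|m_1-m_2|b^2+2\|\mathbb{M}^{(12)}\|b$, and maximizing the scalar function $b\mapsto -|m_1-m_2|b^2+2\|\mathbb{M}^{(12)}\|b$ over $b\in\mathbb{R}$ gives the value $\|\mathbb{M}^{(12)}\|^2/|m_1-m_2|$, attained at $b=\|\mathbb{M}^{(12)}\|/|m_1-m_2|$. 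Hence $(\mathbb{M}v,v)\le m_1+\|\mathbb{M}^{(12)}\|^2|m_2-m_1|^{-1}=\lambda_*$ for every unit vector $v$, and since $\mathbb{M}=\mathbb{M}^*$ this yields $\lambda_{\max}(\mathbb{M})\le\lambda_*$, which is \eqref{p:n.1}.

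There is essentially no obstacle here; the only mild point to be careful about is that the optimizing value of $b$ need not satisfy the constraint $a^2+b^2=1$ (indeed it may exceed $1$), but since we only need an upper bound and the unconstrained maximum of the scalar quadratic dominates the constrained one, the estimate goes through regardless. An alternative, equally short route would be to invoke the Schur complement: $\mathbb{M}-m_1\le 0$ would follow once $(\mathbb{M}^{(22)}-m_1)<0$ and the Schur complement $(\mathbb{M}^{(11)}-m_1)-\mathbb{M}^{(12)}(\mathbb{M}^{(22)}-m_1)^{-1}\mathbb{M}^{(12)*}$ is nonpositive, but the shift by $m_1$ alone is not quite enough, so one shifts instead by $\lambda_*$ and checks the Schur complement is nonpositive using $\|\mathbb{M}^{(12)}\|^2\le\|\mathbb{M}^{(12)}\|^2$ and the operator monotonicity of the resolvent; I would present the elementary quadratic-form argument above since it is cleaner and self-contained.
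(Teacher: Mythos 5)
Your quadratic-form argument is correct: for a self-adjoint bounded operator one has $\lambda_{\max}(\mathbb{M})=\sup_{\|v\|=1}(\mathbb{M}v,v)$, your Cauchy--Schwarz bound on the off-diagonal term is valid, and relaxing $a\le 1$ and then maximizing $-|m_1-m_2|b^2+2\|\mathbb{M}^{(12)}\|b$ over all real $b$ legitimately dominates the constrained maximum, giving exactly $\lambda_*=m_1+\|\mathbb{M}^{(12)}\|^2|m_2-m_1|^{-1}$. This is, however, a genuinely different route from the paper's. The paper proves the bound by the Schur complement: for any $\lambda>\lambda_*$ one has
\begin{align*}
\mathbb{M}^{(11)}-\lambda-\mathbb{M}^{(12)}(\mathbb{M}^{(22)}-\lambda)^{-1}\mathbb{M}^{(21)}\le m_1-\lambda+\|\mathbb{M}^{(12)}\|^2|m_2-m_1|^{-1}=\lambda_*-\lambda<0,
\end{align*}
so the Schur complement and $\mathbb{M}^{(22)}-\lambda$ are both invertible, hence $\mathbb{M}-\lambda$ is invertible for all $\lambda>\lambda_*$ and self-adjointness finishes the argument. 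Your variational proof is more elementary and self-contained (it needs no resolvent estimate such as $\|(\mathbb{M}^{(22)}-\lambda)^{-1}\|\le(\lambda-m_2)^{-1}$), while the paper's version is the one whose mechanism recurs throughout Sections 4 and 5 (e.g.\ in the proofs of Lemmas \ref{l:la_max} and \ref{l:K_L}), where the authors repeatedly locate eigenvalues via invertibility of Schur complements; presenting the proposition in that language makes those later applications read as instances of the same computation. Your closing remark about the alternative Schur-complement route contains a garbled clause (``using $\|\mathbb{M}^{(12)}\|^2\le\|\mathbb{M}^{(12)}\|^2$''), but since that paragraph is not part of your actual proof it does not affect correctness.
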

\textit{Proof of Proposition \ref{p:norm}}. Bound (\ref{p:n.1}) follows from the inequality valid for any $\lambda> \lambda_*$ :
\[
\mathbb{M}^{(11)}-\lambda-\mathbb{M}^{(12)}(\mathbb{M}^{(22)}-\lambda)^{-1}\mathbb{M}^{(21)}\le m_1-\lambda
+\|\mathbb{M}^{(12)}\|^2|m_2-m_1|^{-1}=\lambda_*-\lambda
\]
Hence, the matrix in the l.h.s. is invertible, and since $M^{(22)}-\lambda$ is also invertible, we conclude  that $M-\lambda$ is invertible for 
$\lambda> \lambda_*$.

$\square$

\textit{Proof of the last bound of (\ref{de_la.1}).} Consider 
\[ \mathbb{M}=\mathbb{K}_0^{(22)},\quad \mathbb{M}^{(11)}=P_L\mathbb{K}_0^{(22)}P_L,\quad \mathbb{M}^{(22)}=(I-P_L)\mathbb{K}_0^{(22)}(I-P_L),\quad
\mathbb{M}^{(12)}=P_L\mathbb{K}_0^{(22)}(I-P_L).
\]
Then (\ref{KPsi.1}) yields
\[
 \mathbb{M}^{(11)}\le \lambda_{\max}-c/W,\quad \|M^{(12)}\|\le CW^{-3/2},
\]
and (\ref{l2.1}) implies
\[ \mathbb{M}^{(22)}\le 1-C\log W/W.\]
Choosing $\delta=c/2W$ we obtain the last bound of (\ref{de_la.1}).

$\square$

In the following lemma we study the action of $\mathcal{K}_\zeta$ on the vectors from  $\mathcal{H}_L\otimes \mathcal{E}_M$. 
An important role below belongs to the vectors of the form
\begin{align}\label{Psi_*}
 \Psi_{\epsilon,h}(R,U)=\Psi(R-\epsilon \mathcal{M}(U))h(U),\quad\Psi\in\mathcal{H}_L,\quad h(U)\in \mathcal{E}_{2M},
\end{align}
 with  $\mathcal{M}(U)$ of (\ref{M(U)}) and $\epsilon=(W/N)^{1/2}$.

In what follows
it will be convenient to apply $\mathcal{K}_\zeta$ to the vectors constructed from eigenvectors of $\mathcal{K}_0$ or $\mathcal{A}$ of (\ref{A}).
But to apply Lemma \ref{l:Z_0} or Lemma \ref{l:KPsi} to some vector $\Psi(R,U)$, we need to know that $\Psi(R,U)$ can be expanded in a sum
of vectors belonging to $\mathcal{H}_L$. Hence in the different places below we are using the following simple observation.
Since by the condition of Theorems \ref{t:Gin} -- \ref{t:locGin} we have
$W>N^{\varepsilon_0}$, one can choose some $W,N$-independent $p$ such that $W^{p}>N^4$. If
$\Psi_{\ell,k}$ is an eigenvector of  $\mathbb{K}_0$  of (\ref{bb_K_0}) with eigenvalue $\lambda_{\ell,k}$ 
satisfying  (\ref{de_la.0}), then taking this  $p$ in (\ref{de_Psi}) sufficiently big and denoting $\tilde \Psi_{\ell,k}$ the r.h.s. of (\ref{de_Psi}) without the
 remainder $O(W^{-p})$, we have
\begin{align}\label{Psi^p}
\Psi_{k,\ell}=\tilde \Psi_{\ell,k}+O(N^{-2}),\quad \mathbb{K}_0\tilde \Psi_{\ell,k}=\lambda_{\ell,k}\tilde \Psi_{\ell,k}+O(N^{-2}).
\end{align}
Thus, applying any assertion of Lemmas \ref{l:Z_0}, \ref{l:KPsi} to $\Psi_{k,\ell}$, we replace it by $\tilde \Psi_{\ell,k}$,
 then apply the assertion which we need, and then come back to  $\Psi_{k,\ell}$, using that the error of the replacement is very small.
 
 The same argument allows us to apply  assertions of Lemmas \ref{l:Z_0}, \ref{l:KPsi} to vectors $\Psi^{(\mu)}_{|\bar j|}$ described in
 Lemma \ref{l:A}.  Using (\ref{diff_eig}) and $\tilde\Psi^{(\mu)}_{|\bar j|}$ (which are analogues of $\tilde \Psi_{\ell,k}$) belong to $\mathcal{H}_{L+p}$,
we conclude that assertions of  Lemmas \ref{l:Z_0}, \ref{l:KPsi}  are valid for them.

\begin{lemma}\label{c:1} 
Given any function of the form (\ref{Psi_*})
we have
\begin{align}\label{cor:1.0}
(\mathcal{K}_\zeta \Psi_{\epsilon,h})(R_1,U_1)= e^{2\nu(U_1)/N}(\mathcal{K}_0\Psi_{0,h})\big(R_1-\epsilon\mathcal{M}(U_1)\big)
+O(\epsilon W^{-3/2}+\epsilon L M/W^2).
\end{align}
where  $\nu$ is defined in (\ref{nu}).
For  functions of the form
\begin{align}\label{Psi_eps}
\Psi_{\ell,k,\epsilon}(R,U)=\Psi_{\ell,k}(R-\epsilon\mathcal{M}(U),U)
\end{align}
with $\{\Psi_{\ell,k}(R)\}$ defined in Lemma \ref{l:la_max}, we have
\begin{align}\label{cor:1.2}
(\mathcal{K}_\zeta \Psi_{\ell,k,\epsilon},\Psi_{\ell',k',\epsilon})=&\delta_{\ell,\ell'}\delta_{k,k'}\lambda_{\ell}+O(N^{-1}
+\epsilon^2 W^{-3/2}+\epsilon(\ell/ W)^{2}),\quad \max\{\ell,\ell'\}\ge 1,\\
(\mathcal{K}_\zeta\Psi_{0,0,\epsilon},\Psi_{0,0,\epsilon})
 =&\lambda_{\max}+O(\epsilon N^{-1}+\epsilon^2W^{-3/2}),
\label{cor:1}\end{align}
with $\lambda_\ell$ of (\ref{lam_l}).
 \end{lemma}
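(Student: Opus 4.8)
The plan is to trace the $\zeta$-dependence of $\mathcal{K}_\zeta$ through the expansions (\ref{pert})--(\ref{A_zeta}) and to show that, near the maximum surface, $\mathcal{K}_\zeta$ is — up to the stated errors — the conjugate of $\mathcal{K}_0$ by the $U$-dependent shift $\mathcal{S}\colon\Psi(R,U)\mapsto\Psi(R-\epsilon\mathcal{M}(U),U)$, twisted by the multiplication operator $e^{\hat\nu/N}$. I will prove (\ref{cor:1.0}) first and then deduce (\ref{cor:1.2}) and (\ref{cor:1}) from it together with Lemma \ref{l:la_max}.

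\emph{Proof of} (\ref{cor:1.0}). Start from $\mathcal{K}_\zeta(R_1,U_1,R_2,U_2)=\mathcal{A}_\zeta(R_1,U_1,R_2,U_2)\,K_{R_1,R_2}(U_2^*U_1)$ and use (\ref{A_zeta}) together with the factorization, read off from (\ref{ti-f}), $F_\zeta(R,U)=e^{\nu(U)/N}F_0(R-\epsilon\mathcal{M}(U))\bigl(1+r(R,U)\bigr)$, where $r=O(\epsilon W^{-3/2}(1+\|R\|^2))$ accounts for the mismatch between the explicitly $\zeta$-dependent cubic term of $\tilde f_\zeta$ and the shift of the $\zeta$-independent one; on test functions of bounded $R$-degree this contributes only $O(\epsilon W^{-3/2})$. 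Since $B(R_1-R_2)$ is translation invariant, after inserting this factorization the kernel equals $e^{(\nu(U_1)+\nu(U_2))/N}F_0(R_1-\epsilon\mathcal{M}(U_1))B(R_1-R_2)\mathcal{Z}(R_1,R_2)F_0(R_2-\epsilon\mathcal{M}(U_2))K_{R_1,R_2}(U_2^*U_1)$ up to a relative error $O(\epsilon W^{-3/2})$. I then make three reductions, each replacing a factor that is only ``almost'' translation invariant by its version shifted by the \emph{fixed} matrix $\mathcal{M}(U_1)$: (i) $\mathcal{Z}(R_1,R_2)\to\mathcal{Z}(R_1-\epsilon\mathcal{M}(U_1),R_2-\epsilon\mathcal{M}(U_1))$ by (\ref{diff_Z}); (ii) $K_{R_1,R_2}\to K_{R_1-\epsilon\mathcal{M}(U_1),R_2-\epsilon\mathcal{M}(U_1)}$ by (\ref{diff_K_eps}), at cost $O(\epsilon LM/W^2)$; (iii) in $F_0(R_2-\epsilon\mathcal{M}(U_2))$, in $e^{\nu(U_2)/N}$ and in the argument $\Psi(R_2-\epsilon\mathcal{M}(U_2))$ of the test function, replace $\mathcal{M}(U_2),\nu(U_2)$ by $\mathcal{M}(U_1),\nu(U_1)$. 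After (i)--(iii) the change of variables $R_2'=R_2-\epsilon\mathcal{M}(U_1)$ identifies the integral with $e^{2\nu(U_1)/N}(\mathcal{K}_0\Psi_{0,h})(R_1-\epsilon\mathcal{M}(U_1),U_1)$, which is (\ref{cor:1.0}). Reduction (iii) is the delicate one: its first-order term is proportional to $\mathcal{M}(U_2)-\mathcal{M}(U_1)=-\tfrac{\zeta\bar z}{2u_*^2}\bigl(U_2^*LU_2-U_1^*LU_1\bigr)$ (and similarly for $\nu$), and since $K_{R_1,R_2}(U_2^*U_1)$ concentrates on the $O(W^{-1}\log W)$-neighbourhood of $U_1=U_2$ and its first moment $\bigl\langle U_2^*LU_2-U_1^*LU_1\bigr\rangle_{K}=O(W^{-2})$ by the reflection symmetry of the unitary kernel exploited in the proofs of Lemmas \ref{l:Z_0}, \ref{l:KPsi} (see (\ref{int})), this costs only $O(\epsilon LW^{-2})$, the second-order term being smaller; collecting the errors gives $O(\epsilon W^{-3/2}+\epsilon LM/W^2)$.

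\emph{Proof of} (\ref{cor:1.2}), (\ref{cor:1}). Apply (\ref{cor:1.0}), by linearity, with $\Psi_{0,h}$ replaced by the eigenvector $\Psi_{\ell,k}$ of $\mathbb{K}_0$; this is legitimate because by (\ref{de_Psi}) and the choice of $p$ with $W^{p}>N^{4}$ discussed before this lemma, $\Psi_{\ell,k}$ agrees to $O(N^{-2})$ with an element of $\mathcal{H}_L\otimes\mathcal{E}^{(\ell)}$ whose leading part has $R$-degree $O(1)$, so that the cubic error in (\ref{cor:1.0}) is $O(\epsilon W^{-3/2})$ and ``$M$'' there equals $\ell$. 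As $\mathcal{K}_0$ is block diagonal in $\ell$ (its unitary part $K_{R_1,R_2}$ preserves each $\mathcal{E}^{(\ell)}$, its Hermitian part $\mathcal{A}$ is $U$-independent), $\mathbb{K}_0$ restricted to $\mathcal{H}_0\otimes\mathcal{E}^{(\ell)}$ coincides with $\mathcal{K}_0$, whence $\mathcal{K}_0\Psi_{\ell,k}=\lambda_{\ell,k}\Psi_{\ell,k}+O(N^{-2})$ and
\[
(\mathcal{K}_\zeta\Psi_{\ell,k,\epsilon},\Psi_{\ell',k',\epsilon})
=\lambda_{\ell,k}\bigl(e^{2\hat\nu/N}\mathcal{S}\Psi_{\ell,k},\mathcal{S}\Psi_{\ell',k'}\bigr)+O\bigl(\epsilon^2 W^{-3/2}+\epsilon\ell W^{-2}+N^{-1}\bigr),
\]
where I used that the first-order-in-$\epsilon$ remainders in (\ref{cor:1.0}) vanish in this inner product: they are $R$-integrals of $\Tr\mathcal{M}(U_1)R^2(\dots)\,\Psi_{*\bar 0}(R)^2$, odd since $\mathcal{M}(U)$ is traceless while $\Psi_{*\bar 0}$ is even, leaving $O(\epsilon^2 W^{-3/2})$. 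Since $\mathcal{S}$ is unitary (a translation on each $R$-fibre) and commutes with the multiplication operator $e^{2\hat\nu/N}$, one has $\bigl(e^{2\hat\nu/N}\mathcal{S}\Psi_{\ell,k},\mathcal{S}\Psi_{\ell',k'}\bigr)=\bigl(e^{2\hat\nu/N}\Psi_{\ell,k},\Psi_{\ell',k'}\bigr)=\delta_{\ell\ell'}\delta_{kk'}+O(N^{-1})$. For $\max\{\ell,\ell'\}\ge1$ this, together with $|\lambda_{\ell,k}-\lambda_\ell|\le C(\ell/W)^2$ from Lemma \ref{l:la_max}, gives (\ref{cor:1.2}). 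For $\ell=\ell'=k=k'=0$ we use $\lambda_{0,0}=\lambda_{\max}$ and that $\Psi_{0,0}$ is constant on $U(2)$ (all its Fourier modes lie in $\mathcal{E}^{(0)}$), so $\langle\nu(U)\rangle_{U(2)}=0$ (Haar average of a traceless form) kills the $O(N^{-1})$ term and yields the sharper (\ref{cor:1}).

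\textbf{Main obstacle.} The crux is reduction (iii): because the shift $\mathcal{M}=\mathcal{M}(U)$ depends on the unitary variable, $\Psi_{\epsilon,h}$ is not a uniformly translated function and $\mathcal{K}_\zeta\Psi_{\epsilon,h}$ cannot be converted into $\mathcal{K}_0$ by a single change of variables; one must reconcile $\mathcal{M}(U_1)$ on the output fibre with $\mathcal{M}(U_2)$ on the input fibre, and staying inside the stated error budget forces one to exploit not only the concentration of the unitary kernel $K_{R_1,R_2}(U_2^*U_1)$ near $U_1=U_2$ but also its reflection symmetry, which makes the relevant first moment $O(W^{-2})$ rather than $O(W^{-1}\log W)$. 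The transfer-operator estimates (\ref{diff_Z}) and (\ref{diff_K_eps}), which bound precisely the failure of translation invariance of $\mathcal{Z}$ and $K$, are what make this work; the most delicate bookkeeping is distinguishing remainders that are genuinely of order $\epsilon$ from those of order $\epsilon^2$, the latter being needed to reach the sharp constants in (\ref{cor:1.2}) and (\ref{cor:1}).
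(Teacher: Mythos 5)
Your treatment of (\ref{cor:1.0}) and of the off-diagonal/higher-$\ell$ matrix elements (\ref{cor:1.2}) follows essentially the same route as the paper: factor $F_\zeta(R,U)=e^{\nu(U)/N}F_0(R-\epsilon\mathcal{M}(U))(1+O(\epsilon W^{-3/2}))$ as in (\ref{f_1}), exploit translation invariance of $B$, invoke (\ref{diff_Z}) and (\ref{diff_K_eps}) for the failure of translation invariance of $\mathcal{Z}$ and $K$, reconcile $\mathcal{M}(U_2)$ with $\mathcal{M}(U_1)$ through the concentration of $K_{R_1,R_2}(U_2^*U_1)$ near scalar $U_2^*U_1$ (the paper phrases this as a commutator bound with the multiplication operators by $\cos\theta,\sin\theta$ via the Legendre recursions, but the mechanism — vanishing of the first angular moments as in (\ref{int}) — is the one you use), and then kill the first-order-in-$\epsilon$ diagonal contributions by the tracelessness of $\mathcal{M}(U)$. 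One remark: the correct reason the integral $\int\Psi_{0,0}^2(R)\Tr(\mathcal{M}\tilde\varphi(R))\,dR$ vanishes is not parity but the unitary invariance of $\Psi_{0,0}$ under $R\mapsto VRV^*$ combined with $\int (V\mathcal{M}V^*)_{\alpha\beta}\,dV=0$ for $\Tr\mathcal{M}=0$ (this is (\ref{int_2})); your ``odd/even'' phrasing is imprecise ($\Tr\mathcal{M}R^2$ is even in $R$), though you do identify the essential hypothesis $\Tr\mathcal{M}=0$.

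The genuine gap is in (\ref{cor:1}). For $\ell=\ell'=0$ you only remove the $O(N^{-1})$ term coming from $e^{2\nu/N}$ via $\int\nu(U)\,dU=0$, and (implicitly, from your (\ref{cor:1.2}) argument) the first-order part of the $F_\zeta\to F_0$ remainder. But your reduction (i) — replacing $\mathcal{Z}(R_1,R_2)$ by $\mathcal{Z}(R_1-\epsilon\mathcal{M},R_2-\epsilon\mathcal{M})$ — contributes an error that (\ref{diff_Z}) only bounds by $O(\epsilon W^{-2})$. For $\max\{\ell,\ell'\}\ge 1$ this is harmless because it is dominated by $\epsilon(\ell/W)^2$, but for $\ell=\ell'=0$ one has $\epsilon W^{-2}=N^{-1/2}W^{-3/2}$, which exceeds both $\epsilon N^{-1}$ and $\epsilon^2W^{-3/2}=N^{-1}W^{-1/2}$ precisely in the localized regime $W^2\ll N$; after $N$ iterations of the transfer operator the resulting relative error is of order $(N/W^3)^{1/2}$, which is not $o(1)$ when $\varepsilon_0<1/3$, so Theorem \ref{t:locGin} would not follow. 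You must show that this $\mathcal{Z}$-replacement error is in fact $O(\epsilon^2W^{-2})$ on the diagonal matrix element with $\Psi_{0,0,\epsilon}$. This is what the paper's $I(\epsilon)$ argument ((\ref{I(eps)})--(\ref{I(eps).1})) accomplishes: $I(\epsilon)$ admits an asymptotic expansion $\sum_{k\ge 4}W^{-k/2}\psi_k(\epsilon)$ with $\psi_k$ analytic in $\epsilon$, and $I'(0)=0$ by the same unitary-invariance identity (\ref{int_2}) (using that $\Psi_{0,0}$ is an eigenvector of $\mathcal{A}$), whence $I(\epsilon)=O(\epsilon^2W^{-2})$. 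An analogous check is needed for the $K$-replacement of your reduction (ii) on the $\ell=0$ block. Without this step the sharper bound (\ref{cor:1}) is not established.
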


 \textit{Proof}.
 Expand $F_\zeta(R_2,U_2)\tilde \Psi(R_2-\epsilon\mathcal{M}(U_2))$  into a series with respect to $\epsilon$.
 Note that if $U$ is written as in (\ref{U}), then
  \[\Tr \phi (R)\mathcal{M}(U)=a(R)\cos\theta+\sin\theta (b(R)e^{i\psi}+\bar b(R)e^{-i\psi})
  \]
  Hence, each term of the expansion with respect to $\epsilon$ can be written  in terms of operators $\hat\Phi_1$ 
  and $\hat\Phi_2$ of multiplication by $\cos\theta$  and $\sin\theta $.
 We use the representation $t^{(\ell)}_{0k}$ in terms of the associated Legendre polynomials (see (\ref{as_P})),
and   the recursion formulas  
\begin{align}\label{rec_Leg}
\cos\theta \,P^{(\ell)}_{0k}(\cos\theta)=&c_{\ell,k}P^{(\ell+1)}_{0k}(\cos\theta)+d_{\ell,k}P^{(\ell-1)}_{0k}(\cos\theta),\\
\sin\theta \,P^{(\ell)}_{0k}(\cos\theta)=&c_\ell\big(P^{(\ell+1)}_{0k+1}(\cos\theta)-P^{(\ell-1)}_{0k+1}(\cos\theta)\big).
\notag\end{align}
Here $c_{\ell,k},d_{\ell,k},c_\ell$ are some bounded uniformly in $k,\ell$
coefficients, whose concrete form of is not important for us. 

 Then, by (\ref{KPsi0}) we have for any $h\in\mathcal{E}_{M}$
 \begin{align*}
 \int dR_2  B(R_1-R_2) \mathcal{Z}(R_1,R_2)F_0(R_2)\Psi_{0}(R_2)([\Phi_{\alpha},K_{R_1,R_2}]h)=O(W^{-1/2}(\ell/W)^{2}),\quad \alpha=1,2,
 \end{align*}
where $[.,.]$ denotes a  commutator. Hence, for operator of multiplication by $F_\zeta(R,U)$ the error term for the commutator
is  $O(\epsilon^s W^{-1/2}(\ell/W)^{2})$.
Notice that zero order with respect to $\epsilon$ term contain $e^{\nu(U_2)/N}$, and the commutator with this term gives us an error $O(N^{-1} W^{-1/2}(\ell/W)^{2})$. Therefore,
\begin{align}
\label{cor:1.3}
(\mathcal{K}_\zeta \Psi_{\epsilon,h})(R_1,U_1)=& F_\zeta(R_1,U_1)
\int B(R_1-R_2) \mathcal{Z}(R_1,R_2)F_\zeta(R_2,U_1)\\ 
&\times 
\Psi(R_2-\epsilon \mathcal{M}(U_1))(K_{R_1,R_2}h)(U_1)dR_2+O(\epsilon W^{-1/2}(\ell/W)^{2}).
\notag\end{align}
Then we replace $F_\zeta(R_1,U_1)$ by $F_0(R_1-\epsilon\mathcal{M}(U_1))$ with an error $O(\epsilon W^{-3/2})$, 
using that in view of
(\ref{pert}) and   (\ref{ti-f})
\begin{align}\label{f_1}
&F_\zeta(R,U)=F_0(R-\epsilon\mathcal{M}(U))e^{f_1(R,U)},\\
&f_1(R,U)=C_1\nu(U)/N+C_2\epsilon W^{-3/2}\Tr\mathcal{M}(U)R^2\varphi_2(1+R/W^{1/2}),
\notag\end{align}
where $\varphi_2(R)$ is some analytic function obtained from $\varphi_0(R)$ and $\varphi_1(R)$ of (\ref{ti-f}).

Finally, using (\ref{diff_Z}) and (\ref{diff_K_eps}),  we replace
$\mathcal{Z}(R_1,R_2)$ by $\mathcal{Z}(R_1-\epsilon\mathcal{M}(U_1),R_2-\epsilon\mathcal{M}(U_1))$ with an error $O(\epsilon/W^2)$, 
and $K_{R_1,R_2}$ by $K_{R_1-\epsilon\mathcal{M}(U_1),R_2-\epsilon\mathcal{M}(U_1)}$ with an error $O(\epsilon\log^2W/W^2)$.
Thus, integrating over $R_2$ and changing $R_2-\epsilon\mathcal{M}(U_1)\to R_2$, we get (\ref{cor:1.0}).

It follows directly from (\ref{cor:1.0}), that
\begin{align}
(\mathcal{K}_\zeta \Psi_{\ell,k,\epsilon})(R,U)=&\lambda_{\ell}e^{2\nu(U_1)/N}\Psi_{\ell,k,\epsilon}(R,U)+O(\epsilon W^{-3/2}).
\end{align}
The term $O(\epsilon L\ell W^{-2})$ becomes  $O(\epsilon \ell W^{-5/2})$ by (\ref{de_Psi}). 
Thus, we need only to check that if we take the scalar product of the l.h.s. with $\Psi_{\ell',k',\epsilon}$, then
the  term of order $O(\epsilon W^{-3/2})$ disappears. We recall that the term appears because of the replacement of $F_\zeta(R,U)$
by $F_0(R-\epsilon\mathcal{M}(U)$ (see (\ref{f_1})). Therefore, its contribution to the scalar product will have the form
\begin{align*}
&\epsilon W^{-3/2}\int\Tr\mathcal{M}(U)R^2\varphi_2(1+R/W^{1/2})\Psi_{\ell,k}(R,U)\Psi_{\ell', k'}(R,U)dRdU\\=
&\epsilon W^{-3/2}\int\Tr\mathcal{M}(U)R^2\varphi_2(1+R/W^{1/2})\Psi^2_{0,0}(R)dRh^{(\ell)}_{k}(U)h^{(\ell')}_{k'}(U)dU+O(\epsilon W^{-2}),
\end{align*}
where  we used (\ref{de_Psi}) to replace $\Psi_{\ell, k}(R,U)$ by $\Psi_{0,0}(R)h^{(\ell)}_{k}(U)+O(W^{-1/2})$ and $\Psi_{\ell', k'}(R,U)$ by 
$\Psi_{00}(R)h^{(\ell')}_{k'}(U)+O(W^{-1/2})$.

In order to compute the last integral, we observe that $\Psi_{0,0}$ is invariant with respect to the 
change $R\to VRV^*$ with any unitary $V$.
Making this change and integrating with respect to $dV$, we obtain for any $\tilde \varphi$ and any matrix
 $\mathcal{M}:\mathcal{M}=\mathcal{M}^*,\,\Tr \mathcal{M}=0$
\begin{align}\label{int_2}
\int dR\Psi_{00}^2(R_1)\Tr (\tilde \varphi(R)\mathcal{M})=\int \Psi_{00}^2(R)\Tr (V\mathcal{M}V^*\tilde \varphi(R))dVdR=0,
\end{align}
since
\[
\int (V\mathcal{M}V^*)_{\alpha,\beta}dV=0.
\]

To prove (\ref{cor:1}) we need to check that for $\ell=0,k=0$ the linear with respect to $\epsilon$ error terms  in (\ref{cor:1.0}) disappear. 
Let us check that for any $U$ if we set
\begin{align}\label{I(eps)}
\mathcal{A}_{\epsilon,U}(R_1,R_2)=&B(R_1-R_2)\mathcal{Z}(R_1,R_2)F_0(R_1-\epsilon \mathcal{M}(U))F_0(R_2-\epsilon \mathcal{M}(U)),
\end{align}
then
\begin{align}\label{I(eps).1} 
I(\epsilon )=&(\mathcal{A}_{\epsilon,U}\Psi_{0,0,\epsilon},\Psi_{0,0,\epsilon})-(\mathcal{A}\Psi_{0,0},\Psi_{0,0})=O(\epsilon^2W^{-2}),
\end{align}
Since $I(\epsilon)$ could be written in the form
\begin{align*}
I(\epsilon )=&\int B(R_1-R_2)\Big(\mathcal{Z}(R_1,R_2)-\mathcal{Z}(R_1-\epsilon \mathcal{M}(U),R_2-\epsilon \mathcal{M}(U))\Big)\\
&\times F_0(R_1-\epsilon \mathcal{M}(U))F_0(R_2-\epsilon \mathcal{M}(U)))
\Psi_{0,0}(R_1-\epsilon \mathcal{M}(U))\Psi_{0,0}(R_2-\epsilon \mathcal{M}(U))dR_1dR_2,
\end{align*}
(\ref{diff_Z})  implies that
$|I(\epsilon)|\le C\epsilon W^{-2}$.  On the other hand, $I(\epsilon )$ for any $\epsilon$ can be expand in the asymptotic series 
with respect to $W^{-1/2}$,  i.e. for any $p>0$
\[
I(\epsilon)=\sum_{k= 4}^pW^{-k/2}\psi_k(\epsilon)+O(W^{-(p+1)/2}),
\]
where $\{\psi_k\}$ are analytic in epsilon functions. Hence, it is sufficient to check that $I'(0)=0$. But since $\Psi_{0,0}$ is an eigenvector
of $\mathcal{A}$ corresponding to $\lambda_{\max}$, we get
\begin{align*}
I'(0)=-\lambda_{\max}\int\Psi_{0,0}^2(R)\Tr \mathcal{M}(U)\Big(R(u_*^2+\alpha/W)+W^{-3/2}\phi(R)\Big) dR.
\end{align*}
Using (\ref{f_1}), we get
\begin{align}\label{cor:1.6}
&(\mathcal{A}_{\epsilon,U}(e^{f_1}\Psi_{0,\epsilon}),e^{f_1}\Psi_{0,\epsilon})-e^{2\nu/N}(\mathcal{A}_{\epsilon,U}\Psi_{0,\epsilon},\Psi_{0,\epsilon})\\
=&\epsilon W^{-3/2}\lambda_{\max}e^{2\nu/N}\int\Psi_0^2(R)\Tr (\mathcal{M}(U)\varphi_2(R))dR+O(\epsilon^2W^{-3/2})=O(\epsilon^2W^{-3/2}).
\notag\end{align}
%
In addition,
 \[\int \nu(U)dU=0\Rightarrow \int e^{2\nu(U)/N}dU=O(N^{-2}).
 \]
 Combining this with (\ref{I(eps).1}) and (\ref{cor:1.6}), we obtain (\ref{cor:1}).
$\square$

\section{Proofs of Theorems \ref{t:Gin}, \ref{t:locGin}.}\label{s:Proof}

\begin{lemma}\label{l:denom}
 Given $\tilde\Theta(z_1,z_2)$ of the form (\ref{ti-theta}), and $N>CW\log W$ with sufficiently big $C$, we  have
\begin{align}\label{cor:2.1}
&\lim_{N\to\infty,\frac{W^2\log N}{N}\to 0}\tilde\Theta(z,z)=\lambda_{\max}^{N-1}g_1^2(1+o(1)),\quad g_1=(g,\Psi_{*\bar 0}),\quad W>N^{\varepsilon_0},\\
& \lim_{N\to\infty,\frac{W^2}{N\log N}\to \infty}\tilde\Theta^{1/2}(z+\zeta/N^{1/2},z+\zeta/N^{1/2})
=e^{2|\zeta|^2}(1+o(1)).
\notag\end{align}
\end{lemma}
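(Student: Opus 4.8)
The plan is to treat both statements as the diagonal specialization $z_1=z_2=w$ of the operator representation $\tilde\Theta(w,w)=(\mathcal{K}_\zeta^{N-1}g,g)$, using the following structural fact: when the two spectral parameters coincide the matrix $\hat z=wI_2$ in \eqref{Q_cal} is scalar, so $\det\mathcal{Q}_j=\det(|w|^2I_2+Q_jQ_j^*)$ and $f(Q_j)$ depends on $Q_j$ only through $Q_jQ_j^*$. Hence, after the cylindrical change \eqref{c_ch} and the concentration of Lemma \ref{l:conc}, both $g$ and the ``Hermitian part'' $\mathcal{A}$ of the kernel are independent of the unitary variable $U$, i.e. they lie in the sector $\mathcal{E}^{(0)}=\mathrm{Lin}\{t^{(0)}_{00}\}$. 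On that sector $K_{R_1,R_2}$ acts by the \emph{exact} eigenvalue $\tilde\lambda_0=1$ of \eqref{Kt.0} (there is no $b^{(0)}_k$ term since $\ell=0$ has only $t^{(0)}_{00}$ and the $O(\ell^2\log^2W/W^3)$ error vanishes at $\ell=0$), so within $\Omega_W$ one gets $\tilde\Theta(w,w)=(\mathcal{A}_w^{N-1}g_w,g_w)$, where $\mathcal{A}_w(R_1,R_2)=e^{f_w(R_1)}B(R_1-R_2)\mathcal{Z}(R_1,R_2)e^{f_w(R_2)}$, $g_w=e^{f_w}$, and $f_w(R)=\tfrac12\big(-\Tr QQ^*+\log\det(|w|^2I_2+QQ^*)+2u_*^2\big)\big|_{Q=u_*U(1+R/W^{1/2})}$. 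The error from restricting to $\Omega_W$ and from replacing $\mathcal{K}_\zeta$ by $\mathcal{A}_w$ on $\mathcal{E}^{(0)}$ is negligible by Lemmas \ref{l:conc}, \ref{l:KPsi}.

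For the first limit one has $w=z$, so $\mathcal{A}_w=\mathcal{A}$ and $g_w=g$, and I would invoke the spectral data of Lemma \ref{l:A}: the eigenvalues of $\mathcal{A}$ are $\lambda^{(\mu)}_{|\bar m|}=\lambda_*^{|\bar m|}+O((|\bar m|+1)W^{-2})$, the top one is $\lambda_{\max}=\lambda^{(1)}_0$, and its eigenvector is $\Psi^{(1)}_0=\Psi_{*\bar0}+\sum_{s\ge1}W^{-s/2}\Psi^{(1)}_{*\bar j}+O(W^{-p})$ with $\Psi^{(1)}_{*\bar j}\in E_{|\bar j|}$. Expanding $\tilde\Theta(z,z)=\sum(\lambda^{(\mu)}_{|\bar m|})^{N-1}|(g,\Psi^{(\mu)}_{|\bar m|})|^2$ over an orthonormal eigenbasis, every term with $|\bar m|\ge1$ carries the factor $(\lambda_*/\lambda_{\max})^{N-1}\le e^{-c(N-1)/W}$ (from the gap $\lambda_{\max}-\lambda_*\sim W^{-1}$), which for $N>CW\log W$ with $C$ large enough is $o(W^{-2})$ and therefore beats the bound $\|g\|^2=O(W^2)$ of \eqref{new_g} together with the polynomial multiplicity of $\{|\bar m|=m\}$; so $\tilde\Theta(z,z)=\lambda_{\max}^{N-1}|(g,\Psi^{(1)}_0)|^2(1+o(1))$. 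Finally $(g,\Psi^{(1)}_0)=g_1+\sum_{s\ge1}W^{-s/2}(g,\Psi^{(1)}_{*\bar j})+O(W^{-p}\|g\|)$ with $g_1=(g,\Psi_{*\bar0})$ bounded below (since $g=e^{f_z}>0$ and $f_z=O(W^{-1})$ on the $O(1)$-scale support of $\Psi_{*\bar0}$) and all corrections $O(W^{-1/2})$, whence $|(g,\Psi^{(1)}_0)|^2=g_1^2(1+o(1))$.

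For the second limit I would write $f_w=c_0(\zeta)+\hat f_w$, where $c_0(\zeta)=\log(|w|^2+u_*^2)$ is the $R$-independent part, so that $\mathcal{A}_w=e^{2c_0}\hat{\mathcal{A}}_w$, $g_w=e^{c_0}\hat g_w$ and $\tilde\Theta(w,w)=e^{2Nc_0}(\hat{\mathcal{A}}_w^{N-1}\hat g_w,\hat g_w)$. Expanding in $\delta:=|w|^2-|z|^2=O(N^{-1/2})$ gives $\hat f_w(R)=f_z(R)-u_*^2\delta\,\Tr R/W^{1/2}+O(\delta/W)\cdot(\text{polynomial in }R)$, so $\hat{\mathcal{A}}_w$ differs from $\mathcal{A}$ only by the scalar weight $e^{-u_*^2\delta(\Tr R_1+\Tr R_2)/W^{1/2}}$ and an $O(\delta/W)$ multiplicative correction, whose accumulation over $N-1$ steps is $e^{O(N\delta/W)}=e^{O(N^{1/2}/W)}\to1$ in the regime $W^2\gg N\log N$ (using \eqref{diff_Z}, \eqref{diff_K_eps}). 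Pulling out the linear weight turns $(\hat{\mathcal{A}}_w^{N-1}\hat g_w,\hat g_w)$ into the $\mathcal{A}$-chain $(\mathcal{A}^{N-1}g_z,g_z)$ reweighted by $e^{-2u_*^2\delta W^{-1/2}\sum_{j=1}^N\Tr R_j}$; since $\mathcal{A}^{N-1}$ projects onto $\Psi_{*\bar0}\propto e^{-\alpha u_*^2\Tr R^2}$, the first cumulant of $\sum_j\Tr R_j$ is $o(1/\delta)$ and its variance is computed from the exponentially decaying two-point function governed by the gap $\lambda_{\max}-\lambda_*\sim W^{-1}$, while higher cumulants are negligible; feeding in $(\mathcal{A}^{N-1}g_z,g_z)=\tilde\Theta(z,z)=\lambda_{\max}^{N-1}g_1^2(1+o(1))$ (with $\lambda_{\max}^{N-1}\to1$ since $N/W^2\to0$) and the expansion $2Nc_0=2N\delta-N\delta^2+o(1)$, the $O(\sqrt N)$ contributions organize themselves (consistently with the $z_1\leftrightarrow z_2$ structure of Section 3, where they cancel in the product $\tilde\Theta^{1/2}(z_1,z_1)\tilde\Theta^{1/2}(z_2,z_2)$) and the surviving $O(1)$ term is $2|\zeta|^2$. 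The hard part is precisely this last bookkeeping: organizing the cumulant/saddle expansion of the transfer chain so that the $O(\sqrt N)$ pieces carried by $2Nc_0$ and by the reweighting are tracked exactly and the remaining $O(1)$ term is identified as $2|\zeta|^2$; by contrast the truncation errors ($\Omega_W$, $\mathcal{E}_M$, the $W^{-p}$-tails of Lemmas \ref{l:A}, \ref{l:KPsi}, and the $O(\delta/W)$ corrections) are routine consequences of the estimates already in hand (\eqref{l2.1}, \eqref{diff_Z}, \eqref{diff_K_eps}, \eqref{Psi^p}).
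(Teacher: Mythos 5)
Your treatment of the first limit is essentially the paper's: for $z_1=z_2$ the kernel and $g$ are $U$-independent, $K_{R_1,R_2}$ integrates out to $1$ on $\mathcal{E}^{(0)}$, and the spectral gap $\lambda_{\max}-\lambda^{(\mu)}_{|\bar m|}\ge c/W$ of $\mathcal{A}$ from Lemma \ref{l:A} kills all terms with $|\bar m|\ge 1$ up to $O(e^{-cN/W}\|g\|^2)=o(1)$ once $N>CW\log W$; the replacement $(g,\Psi^{(1)}_0)=(g,\Psi_{*\bar 0})+o(1)$ is exactly (\ref{de_Psi.0}). That part is fine.

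The second limit is where your proposal has a genuine gap, and you flag it yourself by deferring ``the hard part.'' The deferred step is not routine bookkeeping: in your decomposition the constant $2Nc_0=2N\delta-N\delta^2+o(1)$ contributes an $O(1)$ term $-N\delta^2=-4(\Re(z\bar\zeta))^2+o(1)$, and the reweighting $e^{-u_*^2\delta W^{-1/2}\sum_j\Tr R_j}$ contributes through its \emph{second} cumulant an amount of order $\delta^2W^{-1}\cdot\mathrm{Var}(\sum_j\Tr R_j)\sim \delta^2W^{-1}\cdot NW=N\delta^2=O(1)$, since the correlation length of the chain is $\sim(1-\lambda_*)^{-1}\sim W$. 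So the identification of the surviving constant as exactly $2|\zeta|^2$ hinges on an exact cancellation between these two $O(1)$ quantities, which requires computing the connected two-point function of $\Tr R$ along the transfer chain — precisely the computation you do not perform. The paper avoids this entirely: it reuses the expansion (\ref{pert}) with $L$ replaced by $\pm I$, so that the whole linear-in-$\zeta$ dependence of $f$ is an exact translation $R\to R-\epsilon\mathcal{M}_0$ by the \emph{scalar} matrix $\mathcal{M}_0=-\tfrac{1}{2u_*^2}(z\bar\zeta+\bar z\zeta)$ (completing the square rather than expanding perturbatively), while the quadratic term collapses to the constant $\nu(U)/N\big|_{L=I}=|\zeta|^2/N$ per half-kernel. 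The shift is then removed by the change of variables $R_i-\epsilon\mathcal{M}_0\to R_i$, with (\ref{diff_Z}) controlling the effect on $\mathcal{Z}$, giving $\lambda_{\max}(\widetilde{\mathcal{A}})^{N-1}=e^{2|\zeta|^2}\lambda_{\max}(\mathcal{A})^{N-1}(1+o(1))$ and $\lambda_{\max}(\mathcal{A})^{N-1}\to1$ since $N\ll W^2$. If you want to keep your cumulant route you must actually carry out the variance computation and exhibit the cancellation; as written, the key constant is asserted rather than derived. (Your remark that the residual $O(\sqrt N)$ phases only cancel in the product $\tilde\Theta^{1/2}(z_1,z_1)\tilde\Theta^{1/2}(z_2,z_2)$ is a fair observation about how the lemma is used, but it does not substitute for identifying the $O(1)$ constant in the single factor.)
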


\textit{Proof.}
Observe  that for any $z'$  $\tilde\Theta(z',z')$ does not contain integration with respect to the unitary group. Moreover, by (\ref{diff_eig}) and (\ref{psi_k})
the spectral gap of $\mathcal{A}$ between $\lambda_{\max}$ and the next eigenvalue is bigger than $c/W\gg N^{-1}$. In particular, for $\zeta=0$
we have
\begin{align*}
\tilde\Theta(z,z)=(\mathcal{A}^{N-1}g,g)=\lambda_{\max}^{N-1}(g,\Psi_{00})^2+O(e^{-Nc/W}\|g\|^2)=\lambda_{\max}^{N-1}g_1^2(1+o(1)).
\end{align*}
since by (\ref{de_Psi.0}) $(g,\Psi_{00})=(g,\Psi_{*\bar 0})+o(1)$.

For $z'=z+\zeta/N^{1/2}$, replacing $L$ by $\pm I$ in (\ref{pert}), we get
\begin{align*}
 \tilde\Theta(z+\zeta/N^{1/2},z+\zeta/N^{1/2})=e^{2|\zeta|^2}\lambda_{\max}(\widetilde{\mathcal{A}})^{N-1}+o(1),
\end{align*}
where $\widetilde{\mathcal{A}}$ is an operator with the kernel
\[ e^{f_{\zeta,+}(R_1-\epsilon \mathcal{M}_0)}B(R_1-R_2)
\mathcal{Z}(R_1,R_2)
e^{f_{\zeta,+}(R_2-\epsilon \mathcal{M}_0)},\quad \mathcal{M}_0=-\frac{1}{2u_*^2}(z\bar \zeta +\bar z\zeta),
\]
where (cf (\ref{pert})) and (\ref{ti-f}))
\[
f_{\zeta,+}=-u_*^4\Tr R^2/2W+\phi_0W^{-3/2}\Tr R^3+ |\zeta|^2/N+   o(N^{-1}).
\]
Here $\phi_0$ is some constant not important for us. 

 Using (\ref{diff_Z}), we can replace $\mathcal{Z}(R_1,R_2)$ by $\mathcal{Z}(R_1-\epsilon \mathcal{M}_0,R_2-\epsilon \mathcal{M}_0)$ with an error $O(W^{-2})$. Then, changing the variables $R_1-\epsilon \mathcal{M}_0\to R_1$ and $R_2-\epsilon\mathcal{M}_0\to R_2$, we obtain by (\ref{diff_eig})  
 \[
 \lambda_{\max}(\widetilde{\mathcal{A}})=e^{2|\zeta|^2}\lambda_{\max}(\mathcal{A})(1+o(N^{-1}))=e^{2|\zeta|^2}(1+o(N^{-1})).
 \]
$\square$

In the next two lemmas we prove that we can replace $\mathcal{K}_{\zeta}$ in (\ref{Theta1}) by its projection onto the space which we can control with Lemmas \ref{l:Z_0}-\ref{c:1}.

Set 
\begin{align}\label{bbK}
\mathbb{K}=\hat{\mathcal{E}}_{2M}\mathcal{K}_\zeta \hat{\mathcal{E}}_{2M},
\end{align}
where $\hat{\mathcal{E}}_{2M}$ was defined in (\ref{E^ell}), and $M$ was defined in (\ref{bb_K_0}) with some  sufficiently big  $C_0$. 
\begin{lemma}\label{l:K_L} 
If $W,N\to\infty$ in such a way that $W\ge N^{\varepsilon _0}$ 
 with some $\varepsilon_0>0$, then we have
\begin{align}\label{K_L.1}
\tilde\Theta(z_1,z_2)=&(\mathbb{K}^{N-1}g_0,g_0)+o(\lambda_{\max}^{N-1}),\quad g_0=e^{-u_*^2\Tr R^2}.
\end{align}
\end{lemma}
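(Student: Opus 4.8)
The plan is to start from $\tilde\Theta(z_1,z_2)=(\mathcal{K}_\zeta^{N-1}g,g)$ of (\ref{Theta1}) and perform two reductions, each introducing an error $o(\lambda_{\max}^{N-1})$: truncating the spherical‑harmonic degree in the unitary variable to $2M$, i.e. passing from $\mathcal{K}_\zeta$ to $\mathbb{K}=\hat{\mathcal{E}}_{2M}\mathcal{K}_\zeta\hat{\mathcal{E}}_{2M}$, and replacing the test vector $g=e^{f_\zeta}$ by the Gaussian $g_0=e^{-u_*^2\Tr R^2}$. The underlying principle is that $\mathcal{K}_\zeta^{N-1}$ is, up to negligible terms, carried by the cluster of eigenvalues within $O(M^2/W^2)$ of $\lambda_{\max}$, whose eigenvectors (Lemmas \ref{l:2}, \ref{l:KPsi}, \ref{l:la_max}) lie in $\mathcal{H}_L\otimes\mathcal{E}_{2M}$ and are close to the shifted vectors $\Psi_{\ell,k,\epsilon}$ of Lemma \ref{c:1}, while everything on harmonics of degree $\ell>2M$ is strictly contracting: $\lambda_\ell\le 1-\ell(\ell+1)/8(u_*W)^2\le\lambda_{\max}(1-cM^2/W^2)$ for $2M<\ell\le W^{3/4}\log W$ by (\ref{Kt.0})--(\ref{lam_l}), and $\mathcal{E}^{(\ell)}K_{R_1,R_2}\mathcal{E}^{(\ell)}\le1-CW^{-1/2}\log^2W$ for $\ell>W^{3/4}\log W$ by (\ref{b_K_R}). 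With $M\ge C_0(\epsilon W)^{1/2}$ one has $(1-cM^2/W^2)^{N}\le e^{-cC_0^2(N/W)^{1/2}}$, which beats $\|g\|^2=CW^2$ on the scale $\lambda_{\max}^{N-1}$ (using $N>CW\log W$ and $|\lambda_{\max}-1|\le C/W^2$).

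For the first reduction I would first observe that $g$ is already nearly concentrated on $\hat{\mathcal{E}}_{2M}$: the $U$‑dependence of $f_\zeta$ enters only through $\epsilon\mathcal{M}(U)$ and $N^{-1}\nu(U)$ (see (\ref{pert})--(\ref{nu})), whose contributions to $e^{f_\zeta}$ have degree‑$\ell$ component of size $O((\epsilon\vee N^{-1})^{\ell}/\ell!)\|g\|$ on the relevant region $\|R\|\lesssim W^{1/2}$, so $\|(I-\hat{\mathcal{E}}_{2M})g\|$ is negligibly small and $(\mathcal{K}_\zeta^{N-1}g,g)=(\mathcal{K}_\zeta^{N-1}\hat{\mathcal{E}}_{2M}g,\hat{\mathcal{E}}_{2M}g)+o(\lambda_{\max}^{N-1})$. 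To then replace $\mathcal{K}_\zeta$ by $\mathbb{K}$ inside the power, the clean way is to work in terms of the shifted vectors $\Psi_{\ell,k,\epsilon}$: by Lemma \ref{c:1}, $\mathcal{K}_\zeta\Psi_{\ell,k,\epsilon}$ equals $e^{2\nu/N}(\mathcal{K}_0\Psi_{\ell,k})(R-\epsilon\mathcal{M}(U))$ up to errors small enough to be summed over the $N-1$ steps, and $\mathcal{K}_0$ is exactly block‑diagonal in $\ell$ (its $U$‑dependence being carried only by $K_{R_1,R_2}(U_2^*U_1)$, since $f_0$ depends on $R$ alone); hence the dynamics preserves, up to the same negligible errors, the degree structure, no weight is spuriously pushed toward the degree‑$2M$ wall, and $(\mathcal{K}_\zeta^{N-1}-\mathbb{K}^{N-1})\hat{\mathcal{E}}_{2M}g$ is controlled solely by the high‑$\ell$ contraction above, giving $(\mathcal{K}_\zeta^{N-1}\hat{\mathcal{E}}_{2M}g,\hat{\mathcal{E}}_{2M}g)=(\mathbb{K}^{N-1}g,g)+o(\lambda_{\max}^{N-1})$.

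For the second reduction I would use that $\mathbb{K}$ has a gap of order $1/W$ above $\lambda_{\max}(1-cL/W)$ on the complement of its top cluster (this is (\ref{de_la.1}) together with Lemma \ref{l:2}), so that both $(\mathbb{K}^{N-1}g,g)$ and $(\mathbb{K}^{N-1}g_0,g_0)$ are determined, modulo $o(\lambda_{\max}^{N-1})$, by the projections of $g$, resp. $g_0$, onto that cluster; using the explicit leading form of the cluster eigenvectors (Lemma \ref{l:A}, (\ref{de_Psi})) one verifies that this projection is unchanged, up to the admissible accuracy, when $g$ is replaced by $g_0$. The step I expect to be the main obstacle is the passage from $\mathcal{K}_\zeta^{N-1}$ to $\mathbb{K}^{N-1}$: the per‑step leakage out of $\mathcal{E}_{2M}$ is only of order $\epsilon/W$, and the naive product $N\cdot(\epsilon/W)\sim(N/W)^{1/2}$ is large, so a first‑order perturbation estimate is not enough; one genuinely needs the structure from Lemma \ref{c:1} — namely that after pulling out the bounded factor $e^{2\nu(U)/N}$ and the translation $R\mapsto R-\epsilon\mathcal{M}(U)$ the operator $\mathcal{K}_\zeta$ is, on $\mathcal{H}_L\otimes\mathcal{E}_{2M}$, close to the block‑diagonal $\mathcal{K}_0$ with an error whose accumulation over $N-1$ steps is $o(1)$ throughout the range $N^{\varepsilon_0}\le W\le N^{1-\varepsilon_0}$ — so that the $\ell$‑truncation error depends, in the end, only on the eigenvalue gap at the wall.
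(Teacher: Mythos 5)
Your overall plan (truncate the harmonic degree at $2M$, then pass to the top spectral cluster to swap $g$ for $g_0$) matches the paper's strategy, and your identification of the key difficulty --- that the per-step leakage of order $\epsilon/W$ accumulated over $N$ steps is not small --- is exactly right. But the proposal stops at naming that difficulty rather than resolving it, and it omits the ingredient that makes every subsequent step legitimate: the global norm bound $\|\mathcal{K}_\zeta\|\le\lambda_{\max}(1+k_0/2N)$ of (\ref{n.1}), which occupies most of the paper's proof. Your ``underlying principle'' that $\mathcal{K}_\zeta^{N-1}$ is carried by the eigenvalue cluster near $\lambda_{\max}$ presupposes that no eigenvalue exceeds $\lambda_{\max}(1+O(N^{-1}))$; naive perturbation only gives $\|\mathcal{K}_\zeta\|\le\lambda_{\max}+C\epsilon/W=\lambda_{\max}+C(NW)^{-1/2}$, and $(1+C(NW)^{-1/2})^{N}=e^{C(N/W)^{1/2}}$ diverges throughout the range $W\le N^{1-\varepsilon_0}$, so a single stray eigenvalue allowed by that crude bound would swamp $\lambda_{\max}^{N-1}$. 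Establishing (\ref{n.1}) is genuinely delicate: it uses the block-tridiagonal structure in $\ell$, Proposition \ref{p:norm} applied iteratively, the projections onto the shifted vectors $\Psi_{\ell,k,\epsilon}$, and above all the cancellation (\ref{int_2}) (coming from $\Tr\mathcal{M}(U)=0$), which removes the term linear in $\epsilon$ in (\ref{cor:1})--(\ref{cor:1.2}). None of this appears in your proposal.

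Second, the passage from $\mathcal{K}_\zeta^{N-1}$ to $\mathbb{K}^{N-1}$ is not carried out in the paper by tracking how the degree structure propagates under $N-1$ applications of the operator (your ``no weight is pushed toward the wall'' heuristic), but by a resolvent argument: on the contour $|z|=\lambda_{\max}(1+2k_0/N)$ the blocks of the resolvent in the $\ell$-grading decay like $q^{|\ell-\ell'|}$ with $q\sim(\epsilon/W)(M/W)^{-2}$ so small that $q^{M}<N^{-3}$ --- this is precisely where the choice $M\ge C_0(\epsilon W)^{1/2}$ enters --- and then the Cauchy residue theorem converts the resolvent comparison into $|(\mathcal{K}_\zeta^{N-1}g,g)-(\mathbb{K}^{N-1}g,g)|=o(\lambda_{\max}^{N})$ in one stroke, absorbing the factor $\|g\|^2=O(W^2)$. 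Without some device of this kind (or an equally quantitative substitute), the step you yourself flag as ``the main obstacle'' remains open, so the proposal has a genuine gap at its central point.
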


\textit{Proof of Lemma \ref{l:K_L}}. 
We start from the proof of the inequality
\begin{align}\label{n.1}
\|\mathcal{K}_\zeta\|\le \lambda_{\max}(1+k_0/2N).
\end{align}
Recall that the operator of multiplication
by $F_\zeta(R,U)$ has the form (\ref{f_1}). Observe that the remainder in (\ref{f_1}) satisfies the bound
\[
\epsilon^2W^{-3/2}=N^{-1}W^{-1/2}\ll N^{-1}.
\]
Hence, it is sufficient to prove (\ref{n.1}) for  the operator 
$\widetilde{\mathcal{K}}_{\zeta}$ which  corresponds to $\mathcal{K}_{\zeta}$ with  $F_\zeta(R,U)$ replaced by $F_0(R-\epsilon\mathcal{M})(1+f_1(R,U))$
since
\begin{align}\label{n.1.0}
\widetilde{\mathcal{K}}_{\zeta}-\mathcal{K}_{\zeta}=O(\epsilon^2W^{-3/2}).
\end{align}

Notice that if $\zeta=0$, then for each $\ell=0,1,\dots$ the space
$\mathcal{H}\otimes\mathcal{E}^{(\ell)}$ is invariant with respect to $\mathcal{K}_{0}$.
Moreover,  since multiplication by $f_1$ can transform $h\in \mathcal{E}^{(\ell)}$ into a function which has nonzero components only in 
$\mathcal{E}^{(\ell-1)},\mathcal{E}^{(\ell)},\mathcal{E}^{(\ell+1)}$, the matrix $\tilde{\mathcal{K}}_{\zeta,L}$ 
is ``block three-diagonal"  in the basis of $\mathcal{H}_{L}\otimes\mathcal{E}^{(\ell)}$. 
Set 
\[
\widetilde{\mathcal{K}}_{\zeta}^{(\ell\ell')}=\mathcal{E}^{(\ell)}\widetilde{\mathcal{K}}_{\zeta}\mathcal{E}^{(\ell')},
\]
and take $M$ defined by (\ref{bb_K_0}).
We apply Proposition \ref{p:norm} to the matrix $\mathbb{M}=\widetilde{\mathcal{K}}_{0}^{(\ell\ell)}$  considered as a block matrix 
with 
\[
\mathbb{M}^{(11)}=\mathcal{P}_L\mathbb{M}\mathcal{P}_L,\quad \mathbb{M}^{(12)}=\mathcal{P}_L\mathbb{M}(1-\mathcal{P}_L),
\quad \mathbb{M}^{(22)}=(1-\mathcal{P}_L)\mathbb{M}(1-\mathcal{P}_L)
\]
with $\mathcal{P}_L$ of (\ref{P_L}). We use the bounds
\begin{align*}
&\mathbb{M}^{(11)}\le 1-C_1(\ell/W)^2+C_2(\ell/W)(L/W)<1-C_1(\ell/W)^2/2,\,(M\le \ell\le 2M),\\ & \|\mathbb{M}^{(12)}\|\le CW^{-3/2},
\quad \mathbb{M}^{(11)}\le 1-CL/W,
\notag\end{align*}
where the first one follows from (\ref{KPsi.1}), the second -- from (\ref{exp_dec}), and the last one -- from Lemma \ref{l:2}. Then
 we get
\begin{align}\label{b_K_ll}
\widetilde{\mathcal{K}}_{\zeta}^{(\ell\ell)}\le 1-C'(\ell/W)^2.
\end{align}
Thus, since $\|\widetilde{\mathcal{K}}_{\zeta}^{(\ell\ell+1)}\|\le C(\epsilon/W)$, we have for $\ell>M$ (assuming $C_0$  in (\ref{bb_K_0}) is sufficiently big):
\begin{align*}
\|(\widetilde{\mathcal{K}}_{\zeta}^{(\ell\ell)}-z)^{-1}\widetilde{\mathcal{K}}_{\zeta}^{(\ell\ell+1)}\| \le C(\epsilon/W)(M/W)^{-2}\le q/2
\end{align*}
with some small enough fixed $q$ ($q^{\log W}<N^{-3}$). Here and below in the proof we take $|z|>\lambda_{\max}(1+k/N)$.

Hence, if  we denote by $\widetilde{\mathcal{K}}_{\zeta,M_1,M_2}$   the  block of  $\widetilde{\mathcal{K}}_{\zeta}$ corresponding
to all $\ell,\ell'\in [M_1,M_2)$, then, denoting by $D$ and $D^{(off)}$ the diagonal and off diagonal parts part of $(\widetilde{\mathcal{K}}_{\zeta,M+1,2M+1}-z)$ respectively,
we  get
\begin{align*}\notag
\Big((\widetilde{\mathcal{K}}_{\zeta,M+1,2M+1}-z)^{-1}\Big)^{(\ell,\ell')}=&
(D^{-1/2})^{(\ell\ell)}\big((1+D^{-1/2}D^{(off)}D^{-1/2})^{-1}\big)^{(\ell\ell')}(D^{-1/2})^{(\ell'\ell')}\\
=&(D^{-1})^{(\ell\ell)}\sum_{p\ge|\ell-\ell'|}\big((-D^{(off)}D^{-1})^{p}\big)^{(\ell\ell')},
\end{align*}
and thus bounds above yield 
\begin{align}
\Big\|\Big((\widetilde{\mathcal{K}}_{\zeta,M+1,2M+1}-z)^{-1}\Big)^{(\ell,\ell')}\Big\|\le CNq^{|\ell-\ell'|}.
\label{exp_dec.2}
\end{align}
Here we used $(M/W)^{-2}\le C\sqrt{WN}\le CN$.

By the inversion formula for a block matrix, to prove (\ref{n.1}) it is sufficient to prove that there exists $k>0$ such that for $ |z|>\lambda_{\max}(1+k/N)$ the matrix
\begin{align}\label{n.3}
\widetilde{\mathcal{K}}_{\zeta,0,M+1}-z-\widetilde{\mathcal{K}}_{\zeta}^{(M,M+1)}\Big((\widetilde{\mathcal{K}}_{\zeta,M+1,\infty}-z)^{-1}\Big)^{(M+1,M+1)}
\widetilde{\mathcal{K}}_{\zeta}^{(M+1,M)}
\end{align}
is invertible. But introducing a block diagonal matrix
\[
\widehat{\mathcal{K}}_{\zeta,M+1,\infty}=\mathrm{diag}\{  \widetilde{\mathcal{K}}_{\zeta,M+1,2M+1},\widetilde{\mathcal{K}}_{\zeta,2M+1,\infty}   \}
\]
and using the resolvent identity for  the resolvents of $\widetilde{\mathcal{K}}_{\zeta,M+1,\infty}$ and of $\widehat{\mathcal{K}}_{\zeta,M+1,\infty}$, we obtain by (\ref{exp_dec.2})
\begin{align*}
&\Big((\widetilde{\mathcal{K}}_{\zeta,M+1,\infty}-z)^{-1}\Big)^{(M+1,M+1)}=\Big((\widetilde{\mathcal{K}}_{\zeta,M+1,2M+1}-z)^{-1}\Big)^{(M+1,M+1)}\\&+
\Big((\widetilde{\mathcal{K}}_{\zeta,M+1,2M+1}-z)^{-1}\Big)^{(M+1,2M)}\widetilde{\mathcal{K}}_{\zeta}^{(2M,2M+1)}
\Big((\widetilde{\mathcal{K}}_{\zeta,M+1,\infty}-z)^{-1}\Big)^{(2M+1,M+1)}\\&=
\Big((\widetilde{\mathcal{K}}_{\zeta,M+1,2M+1}-z)^{-1}\Big)^{(M+1,M+1)}+o(N^{-1}).
\end{align*}
Hence, if we prove that for $|z|>\lambda_{\max}(1+k/N)$ the matrix
\begin{align}\label{n.4}
\widetilde{\mathcal{K}}_{\zeta,0,M+1}-z-\widetilde{\mathcal{K}}_{\zeta}^{(M,M+1)}(\widetilde{\mathcal{K}}_{\zeta,M+1,2M+1}-z)^{-1} 
\widetilde{\mathcal{K}}_{\zeta}^{(M+1,M)}
\end{align}
is invertible, then for $|z|>\lambda_{\max}(1+2k/N)$ the matrix in (\ref{n.3}) is invertible, and  thus get (\ref{n.1}). But the the inverse of the matrix (\ref{n.4}) corresponds
to the left upper block of the resolvent of $\widetilde{\mathcal{K}}_{\zeta,0,2M+1}$, hence, it is sufficient to prove that
\begin{align}\label{n.5}
\|\widetilde{\mathcal{K}}_{\zeta,0,2M+1}\|<\lambda_{\max}(1+k/N)
\end{align}
with some $k$.

Consider $\widetilde{\mathcal{K}}_{\zeta,0,2M+1}$ as a block matrix with 
\[
\widetilde{\mathcal{K}}_{\zeta}^{(11)}=\mathcal{P}_L\widetilde{\mathcal{K}}_{\zeta,0,2M+1}\mathcal{P}_L,\quad
\widetilde{\mathcal{K}}_{\zeta}^{(22)}=(I-\mathcal{P}_L)\widetilde{\mathcal{K}}_{\zeta,0,2M+1}(I-\mathcal{P}_L),\quad
\widetilde{\mathcal{K}}_{\zeta}^{(12)}=(I-\mathcal{P}_L)\widetilde{\mathcal{K}}_{\zeta,0,2M+1}\mathcal{P}_L,
\]
with $\mathcal{P}_L$ of (\ref{P_L}).
Then by Lemma \ref{l:2} and (\ref{exp_dec})
\[
\widetilde{\mathcal{K}}_{\zeta}^{(22)}<1-CL/W,\quad
 \|\widetilde{\mathcal{K}}_{\zeta}^{(12)}\|
\le C(W^{-3/2}+\epsilon/W).
\]
Hence, for
\[ \widetilde{\mathcal{M}}(z)=\widetilde{\mathcal{K}}_{\zeta}^{(12)}(\widetilde{\mathcal{K}}_{\zeta}^{(22)}-z)^{-1}\widetilde{\mathcal{K}}_{\zeta}^{(21)}
\]
we have
\begin{align}\label{n.2}
\| \widetilde{\mathcal{M}}(z)\|\le CL^{-1}(W^{-2}+N^{-1}).
\end{align}
Moreover, 
since (\ref{exp_dec}) implies that $\mathcal{K}_{\zeta,(\bar k,\bar j)}$ decays as $W^{-|\bar k-\bar j|/2} $,
we get  that there exists fixed $p>0$ such that we can consider  $\mathcal{K}_{\zeta,(\bar k,\bar j)}$ as $2p+1$ block diagonal matrix with an error
$O(N^{-2})$. Hence, $\widetilde{\mathcal{K}}_{\zeta}^{(12)}$ (with an error $O(N^{-2})$) can be considered as a matrix which contains only $p$ 
nonzero diagonals in the bottom left corner, and $\widetilde{\mathcal{K}}_{\zeta}^{(21)}$ can be considered as a matrix which contains only $p$ 
nonzero diagonals in the top right corner. Thus, $\widetilde{\mathcal{M}}(z)$ (with an error $O(WN^{-2})$)  is a matrix which has nonzero component
only in the $p\times p$ block in the bottom right corner, or
\begin{align}\label{n.6}
\widetilde{\mathcal{M}}(z)=&\widetilde{\mathcal{M}}_1(z)+O(WN^{-2}),\quad \widetilde{\mathcal{M}}_1(z)=\sum_{|\bar j|,|\bar k|=L-p}^Lm_{\bar j,\bar k}\Psi_{*\bar j}\otimes\Psi_{*\bar k}.
\end{align}
Consider now the vectors $\{\tilde\Psi_{\ell,k}\}$ introduced in (\ref{Psi^p}).
Denote by $\mathfrak{P}^{(1)}_{\epsilon}$ the orthogonal projection on 
$\mathrm{Lin}\{\tilde\Psi_{\ell,k,\epsilon}\}_{l\le M, |k|\le l}$ (see (\ref{Psi_eps})),
 define $\mathfrak{P}^{(2)}_{\epsilon}=1-\mathfrak{P}^{(1)}_{\epsilon}$, and set
\[ 
\mathbb{M}^{(\alpha\beta)}=\mathfrak{P}^{(\alpha)}_{\epsilon}(\mathcal{K}_{\zeta}^{(11)}-\widetilde{\mathcal{M}})\mathfrak{P}^{(\beta)}_{\epsilon},\quad \alpha,\beta=1,2.
\]
By (\ref{de_Psi}) and  (\ref{Psi^p}),  $\tilde\Psi_{\ell,k}$ has nonzero components only with respect to $\Psi_{*\bar k}$  with $|\bar k|\le p'$, where $p'$ is sufficiently big but fixed
number.
Expanding $\Psi_{\ell,k,\epsilon}$ with respect to $\epsilon$, one can see  that  $\Psi_{\ell,k,\epsilon}$ has nonzero components only with respect to $\Psi_{*\bar k}$  with $|\bar k|\le p'+p''$ plus $O(N^{-2})$ term. Here we chose $p''$ sufficiently big  to have $\epsilon^{p''}\le N^{-2}$. Thus
(\ref{n.6}) yields

\begin{align*}
&\widetilde{\mathcal{M}}_1\tilde\Psi_{\ell,k,\epsilon}=O(N^{-2})\\
\Rightarrow &\mathfrak{P}^{(1)}_{\epsilon}\widetilde{\mathcal{M}}\mathfrak{P}^{(1)}_{\epsilon}=O(WN^{-2}),\quad
\mathfrak{P}^{(1)}_{\epsilon}\widetilde{\mathcal{M}}\mathfrak{P}^{(2)}_{\epsilon}=O(WN^{-2})\quad\Rightarrow\quad \|\mathbb{M}^{(12)}\|\le C( \epsilon/W).
\end{align*}

Moreover,   Lemma \ref{l:la_max} implies that $\mathcal{K}_{\zeta}^{(11)}\Big|_{\zeta=0}=\mathbb{K}_0$ has eigenvalues $\{\lambda_{\ell,k}\}$ (corresponding
to $\{\tilde\Psi_{\ell,k}\}$) in the $c(M/W)^2$-neighbourhood of $\lambda_{\max}$, and all other eigenvalues are less than $\lambda_{\max}-c/W$. Therefore,
\begin{align*}
&\mathfrak{P}^{(2)}_{\epsilon}\mathbb{K}_{0}\mathfrak{P}^{(2)}_{\epsilon}\le \lambda_{\max}-c/W\Rightarrow
\mathfrak{P}^{(2)}_{\epsilon}\mathcal{K}_{\zeta}^{(11)}\mathfrak{P}^{(2)}_{\epsilon}\le \lambda_{\max}-c/W+C\epsilon/W\\ &\Rightarrow
\mathbb{M}^{(22)}\le \lambda_{\max}-c/2W.
\end{align*}
Thus, to prove (\ref{n.1})  it is sufficient to prove that
\[
\mathbb{M}^{(11)}\le \lambda_{\max}(1+k/N).
\]
This can be done by applying Proposition \ref{p:norm} to $\tilde{\mathbb{M}}=\mathbb{M}^{(11)}$  with blocks
\[\tilde{\mathbb{M}}^{(22)}=(\mathcal{K}_\zeta \Psi_{0,0,\epsilon},\Psi_{0,0,\epsilon}),\quad \tilde{\mathbb{M}}^{(21)}=\tilde{\mathbb{M}}\Psi_{0,0,\epsilon},\quad \delta=1/N,
\]
if we use  (\ref{cor:1}) and (\ref{cor:1.2}).

\medskip

To prove (\ref{K_L.1}) we observe first that, expanding $F_\zeta(R,U)$ in the series with respect to $\epsilon$, 
one can replace $F_\zeta(R,U)$ in $\mathcal{A}_{\zeta}$ by $F_0(R,U)(1+f_2(R,U,\epsilon))$ (see (\ref{A_zeta}) and (\ref{M(U)}))  such that $f_2$ includes terms containing $L_{U^*}$ only a finite 
number of times (we call it $s$). Denote the operator with this new $\mathcal{A}_{\zeta}$ by $\widetilde{\mathcal{K}}_{\zeta}'$, then we can choose $s$ big enough such that
\begin{align}\label{K-K'}
\|\widetilde{\mathcal{K}}_{\zeta}'-\mathcal{K}_{\zeta}\|\ll N^{-1}W^{-2}.
\end{align}
Since $\widetilde{\mathcal{K}}_{\zeta}'$  is $2s+1$-diagonal matrix, we can repeat the argument used above for $\widetilde{\mathcal{K}}_{\zeta}$
(with may be bigger $C,C_0$ in the definition of $M$ in (\ref{bb_K_0})), and get
\begin{align}\label{K-K_M}
\|(\widetilde{\mathcal{K}}_{\zeta}'-z)^{-1}-(\widetilde{\mathcal{K}}'_{\zeta,2M+1,\infty}-z)^{-1}\|\le N^{-3},\quad |z|>\lambda_{\max}(1+k_0/N).
\end{align}
Then, using the Cauchy residue theorem  and (\ref{K-K'}), one can obtain for $\omega=\{z: |z|=\lambda_{\max}(1+2k_0/N)\}$
\begin{align*}
|(\mathcal{K}_{\zeta}^{N-1}g,g)-&((\widetilde{\mathcal{K}}_{\zeta}')^{N-1}g,g)|=C\Big|\oint\limits_{\omega}  z^{N-1}
\Big((\widetilde{\mathcal{K}}_{\zeta}'-z)^{-1}({\widetilde{\mathcal{K}}_{\zeta}'-\mathcal{K}}_{\zeta})(\mathcal{K}_{\zeta}-z)^{-1}g,g\Big)dz\Big|\\
\le &C \lambda_{\max}^{N} \|\widetilde{\mathcal{K}}_{\zeta}'-\mathcal{K}_{\zeta}\|\|g\|^2\oint\limits_{\omega} |dz| |z-\lambda_{\max}-k_0/N|^{-2}=o(\lambda_{\max}^{N}).
\end{align*}
Here we used that $\|g\|\le CW$ and that for any matrix $\mathbb{M}:\mathbb{M}=\mathbb{M}^*$, $\|\mathbb{M}\|\le \lambda_{\max}(1+k_0/N)$
\begin{align*}
\|(\mathbb{M}-z)^{-1}\|\le C|z- \lambda_{\max}(1+k_0/N)|^{-1}.
\end{align*}
Similarly, from (\ref{K-K_M}) we get
\begin{align*}
|((\widetilde{\mathcal{K}}_{\zeta}')^{N-1}g,g)-((\widetilde{\mathcal{K}}'_{\zeta,2M+1,\infty})^{N-1}g,g)|=o(\lambda_{\max}^{N}),
\end{align*}
and (\ref{K-K'}) yields
\[
|((\hat{\mathcal{E}}_{2M}\mathcal{K}_{\zeta}\hat{\mathcal{E}}_{2M})^{N-1}g,g)-((\widetilde{\mathcal{K}}'_{\zeta,2M+1,\infty})^{N-1}g,g)|=o(\lambda_{\max}^{N}).
\]
The last three bounds imply (\ref{K_L.1}).

$\square$

\begin{lemma}\label{l:P} Denote
$\mathfrak{P}_{\epsilon}^{(1)}$ the orthogonal projection on the subspace $\mathrm{Lin}\{\Psi_{\ell,k,\epsilon}\}_{\ell\le M,|k|\le \ell}$
 defined by (\ref{Psi_eps}) for $\Psi_{\ell,k}$ of (\ref{de_Psi}). 
  Then
\begin{align}\label{l:P.1}
(\mathbb{K}^{N-1}g_0,g_0)=
((\mathfrak{P}_{\epsilon}^{(1)}\mathbb{K}\mathfrak{P}_{\epsilon}^{(1)})^{N-1} g_1,g_1)+o(1),\quad g_1=\mathfrak{P}_{\epsilon}^{(1)}g_0.
\end{align}
\end{lemma}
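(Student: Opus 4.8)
The plan is to perform the projection by a block decomposition of $\mathbb{K}=\hat{\mathcal{E}}_{2M}\mathcal{K}_\zeta\hat{\mathcal{E}}_{2M}$ relative to $\mathfrak{P}_{\epsilon}^{(1)}$ and $\mathfrak{P}_{\epsilon}^{(2)}:=\hat{\mathcal{E}}_{2M}-\mathfrak{P}_{\epsilon}^{(1)}$, followed by a resolvent argument. First I note that the substitution $R\mapsto R-\epsilon\mathcal{M}(U)$ is, for each fixed $U$, Lebesgue--measure preserving, so $S_\epsilon:\Psi(R,U)\mapsto\Psi(R-\epsilon\mathcal{M}(U),U)$ is unitary on $\mathcal{H}$; taking the $\{\Psi_{\ell,k}\}$ of Lemma \ref{l:la_max} orthonormal, the family $\{\Psi_{\ell,k,\epsilon}=S_\epsilon\Psi_{\ell,k}\}_{\ell\le M}$ is then an orthonormal basis of $\mathrm{range}\,\mathfrak{P}_{\epsilon}^{(1)}$, and each $\Psi_{\ell,k,\epsilon}$ lies in $\hat{\mathcal{E}}_{2M}\mathcal{H}$ up to an error smaller than any power of $W$ (the shift raises the $SU(2)$--degree by $O(1)$ per power of $\epsilon$, and $\epsilon^{M}\ll W^{-p}$ since $M\ge C\log W$). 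Hence $\mathbb{K}^{(11)}:=\mathfrak{P}_{\epsilon}^{(1)}\mathbb{K}\mathfrak{P}_{\epsilon}^{(1)}$ is, in this basis, the Gram matrix $\big((\mathcal{K}_\zeta\Psi_{\ell,k,\epsilon},\Psi_{\ell',k',\epsilon})\big)$ studied in Lemma \ref{c:1}: by (\ref{cor:1.2}), (\ref{cor:1}) it coincides with $\mathrm{diag}(\lambda_\ell)$ (with $\lambda_\ell$ of (\ref{lam_l}) and $\lambda_0:=\lambda_{\max}$) up to an entrywise error $O(N^{-1}+\epsilon^2W^{-3/2}+\epsilon(\ell/W)^2)$, whence $\mathrm{spec}(\mathbb{K}^{(11)})$ lies in a short interval $I^\ast$ around $\lambda_{\max}$ of length $o(W^{-1})$, while $\|\mathbb{K}^{(11)}\|\le\|\mathbb{K}\|\le\|\mathcal{K}_\zeta\|\le\lambda_{\max}(1+k_0/2N)$ by Lemma \ref{l:K_L}.

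Next I record the two structural inputs. The spectral gap $\mathbb{K}^{(22)}\le\lambda_{\max}-c/W$ follows exactly as at the end of the proof of Lemma \ref{l:K_L}: conjugating by $S_\epsilon$ moves the problem to the unshifted picture, where $\mathcal{K}_0$ is $\zeta$--independent and almost respects the $SU(2)$--grading, so Lemma \ref{l:la_max} gives that the complement of $\mathrm{Lin}\{\Psi_{\ell,k}\}_{\ell\le M}$ is $\mathbb{K}_0$--mapped into itself with norm $\le\lambda_{\max}-c/W$, Lemma \ref{l:2} handles the sectors $\ell>W^{3/4}\log W$, and the replacement $\mathcal{K}_0\to\mathcal{K}_\zeta$ together with the shift costs only $O(\epsilon/W)\ll c/W$. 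The smallness of the coupling, $\|\mathbb{K}^{(12)}\|\le C(W^{-3/2}+\epsilon/W)$, is again the bound $\|\mathbb{M}^{(12)}\|\le C\epsilon/W$ isolated in the proof of Lemma \ref{l:K_L}; it comes from (\ref{cor:1.0}), which says $\mathcal{K}_\zeta\Psi_{\ell,k,\epsilon}=\lambda_\ell e^{2\nu/N}\Psi_{\ell,k,\epsilon}$ up to $O(\epsilon W^{-3/2}+\epsilon LM/W^2)$ --- i.e. $\mathrm{range}\,\mathfrak{P}_{\epsilon}^{(1)}$ is almost invariant --- together with the decay (\ref{exp_dec}).

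With these in hand I would finish as follows. Since $\mathbb{K}\ge 0$, the gap and (\ref{cor:1.0}) imply, via a Schur complement / Proposition \ref{p:norm}--type estimate with $\Sigma(z):=\mathbb{K}^{(12)}(z-\mathbb{K}^{(22)})^{-1}\mathbb{K}^{(21)}$ ($\|\Sigma(z)\|=O(\|\mathbb{K}^{(12)}\|^2W)=O(N^{-1}+W^{-2})$ near $\lambda_{\max}$), that the part of $\mathrm{spec}(\mathbb{K})$ near $\lambda_{\max}$ is displaced from $\mathrm{spec}(\mathbb{K}^{(11)})$ only by $O(N^{-1}+W^{-2})$ and thus lies, with $\mathrm{spec}(\mathbb{K}^{(11)})$, inside a loop $\Gamma$ built as a $\delta$--neighbourhood ($\delta$ a small power of $N^{-1}$) of $I^\ast$, while $\mathrm{spec}(\mathbb{K}^{(22)})$ stays below $\Gamma$ at distance $\ge c/2W$; on $\Gamma$ then $\max|z|^{N-1}=O(\lambda_{\max}^{N-1})$, $\|(z-\mathbb{K}^{(11)})^{-1}\|=O(\delta^{-1})$, $\|(z-\mathbb{K}^{(22)})^{-1}\|=O(W)$. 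Writing $\mathbb{K}^{N-1}=\mathbb{K}^{N-1}P_{\mathrm{top}}+(\mathbb{K}(I-P_{\mathrm{top}}))^{N-1}$ with $P_{\mathrm{top}}$ the Riesz projection of $\Gamma$, the second term has norm $\le(\lambda_{\max}-c/2W)^{N-1}=o(\lambda_{\max}^{N-1})$ because $N/W\to\infty$; using $\|P_{\mathrm{top}}-\mathfrak{P}_{\epsilon}^{(1)}\|=O(\|\mathbb{K}^{(12)}\|W)=O(\epsilon)$ to pass from $g_0$ to $g_1$, together with $((\mathbb{K}^{(22)})^{N-1}g_2,g_2)\le(\lambda_{\max}-c/W)^{N-1}\|g_2\|^2=o(\lambda_{\max}^{N-1})$ for the $\mathfrak{P}_{\epsilon}^{(2)}$--diagonal and $\|\mathfrak{P}_{\epsilon}^{(2)}\mathbb{K}^{N-1}\mathfrak{P}_{\epsilon}^{(1)}\|=O(W\|\mathbb{K}^{(12)}\|\lambda_{\max}^{N-1})=o(\lambda_{\max}^{N-1})$ for the off--diagonal, the claim reduces to
\[
\frac{1}{2\pi i}\oint_{\Gamma}z^{N-1}\Big(\big[(z-\mathbb{K}^{(11)}-\Sigma(z))^{-1}-(z-\mathbb{K}^{(11)})^{-1}\big]g_1,\,g_1\Big)\,dz=o(\lambda_{\max}^{N-1}).
\]
By the residue theorem the left side equals $\sum_j\tilde\mu_j^{N-1}|(g_1,\tilde\varphi_j)|^2-\sum_j\mu_j^{N-1}|(g_1,\varphi_j)|^2$, where $\mu_j,\varphi_j$ are the data of $\mathbb{K}^{(11)}$ and $\tilde\mu_j,\tilde\varphi_j$ those of the top part of $\mathbb{K}$, and I would bound it using: the eigenvalue displacement $\tilde\mu_j-\mu_j\le CW\,\|\mathfrak{P}_{\epsilon}^{(2)}\mathbb{K}\varphi_j\|^2=o(N^{-1})$ uniformly over the $O(W^2/N)$ modes whose $\mu_j^{N-1}$ is not already $o(\lambda_{\max}^{N-1})$ --- the gain over the crude $O(\epsilon/W)$ coming from the gap (the coupling enters only at second order) and from the almost--invariance (\ref{cor:1.0}), which makes $\|\mathfrak{P}_{\epsilon}^{(2)}\mathbb{K}\Psi_{\ell,k,\epsilon}\|$ of order $N^{-1}+\epsilon W^{-3/2}+\epsilon L\ell/W^2+W^{-1/2}(\ell/W)^2$, much smaller than $\epsilon/W$ for the relevant $\ell$ --- so that $\tilde\mu_j^{N-1}=\mu_j^{N-1}(1+o(1))$ there; and the eigenvector correction $\tilde\varphi_j-\varphi_j$, which to leading order lies in $\mathrm{range}\,\mathfrak{P}_{\epsilon}^{(2)}$, is orthogonal to $g_1\in\mathrm{range}\,\mathfrak{P}_{\epsilon}^{(1)}$ and hence invisible in the pairing. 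The crux of the argument is precisely this last point: because $\|\mathbb{K}^{(12)}\|$, though small, is not $o(N^{-1})$, a term--by--term perturbation over $N-1$ powers would only yield $O(\lambda_{\max}^{N-1})$, so one must combine the spectral gap with the almost--invariance of $\mathrm{range}\,\mathfrak{P}_{\epsilon}^{(1)}$ in (\ref{cor:1.0}) to upgrade the per--mode displacement to $o(N^{-1})$ before summing. Finally $o(\lambda_{\max}^{N-1})$ is $o(1)$ in the regime of Theorem \ref{t:Gin} and is negligible against the main term $\asymp\lambda_{\max}^{N-1}$ in that of Theorem \ref{t:locGin}.
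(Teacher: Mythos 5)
Your overall strategy -- block decomposition of $\mathbb{K}$ with respect to $\mathfrak{P}_{\epsilon}^{(1)},\mathfrak{P}_{\epsilon}^{(2)}$, a spectral gap for $\mathbb{K}^{(22)}$, smallness of the coupling, and a contour/Riesz-projection argument -- is the same as the paper's. But there are two genuine gaps. First, you never perform the paper's essential preliminary step of replacing $g_0$ by a \emph{bounded-norm} vector. Since $\|g_0\|\sim W$, every error term you produce carries a hidden factor $\|g_0\|^2\sim W^2$ (or $\|g_0\|\cdot\|g_1\|\sim W$). Your substitute, ``$\|P_{\mathrm{top}}-\mathfrak{P}_{\epsilon}^{(1)}\|=O(\|\mathbb{K}^{(12)}\|W)$, hence pass from $g_0$ to $g_1$,'' costs $O(\|P_{\mathrm{top}}-\mathfrak{P}_{\epsilon}^{(1)}\|\,\|g_0\|^2\,\lambda_{\max}^{N-1})$, which with your coupling bound is $O(\epsilon W^2\lambda_{\max}^{N-1})=O((W^{3}/N)^{1/2}W^{1/2}\lambda_{\max}^{N-1})$ and blows up exactly in the Ginibre regime $W^2\gg N$. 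The paper avoids this by first proving $(\mathbb{K}^{N-1}g_0,g_0)=(\mathbb{K}^{N-1}\tilde g,\tilde g)+o(\lambda_{\max}^{N-1})$ with $\tilde g=\mathcal{P}_{\bar 0}g$, $\|\tilde g\|\le C$, using the expansion (\ref{normg2}) of the top eigenvectors of $\mathbb{K}$ (via the decay (\ref{exp_dec.1})) to show that the overlap of $g_0$ with the top spectral subspace is $O(1)$, not $O(W)$.

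Second, your coupling bound $\|\mathbb{K}^{(12)}\|\le C(W^{-3/2}+\epsilon/W)$ is too weak: with it, $\|\mathbb{K}^{(12)}\|^2\cdot\sup\|(\mathbb{K}^{(22)}-z)^{-1}\|\cdot\oint\|(\mathbb{K}^{(11)}-z)^{-1}\||dz|\sim(\epsilon/W)^2\cdot W\cdot N\sim1$, so the direct resolvent comparison does not close -- which you correctly notice, and which forces you into the second-order quasi-degenerate perturbation argument at the end. That argument is not carried out and faces real obstructions: the $(M+1)^2$ relevant eigenvalues of $\mathbb{K}^{(11)}$ are clustered within $o(W^{-1})$ of $\lambda_{\max}$ with mutual gaps far smaller than the perturbation, so the pairing $\mu_j\leftrightarrow\tilde\mu_j$, the claim that $\tilde\varphi_j-\varphi_j$ lies ``to leading order'' in $\mathrm{range}\,\mathfrak{P}_{\epsilon}^{(2)}$ (there is also an in-range rotation among the nearly degenerate $\varphi_j$'s, which redistributes spectral weight across eigenvalues separated by more than $1/N$), and the bound $\tilde\mu_j-\mu_j=o(N^{-1})$ all require justification you do not supply. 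The paper sidesteps all of this by establishing the sharper bound $\|\mathbb{K}^{(12)}\|\le C\epsilon W^{-3/2}$ in (\ref{b_K.0}) (a consequence of the almost-invariance (\ref{cor:1.0}) of the \emph{shifted} subspace: the coupling vanishes at $\epsilon=0$ and the $\epsilon$-linear term is already of size $\epsilon W^{-3/2}$), which gives $\|\mathbb{K}^{(12)}\|^2\cdot W\cdot N=\epsilon^2N/W^2=1/W=o(1)$ and lets the plain resolvent comparison on the contour $\mathcal{L}$ finish the proof with no eigenvalue perturbation theory at all. To repair your proof you would need both the $g_0\to\tilde g$ reduction and the improved coupling estimate.
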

Notice that in contrast to $g_0$ with $\|g_0\|=CW$, by (\ref{de_Psi}) we have that
\begin{align}\label{normg_1}
&\mathrm{Lin}\{\tilde\Psi_{\ell,k,\epsilon}\}\in\mathrm{Lin}\{\Psi_{*\bar j}\}_{|\bar j|\le p}\otimes L_2(U(2))\\
\Rightarrow&\|g_1\|^2\le \sum_{|\bar j|\le p}\int dRdR'\Psi_{*\bar j}(R)\Psi_{*\bar j}(R')g(R,U)g(R',U)dU\le C
\notag\end{align}

\textit{Proof of Lemma \ref{l:P}.} 
We prove first that 
\begin{align}\label{ti-g}
(\mathbb{K}^{N-1}g_0,g_0)=(\mathbb{K}^{N-1}\tilde g,\tilde g)+o(\lambda_{\max}^{N-1}),\quad \tilde g=\mathcal{P}_{\bar 0}g,\quad\|\tilde g\|^2\le C,
\end{align}
where $\mathcal{P}_{\bar 0}$ is an orthogonal projection on the space $\{\Psi_{\bar 0}(R)h(U)\}_{h\in \mathcal{E}^{(M)}}$ with $\Psi_{\bar 0}$ corresponding to
$\lambda_{\max}(\mathcal{A})$.

Consider $\mathbb{K}$ as a block matrix with
\begin{align*}
&\mathbb{K}^{(00)}=\mathcal{P}_{\bar 0}\mathbb{K}\mathcal{P}_{\bar 0},\quad \mathbb{K}^{(11)}=(1-\mathcal{P}_{\bar 0})\mathbb{K}(1-\mathcal{P}_{\bar 0}), \\
 &\mathbb{K}^{(01)}=\mathcal{P}_{\bar 0}\mathbb{K}(1-\mathcal{P}_{\bar 0}), \quad  \mathbb{K}^{(10)}=(1-\mathcal{P}_{\bar 0})\mathbb{K}\mathcal{P}_{\bar 0}.
\end{align*}
Then,  since for $\mathbb{K}_0$ of (\ref{bb_K_0})
\begin{align}\label{diff_K}
\|\mathbb{K}-\mathbb{K}_{0}\|\le C\epsilon/W, 
\end{align}
by Lemma \ref{l:la_max} we conclude that $\mathbb{K}^{(00)}$ has $(M+1)^2$ eigenvalues 
$\lambda$ on the distance less than $C(\epsilon/W)$ from $\lambda_{\max}$.
Moreover, since  we proved in  Lemma \ref{l:la_max} that all remaining eigenvalues of $\mathbb{K}_0$ are less than
$ \lambda_{\max}-c/W$,  (\ref{diff_K}) yields also that
all the remaining eigenvalues of $\mathbb{K}$ are less than $\lambda_{\max}-c/2W$. 

Denote $\mathbb{E}_0$ the spectral projection of $\mathbb{K}$ on the subspace
spanned on the $\{\Phi_\lambda\}_{|\lambda-\lambda_{\max}|\le C(\epsilon/W)}$, where
$\{\Phi_\lambda\}$ are  eigenvectors, corresponding to   the
first $(M+1)^2$ eigenvalues of $\mathbb{K}$. Then
\[
(\mathbb{K}^{N-1}g_0,g_0)=(\mathbb{K}^{N-1}\mathbb{E}_0g_0,\mathbb{E}_0g_0)+O(\lambda_{\max}^Ne^{-cN/2W}\|g_0\|^2)
=(\mathbb{K}^{N-1}\mathbb{E}_0g_0,\mathbb{E}_0g_0)+o(\lambda_{\max}^N).
\]
Consider any  $\Phi_\lambda$ corresponding to  one  of the
first  $(M+1)^2$ eigenvalues of $\mathbb{K}$, and introduce
\[
\Phi_\lambda^{(0)}=\mathcal{P}_{\bar 0}\Phi_\lambda, \quad 
\Phi_\lambda^{(1)}=(1-\mathcal{P}_{\bar 0})\Phi_\lambda.
\]
Then it follows from the equation $(\mathbb{K}-\lambda)\Phi=0$ that
\begin{align}\notag
&\mathbb{K}^{(10)}\Phi_\lambda^{(0)}+(\mathbb{K}^{(11)}-\lambda)\Phi_\lambda^{(1)}=0\Rightarrow 
\Phi_\lambda^{(1)}=-(\mathbb{K}^{(11)}-\lambda)^{-1}\mathbb{K}^{(10)}\Phi_\lambda^{(0)}.
\end{align}
Set
\[\mathbb{K}_{(\bar j,\bar j')}=\mathcal{P}_{\bar j}\mathbb{K}\mathcal{P}_{\bar j'},\quad g_{\bar j}=\mathcal{P}_{\bar j}g,
\]
where $\mathcal{P}_{\bar j}$ is an orthogonal projection on $\mathrm{Lin}\{\Psi_{*\bar j}h(U)\}_{h\in\mathcal{E}_M}$.

Repeating almost literally the argument of Lemma \ref{l:la_max}, we obtain an analogue of (\ref{exp_dec}):
\begin{align}\label{exp_dec.1}
&\|\mathbb{K}_{(\bar j,\bar j')}\|\le C\min\{W^{-3/2},W^{-|\bar j-\bar j'|/2}\},\quad |\bar j|,|\bar j'|\le L,\quad|\bar j-\bar j'|\not= 0\\
&\|(\mathbb{K}^{(11)}-\lambda)^{-1}_{(\bar j,\bar j')}\|\le  W^{1-|\bar j-\bar j'|/2}.
\notag\end{align}
Hence, it is easy to see that there exists $p'>0$ such that 
\begin{align}\label{normg2}
\Phi_\lambda=\Phi^{(0)}_\lambda+W^{-1/2}\tilde\Phi^{(1)}_\lambda+O(N^{-1}),\quad 
\mathrm{Lin}\{\Phi^{(1)}_\lambda\}_{\lambda}\subset\mathrm{Lin}\{\Psi_{*\bar j}\}_{|\bar j|\le p}\otimes L_2(U(2))
\end{align}
Then, repeating (\ref{normg_1}), we obtain (\ref{ti-g}) for $\tilde g=\mathcal{P}_{\bar 0}g$.
:

Now let us prove (\ref{l:P.1}).  Set  $\mathfrak{P}_{\epsilon}^{(2)}=I-\mathfrak{P}_{\epsilon}^{(1)}$ and
\begin{align}\label{K^ab.0}
\mathbb{K}^{(\alpha\beta)}=&\mathfrak{P}_{\epsilon}^{(\alpha)}\mathbb{K}\mathfrak{P}_{\epsilon}^{(\beta)},\quad \alpha,\beta=1,2.
\end{align}
Then we have  the bounds
\begin{align}\label{b_K.0}
&\| \mathbb{K}^{(11)}-\lambda_{\max} \|\le C(\epsilon/W+N^{-1}),\\
 & \|\mathbb{K}^{(12)}\|\le C \epsilon W^{-3/2},\quad
\mathbb{K}^{(22)}\le 1-c_0/W.
\notag\end{align}
The first bound here follow from Lemma \ref{l:la_max} and (\ref{cor:1.2}), the second -- from (\ref{cor:1.0}), and the last  bound was proved in Lemma \ref{l:K_L}.

Since we have proved above that $\mathbb{K}$ has $(M+1)^2$ eigenvalues in the $(\epsilon/W)$-neighbourhood of $\lambda_{\max}$ and all
the remaining eigenvalues are less that $1-c/2W$, and we also have (\ref{n.1}), we can 
apply the Cauchy residue theorem in the following  form:
\begin{align*}
(\mathbb{K}^{N-1}\tilde g,\tilde g)=&-\frac{1}{2\pi i}\Big(\oint_{\mathcal{L}}+\oint_{|z|=1-c/2W}\Big)
z^{N-1}\sum_{\alpha,\beta=1}^2\Big(\mathbb{G}^{(\alpha\beta)}(z)\mathfrak{P}_{\epsilon}^{(\beta)}\tilde g,\mathfrak{P}_{\epsilon}^{(\alpha)}\tilde g\Big)dz\\
=&-\frac{1}{2\pi i}\oint_{\mathcal{L}}z^{N-1}
\sum_{\alpha,\beta=1}^2\Big(\mathbb{G}^{(\alpha\beta)}(z)\mathfrak{P}_{\epsilon}^{(\beta)}\tilde g,\mathfrak{P}_{\epsilon}^{(\alpha)}\tilde g\Big)dz+o(\lambda_{\max}^{N-1}).\end{align*}
Here $\mathbb{G}(z)=(\mathbb{K}-z)^{-1}$ and
\begin{align*}
\mathcal{L}=&\partial \Omega,\quad \Omega=\{z:|z|\le \lambda_{\max}(1+2k_0/N)\wedge|z-\lambda_{\max}|\le C(\epsilon/W)\}
\end{align*}

Since the spectrum of $\mathbb{K}$ belongs to $[0,\lambda_{\max}(1+k_0/N)]$ (see (\ref{n.1})), by (\ref{b_K.0}) and the standard resolvent bounds we have for $z\in \mathcal{L}$
\begin{align*}
&\|\mathbb{G}^{(11)}(z)\|,\|(\mathbb{K}^{(11)}-z)^{-1}\| \le C|z-\lambda_{\max}(1+k_0/N)|^{-1},
\qquad \qquad\|\mathbb{G}^{(22)}\|\le CW,\\
&\|\mathbb{G}^{(12)}\|= \|(\mathbb{K}^{(11)}-z)^{-1}\mathbb{K}^{(12)}\mathbb{G}^{(22)}\|\le C|z-\lambda_{\max}(1+k_0/N)|^{-1}\epsilon/W^{1/2}.
\end{align*}
Hence, we conclude that the integrals with $\mathbb{G}^{(12)}$ and $\mathbb{G}^{(21)}$ gives us $o(\lambda_{\max}^{N-1})$.
In addition, using (\ref{b_K.0}) and the last bound, we obtain
\begin{align*}
&\Big|\oint_{\mathcal{L}} z^{N-1}
\Big(\big(\mathbb{G}^{(11)}(z)-(\mathbb{K}^{(11)}-z)^{-1}\big)\mathfrak{P}_{\epsilon}^{(1)}\tilde g,\mathfrak{P}_{\epsilon}^{(1)}\tilde g\Big)dz \Big|\\
&\le C \|\mathbb{K}^{(21)}\|^2\|\tilde g\|^2\sup_z\|(\mathbb{K}^{(22)}-z)^{-1}\|\cdot \oint_{\mathcal{L}}  \|\mathbb{G}^{(11)}(z)\|\cdot \|(\mathbb{K}^{(11)}-z)^{-1}\||dz| 
\\
&\le C(\epsilon^2/W^3)\cdot W\cdot N=C/W=o(1),\\
&\Big|\oint_{\mathcal{L}} z^{N-1}
\Big(\big(\mathbb{G}^{(22)}(z)-(\mathbb{K}^{(22)}-z)^{-1}\big)\mathfrak{P}_{\epsilon}^{(2)}\tilde g,\mathfrak{P}_{\epsilon}^{(2)}\tilde g\Big)dz \Big|\\
&\le C \|\mathbb{K}^{(21)}\|^2\|\tilde g\|^2\cdot \sup_z(\|(\mathbb{K}^{(22)}-z)^{-1}\|\cdot \|\mathbb{G}^{(22)}(z)\|) \oint_{\mathcal{L}} \|(\mathbb{K}^{(11)}-z)^{-1}\||dz| \\
&\le C\varepsilon^2/W^3\cdot W^2\cdot \log N=C\log N/N=o(1).
\end{align*}
Hence,
\begin{align*}
(\mathbb{K}^{N-1}\tilde g,\tilde g)=&
-\frac{1}{2\pi i}\oint_{\mathcal{L}}z^{N-1}((\mathbb{K}^{(11)}-z)^{-1}\mathfrak{P}_{\epsilon}^{(1)}\tilde g,\mathfrak{P}_{\epsilon}^{(1)}\tilde g)dz\\
&-\frac{1}{2\pi i}\oint_{\mathcal{L}}z^{N-1}((\mathbb{K}^{(22)}-z)^{-1}\mathfrak{P}_{\epsilon}^{(2)}\tilde g,\mathfrak{P}_{\epsilon}^{(2)}\tilde g)dz
+o(\lambda_{\max}^{N-1}).
\end{align*}
Observe that the second integral here is zero, since $(\mathbb{K}^{(22)}-z)^{-1}$ is analytic in $\Omega$.
Thus, applying the Cauchy residue theorem backward, we obtain (\ref{l:P.1}) with  $g_1=\mathfrak{P}_{\epsilon}^{(1)} g_0$ replaced by 
$\mathfrak{P}_{\epsilon}^{(1)}\mathcal{P}_{\bar 0}g_0$.

But in view  of representations (\ref{normg2}) and (\ref{de_Psi}), we have 
\[\|g_1-\tilde g\|=O(W^{-1/2}).\]
This completes the proof of the lemma.

$\square$

\textit{Poof of Theorem \ref{t:Gin}}. 
By (\ref{cor:1.0})    we have 
\begin{align*}
\mathbb{K}^{(11)} 
\Psi_{\ell,k,\epsilon}=&
\lambda_{\ell}e^{2\nu/N}\Psi_{\ell,k,\epsilon}+O(\epsilon W^{-3/2})=
(1+\mathcal{D})\Psi_{\ell,k,\epsilon}+O(\epsilon W^{-3/2}),\\
\mathcal{D}_{\ell\ell}=&-l(l+1)/8(u_*W)^2+2N^{-1}\hat\nu _{\ell\ell},\quad 
\mathcal{D}_{\ell,\ell+1}=2N^{-1}\hat\nu _{\ell\ell+1}, \quad
\mathcal{D}_{\ell,\ell+k}=0 \quad(|k|\ge 2)
\\
\Rightarrow \mathbb{K}^{(11)} =&I\otimes(I+\mathcal{D})+o(N^{-1}),
%
\end{align*}
where $\hat \nu$ was defined in (\ref{nu}).
Since 
\[g_1=g_1^{(0)}+O(\epsilon),\quad g_1^{(0)}\in\mathcal{H}_L\otimes\mathcal{E}_0,
\] 
it is sufficient to prove that 
\[ \Big((I+\mathcal{D})^N\Big)_{00}=\Big((I+2N^{-1}\hat\nu)^N\Big)_{00}+o(1),
\]
Choose $M_0=C_0\log W$ with sufficiently big $C_0$. Then for any $|z|>\lambda_{\max}(1+C/N)$ with sufficiently big $C$ and $1\le\ell\le M_0$
\[
|(I+\mathcal{D}-z)_{\ell,\ell}|^{-1}|\mathcal{D}_{\ell,\ell+1}|\le \frac{1}{4}.
\]
Hence, if we consider a matrix $\hat{\mathcal{D}}$ which is obtained from $\mathcal{D}$ by removing the entries $\mathcal{D}_{M_0,M_0+1}$ and
$\mathcal{D}_{M_0+1,M_0}$, then
repeating the argument of Lemma \ref{l:K_L}, we get
\begin{align*}
|(I+\hat{\mathcal{D}}-z)^{-1}_{0M_0}|\le &C4^{-M_0}N\le CN^{-3}.
\end{align*}
Therefore,
\begin{align*}
&(I+\mathcal{D}-z)^{-1}_{00}=(I+\hat{\mathcal{D}}-z)^{-1}_{00}
+(I+\hat{\mathcal{D}}-z)^{-1}_{0M_0}\mathcal{D}_{M_0,M_0+1}(I+\mathcal{D}-z)^{-1}_{M_0+1,0}\\=&
(I+\mathcal{E}_{M_0}\mathcal{D}\mathcal{E}_{M_0}-z)^{-1}_{00}+O(N^{-2}).
\end{align*}
Hence, we can replace $\mathcal{D}$ by $\mathcal{E}_{M_0}\mathcal{D}\mathcal{E}_{M_0}$. But
\[ (\|\mathcal{E}_{M_0}(I+\mathcal{D})\mathcal{E}_{M_0}-\mathcal{E}_{M_0}e^{2\nu/N}\mathcal{E}_{M_0}\|\le M_0^{2}/W^{2}=o(N^{-1}).\]
%
Combining this relation  with (\ref{l:P.1}),  we finish the proof of (\ref{lim_Gin}).

$\square$

\textit{Proof of Theorem \ref{t:locGin}}.
 Denote by $\mathfrak{P}_\epsilon^{(00)}$ the orthogonal projection on 
  the subspace $\mathrm{Lin}\{\Psi_{\bar 0}(R-\epsilon\mathcal{M}(U))\}$ and by $\mathfrak{P}_{\epsilon}^{(01)}$
  the  orthogonal projection on $\mathrm{Lin}\{\Psi_{\ell,k}(R-\epsilon\mathcal{M}(U),U)\}_{1\le \ell \le M, |k|\le l}$. 

 Evidently,
 \[ \mathfrak{P}_{\epsilon}^{(00)}\mathfrak{P}_{\epsilon}^{(01)}=0,\quad 
 \mathfrak{P}_{\epsilon}^{(00)}+\mathfrak{P}_{\epsilon}^{(01)}=\mathfrak{P}_{\epsilon}^{(1)}.
 \]
 Set
\begin{align}\label{K^ab}
 \mathbb{K}_1^{(\alpha\beta)}=&\mathfrak{P}^{(0\alpha)}\mathbb{K}^{(11)}\mathfrak{P}^{(0\beta)},\quad \alpha,\beta=0,1.
\notag\end{align}
Introduce the resolvent 
\[
\mathbb{G}_1=(\mathbb{K}_1-z)^{-1},
\]
and consider the function
\begin{align*}
\Phi(z)=&\mathbb{K}_1^{(00)}-z-\mathbb{K}_1^{(01)}(\mathbb{K}_1^{(11)}-z)^{-1}\mathbb{K}_1^{(10)}.
\end{align*}
Relations (\ref{cor:1}) and (\ref{cor:1.2}) imply the bounds
\begin{align}\label{b_K}
\mathbb{K}_1^{(00)}&=\lambda_{\max}+O(N^{-1}(\epsilon+W^{-1/2}) ),\\
\|\mathbb{K}_1^{(01)}\|&\le C\big(N^{-1}+\epsilon W^{-2}\big),\quad 
\mathbb{K}_1^{(11)}\le 1-C/W^2.
\notag
\end{align}
Then, taking  sufficiently big $C_1$ and setting
\begin{align}\label{Ball}
 \mathcal{B}=\Big\{z:|z-\lambda_{\max}|\le C_1
 \Big(|\mathbb{K}_1^{(00)}-\lambda_{\max}|
 +W^2\|\mathbb{K}_1^{(01)} \|^2\Big) \Big\},
\end{align}
  we get  for $z\in\partial \mathcal{B}$
\begin{align*}
&\|(\mathbb{K}_1^{(11)}-z)^{-1}\|\le CW^2 \Rightarrow \|\mathbb{K}_1^{(01)}(\mathbb{K}_1^{(11)}-z)^{-1}\mathbb{K}_1^{(10)}\|
\le CW^2\|\mathbb{K}_1^{(01)}\|^2\\
\Rightarrow&
|\Phi(z)-(\lambda_{\max}-z)|=\| \mathbb{K}_1^{(00)}-\lambda_{\max}-
\mathbb{K}_1^{(01)}(\mathbb{K}_1^{(11)}-z)^{-1}\mathbb{K}_1^{(10)} \|\\
 &\hskip3,3cm \le C\Big(|\mathbb{K}_1^{(00)}-\lambda_{\max}|
 +W^2\|\mathbb{K}_1^{(01)} \|^2\Big) \le|\lambda_{\max}-z|/2,
\end{align*}
and  the Rouche theorem implies that $\Phi(z)$ has exactly one zero in $\mathcal{B}$.
 Then,  taking into account that
  $\Phi(z)=(\mathbb{G}_1^{(00)}(z))^{-1}$, and, therefore, zeros of $\Phi(z)$ are eigenvalues of $\mathbb{K}_1$, we obtain  that 
 $\mathbb{K}_1$  has exactly one eigenvalue inside the circle, i.e.
 \begin{align}\label{d-la}
 |\lambda_{\max}-\lambda_{\max}(\mathbb{K}_1)|\le  C_1N^{-1}(W^{-1/2}+\epsilon+W^2/N).
 \end{align}
 
 Notice, that the same argument yields that $\mathbb{K}_1$  
 has exactly one eigenvalue inside the circle $|z-\lambda_{\max}|\le 2dW^{-2}$ with sufficiently small fixed $d>0$, i.e. the spectral gap
 of $\mathbb{K}_1$ is more than $dW^{-2}$. 
 Hence, we have
\begin{align}\label{K^N_loc}
(\mathbb{K}^{N-1}_1g_1,g_1)=\lambda_{\max}^{N-1}(\mathbb{K}_1)|(g_1,\Psi_{0,\mathbb{K}_1})|^2
\Big(1+O(e^{-dN/W^2})\Big),
\end{align}
where $\Psi_{ 0,\mathbb{K}_1}$ is an eigenvector of $\mathbb{K}_1$ corresponding to $\lambda_{\max}(\mathbb{K}_1)$. 

Using (\ref{de_Psi}),  we obtain 
 \[\|\Psi_{0,\mathbb{K}_1}-\Psi_{\bar 0}+\Psi_{\bar 0}-\Psi_{*0}\|\to 0\,
 \Rightarrow\, 
 (g_1,\Psi_{\bar 0,\mathbb{K}})=(g_,\Psi_{*0})(1+o(1)).
\]
Thus, using (\ref{l:P.1}) and (\ref{d-la}),   we get 
\begin{align*}
\Theta(z_1,z_2)=\lambda_{\max}^{N-1}|(g_1,\Psi_{*\bar 0})|^2\Big(1+o(1)\Big),
\end{align*}
which implies (\ref{lim_locGin}).

$\square$

\section{Appendix}
\textit{Proof of Lemma \ref{l:A}.}
Relations (\ref{psi_k}) can be checked by straightforward computations (see \cite{SS:17}).

For  $\lambda_{*\bar m}=\lambda_*^{m_0+m_1+m_2+m_3}$ ($m_i\le L$) consider $E_{\lambda_{*\bar m}}$ -- the orthogonal  projection
on the eigenspace corresponding to $\lambda_{*\bar m}$. Denote by $\tilde F$  the operator of  multiplication by $\Tr R^3$ and
\[
\tilde A=\tilde F\mathcal{A}_*+\mathcal{A}_*\tilde F, \quad\mathcal{A}_{0}=\mathcal{A}_{*}+c_3W^{-3/2}\tilde A+O(W^{-2}).
\]
 It is easy to see that 
\[ E_{\lambda_{*\bar m}}\tilde AE_{\lambda_{*\bar m}}=0
\]
Hence,  if we consider  $\mathcal{A}_{0}$ as a block matrix with
$\mathcal{A}_{0}^{(11)}= E_{\lambda_{*\bar m}}\mathcal{A}_{0}E_{\lambda_{*\bar m}}$,
then
\begin{align}\label{est1}
&\mathcal{A}_{0}^{(11)}=\lambda_{*\bar m}E_{\lambda_{*\bar m}}+O(W^{-2}), \quad  \mathcal{A}_{0}^{(12)}=O(W^{-3/2}),\\
&\mathcal{A}_{0}^{(21)}=O(W^{-3/2}),\quad \mathcal{A}_{0}^{(22)}=\mathcal{A}_{*}^{(22)}+O(W^{-3/2}).
\notag\end{align}
Since for  $k_0W^{-2}\le |z-\lambda_{*\bar m}|\le cW^{-1}$ with sufficiently big fixed $k_0$ and sufficiently small
$c>0$ we have
 \begin{align*}
 & \|(\mathcal{A}_{0}^{(22)}-z)^{-1}\|= \|(\mathcal{A}_{*}^{(22)}-z+O(W^{-3/2}))^{-1}\|\le C'W\\
\Rightarrow& \| \mathcal{A}_{0}^{(12)}(\mathcal{A}_{0}^{(22)}-z)^{-1}\mathcal{A}_{0}^{(21)}\|\le C''W^{-2},
\end{align*}
we conclude that
\begin{align*}
&\|(G^{(11)}(z))^{-1}\|= \|\mathcal{A}_{0}^{(11)}-z-\mathcal{A}_{0}^{(12)}(\mathcal{A}_{0}^{(22)}-z)^{-1}\mathcal{A}_{0}^{(21)}\|\ge k_0/2W^2.
 \end{align*}
 Hence, $\|G^{(11)}(z)\|$ is finite for $k_0W^{-2}\le |z-\lambda_{*\bar m}|\le cW^{-1}$, and so $\mathcal{A}_0$  has no eigenvalues
 in this annulus. On the other hand, the bound from the second line above yields that eigenvalues of $(G^{(11)}(z))^{-1}$ differ from
 eigenvalues of  $\mathcal{A}_{0}^{(11)}-z$ less than $C''W^{-2}$ if $|z-\lambda_{*\bar m}|\le k_0W^{-2}$. This completes the proof of (\ref{diff_eig}).

Since Lemma \ref{l:Z_0} implies  that relations (\ref{est1}) are
valid also for the operator $\mathcal{A}$ of (\ref{hat_A}), we obtain
that (\ref{diff_eig}) is valid also for eigenvalues of $\mathcal{A}$.

The proof of (\ref{de_Psi.0}) repeats almost literally the proof of (\ref{de_Psi}).

To prove (\ref{norm_A}),
consider $\mathcal{A}_\mathcal{M}$ in the basis of eigenvectors of $\mathcal{A}$ as a
a block matrix with the first block corresponding to $\Psi_{\bar 0}$. Then observe that 
\[
(\mathcal{A}_\mathcal{M}\Psi_{\bar 0},\Psi_{\bar 0})=\lambda_{\max}+O(\epsilon^2W^{-1}),\quad
\]
since in view of (\ref{int_2}) the linear with respect to $\epsilon$ term is equal to zero.  Moreover,
\[
\|\mathcal{A}_\mathcal{M}^{(12)}\|\le C\epsilon W^{-1}, \quad \|\mathcal{A}_\mathcal{M}^{(22)}\|\le \lambda_{\max}-C/W+O(\epsilon/W)\le  \lambda_{\max}-C/2W.
\]
Here for the second inequality we used that $\|\mathcal{A}-\mathcal{A}_M\|=O(\epsilon/W)$.

Then,   (\ref{norm_A}) follows from Proposition (\ref{p:norm}).

$\square$

\medskip

\textit{Proof of Proposition \ref{p:as_Leg}.}

 We use the following representations of $P_{k+q,k}^{(\ell)}$ (see \cite{Vil:68})
\begin{align*}
P_{k+q,k}^{(\ell)}(\cos\theta)=&\frac{\mu_{\ell,k,q}}{2\pi }
\int(\cos(\theta/2)+i\sin(\theta/2)e^{i\phi})^{\ell+k}
(\cos(\theta)+i\sin(\theta/2)e^{-i\phi})^{\ell-k}e^{iq\phi}d\phi,\\
=&\frac{\mu_{\ell,k,q}}{2\pi }\int\cos^{2\ell}(\theta/2)(1+i\tan(\theta/2)e^{i\phi})^{\ell+k}
(1+i\tan(\theta/2)e^{-i\phi})^{\ell-k}e^{iq\phi}d\phi\
\\
\mu_{\ell,k,q}=&\sqrt{\frac{(l-k-q)!(l+k+q)!}{(l-k)!(l+k)!}}.
 \end{align*}
If $\ell  \tan(\theta/2)\ll 1$, then we can expand with respect to $\tan(\theta/2)$. Taking into account that because of integration over $\phi$ only terms 
containing $\tan^{q'}(\theta/2)$ with $q'\ge |q|$
give nonzero contribution, we obtain (\ref{as_Leg.0}) -- (\ref{as_Leg.1}).

To prove  (\ref{as_Leg}) we write
\begin{align*}
P_{00}^{(\ell)}(\cos\theta)=&\frac{1}{2\pi }\int\exp\big\{\ell u(\phi,\theta))\big\}d\phi,\quad\\
u(\phi,\theta)=&
\log (\cos(\theta/2)+i\sin(\theta/2)e^{i\phi})
+\log (\cos(\theta/2)+i\sin(\theta/2) e^{-i\phi}))\big)\\
=&\log(\cos\theta+i\sin\theta\cos\phi),\quad \\
\Re u(\phi,\theta)\le& 0,\quad \Re u(\phi,\theta)\Big|_{\phi=0\vee\pi}=0
\end{align*}
By (\ref{<f>}) we need to study
\begin{align*}
I_{\ell}=&\frac{W^2u_*^2\Tr S}{2}\int_{0 }^\pi\sin\theta d\theta\exp\{-4u_*^2W^2\Tr S\sin^2(\theta/4)\}P_{00}^{(\ell)}(\cos\theta)\\
=&\frac{ 
W^2u_*^2\Tr S}{2}\int_{-\pi}^\pi\frac{d\phi}{2\pi}\int_{\theta\le\tfrac{\log W}{W}}\sin\theta d\theta\exp\{-4u_*^2W^2\Tr S\sin^2(\theta/4)+\ell u(\phi,\theta)\}+O(e^{-c\log^2W}).
\end{align*}
But for $\theta\le W^{-1} \log W$ we can expand $ u(\phi,\theta)$ with respect to $\sin(\theta/2)$. We get
\begin{align}\notag
 u(\phi,\theta)=&i\varphi_{1}(\theta,\phi)-\varphi_2(\theta,\phi),\\
\varphi_{1}(\phi,\theta)=&2\sin(\theta/2)(\cos\phi+O(\sin^2(\theta/2))),\label{as_rel}\\
\varphi_{2}(\phi,\theta)=&2\sin^2(\theta/2)(\sin^2\phi+O(\sin^2(\theta/2))),
\notag\end{align}
where $\varphi_{1}(\phi,\theta)$ and $\varphi_{2}(\phi,\theta)$  are some non negative (for $\theta<\theta_0$ (with some $\theta_0$) real analytic functions.

Set $\alpha=2\ell/W$. If $\alpha\le C_0\log W$, we obtain by changing $x=2W\sin(\theta/4)$
\begin{align}\notag
I_{\ell}=&2u_*^2\Tr S \int \frac{d\phi}{2\pi}\int_0^{\infty}xdx\exp\{-u_*^2\Tr S x^2+ix\alpha\cos\phi\}+O(\alpha\log^2 W/W)\label{alpha<log}\\
=&\int \frac{d\phi}{2\pi}\hat I(\alpha\cos\phi)+O(\alpha\log^2 W/W).\notag
\end{align}
Since $\hat I(p)$ is the Fourier transform of the positive function,  there is  $\delta>0$ such that
\[
\hat I(p)<\hat I(0)-c_0p^2=1-c_0p^2,\quad |p|\le\delta,\quad \hat I(p)<1-c_0\delta^2\quad |p|>\delta,
 \]
 which implies (\ref{as_Leg}). 

If $\alpha >C_0\log W$, then we integrate by parts with respect to $\theta$ by writing 
\begin{align*}
I_{\ell}=\dfrac{
W^2u_*^2\Tr S}{2i\alpha W}&\int_{\theta\le W^{-1} \log W}\Big( \frac{d\varphi_{1}}{d\theta}\Big)^{-1} 
\frac{d}{d\theta}e^{i\alpha W\varphi_{1}(\theta,\phi)}\\
&\times \exp\{-2u_*^2\Tr SW^2 (1-\cos(\theta/2))-\alpha W\varphi_{2}(\theta,\phi)\}\sin\theta d\theta \dfrac{d\phi}{2\pi}=O(\alpha^{-1}+W^{-1}),
\end{align*}
which also clearly yields (\ref{as_Leg}). 
Here we used  that differentiation of the first term at the exponent with respect to $\theta$ gives us the $O(W)$, differentiating of $\alpha W\varphi_{2}$
gives $O(\alpha)$, 
and  by (\ref{as_rel}) 
\[ 
\Big|\frac{d\varphi_{1}}{d\theta}\Big|=\cos(\theta/2)|\cos\phi+O(\sin(\theta/2))|>C \Rightarrow\Big| \Big(\frac{d\varphi_{1}}{d\theta}\Big)^{-1}\Big|\le C';
\]
hence, the derivative of $(\frac{d\varphi_{1}}{d\theta})^{-1}$ is bounded. We recall here that  for $\alpha >C_0\log W$ with sufficiently big $C_0$
 the contribution of the integral over $\phi$ with $|\cos\phi|\le 1/2$  is $e^{-C_0\log W/2}\le W^{-1}$.

$\square$


\begin{thebibliography}{99}

\bibitem{Af:16} Afanasiev, I.: On the correlation functions of the characteristic polynomials of the sparse hermitian random matrices.
J. Stat. Phys, \textbf{163},   324 -- 356  (2016)

\bibitem{Af:19} Afanasiev, I.: On the correlation functions of the characteristic
polynomials of non-Hermitian random matrices with
independent entries, J.Stat.Phys. 176, 1561 - 1582 (2019)


\bibitem{ASS:24}  Afanasiev, I., Shcherbina, M.,  Shcherbina, T.  Universality of the second correlation function of the deformed Ginibre
 ensemble, \url{arXiv:2405.00617} (2024)

\bibitem{AS:sp_det}  Afanasiev, I. ,  Shcherbina, T.: Characteristic polynomials of sparse non-Hermitian random matrices, 	J Stat Phys  \textbf{192:12} (2025)
 
 \bibitem{APSo:09}
Akemann, G. ,Phillips, M.J. and Sommers, H.-J.: Characteristic polynomials in real Ginibre ensembles, J. Phys. A: Math. Theor. 42 (2009)
 
\bibitem {Ak-Ve:03} Akemann, G. , Vernizzi, G., Characteristic Polynomials of Complex Random Matrix Models, Nucl. Phys. B, 3:600, p. 532-556 (2003)

\bibitem{Ber}  Berezin, F. A. Introduction to superanalysis, Math. Phys. Appl. Math. (9), D.~Reidel Publishing Co., Dordrecht, 1987. Edited and with a foreword by A. A. Kirillov. With an appendix by V. I. Ogievetsky. Translated from the Russian by J. Niederle and R. Koteck\'y. Translation edited by Dimitri Le\u{\i}tes.

\bibitem{Br-Hi:00} Br\'ezin, E., Hikami, S.: Characteristic polynomials of random matrices.
Commun. Math. Phys. 214, p. 111 -- 135 (2000)

\bibitem{Br-Hi:01} Br\'ezin, E., Hikami, S.: Characteristic polynomials of real symmetric random matrices.
Commun. Math. Phys., vol. 223, p. 363 -- 382 (2001)

 \bibitem{CiErS:ed}
Cipolloni, G., Erd\H{o}s, L., Schr\"{o}der, D.: Edge universality for non-Hermitian random
matrices, Prob.Theory and Related Fields \textbf{179}, 1-- 28 (2021) 

\bibitem{CPSS:band_loc}
Cipolloni, G., Peled, R., Schenker, J., and Shapiro, J.: Dynamical localization for random band matrices up to $W\ll N^{1/4}$,
Commun. Math. Phys. {\bf 405}, 82 (2024)

\bibitem{Dr:band} Drogin, R. Localization of one-dimensional random band matrices, \url{ arXiv:2508.05802v2 } (2025)

\bibitem{YY:2d_band}
Dubova, S., Yang, K., Yau, H.-T., and Yin, J.: Delocalization of Two-Dimensional Random Band Matrices, arXiv:2503.07606

\bibitem{YY:3d_band}
Dubova, S., Yang, F., Yau, H.-T., and Yin, J.: Delocalization of Non-Mean-Field Random Matrices in Dimensions $d\ge 3$, 	arXiv:2507.20274

\bibitem{DY:24} Dubova, S., Yang, K. : Bulk universality for complex eigenvalues of real non-symmetric random matrices with iid
entries,\url{arXiv: 2402.10197} (2024)

\bibitem{ErR:band} Erd\H{o}s, L., Riabov, V. The Zigzag Strategy for Random Band Matrices,
\url{arXiv: 2506.06441} 

\bibitem{Ef} Efetov, K.: Supersymmetry in disorder and chaos.
Cambridge university press, New York (1997)

\bibitem{FM:91}
Fyodorov, Y.V., Mirlin, A.D.: Scaling properties of localization in random band matrices: a $\sigma$-model
approach, Phys. Rev. Lett. {\bf 67}, 2405 -- 2409 (1991)

\bibitem{FM:94} Fyodorov, Y.V., Mirlin, A.D.: Statistical properties of eigenfunctions of random quasi 1d one-particle Hamiltonians,
Int. J. Mod. Phys. B \textbf{8}, 3795 -- 3842 (1994)

 \bibitem {St-Fy:03}  Fyodorov, Y.V., Strahov, E. Universal results for correlations of  characteristic polynomials:  Riemann-Hilbert approach
	Comm. Math. Phys., 241:2-3, p. 343-382 (2003)

\bibitem{H:24} Han, Y.: The circular law for random band matrices: improved bandwidth for general models, 	\url{arXiv:2410.16457}

\bibitem{H:25} Han, Y.:  The circular law for non-Hermitian random band matrices up to bandwidth $W^{1/2+c}$, \url{arXiv:2508.18143}

\bibitem{Hu:63} Hua, L. K. Harmonic Analysis of Functions of Several Complex Variables in the Classical Domains, American Mathematical Society, Providence, RI, 1963

\bibitem{JJLO:21} 
Jain, V., Jana, I., Luh, K., and O'Rourke, S.: Circular law for random block band matrices with genuinely
sublinear bandwidth, J. Math. Phys. \textbf{62:8} (2021)

 \bibitem{MO:24}
Maltsev, A., Osman, M. Bulk universality for complex non-hermitian matrices with independent
and identically distributed entries, \url{ arXiv: 2310.11429v4} (2023)

\bibitem{M:00}
 Mirlin, A. D.: Statistics of energy levels. New Directions in Quantum Chaos,
(Proceedings of the International School of Physics Enrico Fermi, Course CXLIII),
ed. by G.Casati, I.Guarneri, U.Smilansky, IOS Press, Amsterdam, 223-298 (2000)

\bibitem{O:24} Osman, M. Bulk universality for real matrices with independent and identically distributed entries,\url {arXiv:2402.04071} (2024)

 

%
%



%

\bibitem{SW:03} 
Schlittgen, B. and Wettig, J. Generalizations of some integrals over the unitary group, J. Phys. A: Math. Gen. 36 3195 (2003)

\bibitem{SS:17}  Shcherbina, M., Shcherbina, T.: Characteristic polynomials for 1d random band matrices from the localization side,
Commun. Math. Phys. {\bf 351}, 
p. 1009 -- 1044 (2017)


\bibitem{SS:Un} Shcherbina, M.,  Shcherbina, T.:  Universality for 1 d random band matrices, 	Commun. Math. Phys.{\bf 385}, 667 -- 716 (2021)

\bibitem{TSh:ChW}
Shcherbina, T.: On the correlation function of the characteristic polynomials of the Hermitian Wigner
ensemble. Commun. Math. Phys., vol. 308, p. 1 -- 21 (2011),

\bibitem{TSh:ChSC}
 Shcherbina, T.: On the correlation functions of the characteristic polynomials of the hermitian sample
covariance ensemble, Probab. Theory Relat. Fields, vol. 156, p. 449 -- 482 (2013)

\bibitem{TSh:14}
 Shcherbina, T. : On the second mixed moment of the characteristic polynomials of the 1D band matrices.
 Commun. Math. Phys., vol.  328, p. 45 -- 82 (2014), arXiv:1209.3385
 
 \bibitem{TS:20} Shcherbina, T.: Characteristic polynomials of random band matrices near the threshold, J.Stat.Phys. 179:4, p. 920 -- 944 (2020)


\bibitem{TS:22} Shcherbina, T.: Transfer matrix approach for the real symmetric 1D random band matrices,
Electron. J. Probab.  27, p. 1-29 (2022)


\bibitem{TV:08}
Tao,T., Vu, V.: Random matrices: the circular law. Commun. Contemp. Math., 10(2):261–307 (2008)

\bibitem{Ta-Vu:15}
Tao, T., Vu, V.: Random matrices: universality of local spectral statistics of non-
Hermitian matrices, Ann. Probab. 43, 782–874 (2015)

\bibitem{TVKr:10} Tao T., Vu V., Krishnapur M., Random matrices: Universality of ESDs and the circular law, Annals of Probability 38:5,
2023  (2010)

\bibitem{Tikh:23}
Tikhomirov, K.: On pseudospectrum of inhomogeneous non-Hermitian random
matrices, arXiv:2307.08211 (2023)

\bibitem{Vil:68}  Vilenkin, N. Ja.: Special Functions and the Theory of Group Representations.
Translations of Mathematical Monographs, AMS
1968; 613 pp; 





\bibitem{YY:1d_band}
Yau, H.-T., Yin, J.: Delocalization of one-dimensional random band matrices, arXiv:2501.01718

\end{thebibliography}
\end{document}